\newtheorem {definition}{Definition}
\newtheorem {theorem}{Theorem}
\newtheorem {proposition}{Proposition}
\newtheorem {lemma}{Lemma}
\newtheorem {conjecture}{Conjecture}
\newtheorem {fact }{Fact}
\newtheorem {corollary}{Corollary}
\theoremstyle{remark} 
\newtheorem{remark}{Remark}
\theoremstyle{definition}
\newtheorem{example}{Example}
\theoremstyle{plain} 
\newtheorem*{definition*}{Definition}
\newtheorem*{theorem*}{Theorem}
\newtheorem*{proposition*}{Proposition}
\newtheorem*{lemma*}{Lemma}
\newtheorem*{claim*}{Claim}
\newtheorem*{subclaim*}{Subclaim}
\newtheorem*{observation*}{Observation}
\newtheorem*{conjecture*}{Conjecture}
\newtheorem*{fact*}{Fact}
\newtheorem*{corollary*}{Corollary}
\newtheorem*{assumption*}{Assumption}
\theoremstyle{remark} 
\newtheorem*{remark*}{Remark}
\newtheorem*{example*}{Example}
\theoremstyle{plain} % just in case the style had changed
\newcommand{\thistheoremname}{}
\newtheorem{genericthm}[theorem]{\thistheoremname}
\newtheorem*{genericthm*}{\thistheoremname}
\newenvironment{namedthm*}[1]
  {\renewcommand{\thistheoremname}{#1}%
   \begin{genericthm*}}
  {\end{genericthm*}}
\newtheoremstyle{mystyle}%                % Name
  {}%                                     % Space above
  {}%                                     % Space below
  {\normalfont}%                                     % Body font
  {}%                                     % Indent amount
  {\bfseries}%                            % Theorem head font
  {.}%                                    % Punctuation after theorem head
  { }%                                    % Space after theorem head, ' ', or \newline
  {\thmname{#1}\thmnumber{ #2}\thmnote{ (#3)}}%                                     % Theorem head spec (can be left empty, meaning `normal')
\theoremstyle{mystyle}
\chardef\@x10\chardef\@xv60
\def\tcitime{
\def\@time{%
  \@minute\time\@hour\@minute\divide\@hour\@xv
  \ifnum\@hour<\@x 0\fi\the\@hour:%
  \multiply\@hour\@xv\advance\@minute-\@hour
  \ifnum\@minute<\@x 0\fi\the\@minute
  }}%
\def\QCTOpt[#1]#2{%
  \def\QCTOptB{#1}
  \def\QCTOptA{#2}
}
\def\QCTNOpt#1{%
  \def\QCTOptA{#1}
  \let\QCTOptB\empty
}
\def\Qct{%
  \@ifnextchar[{%
    \QCTOpt}{\QCTNOpt}
}
\def\QCBOpt[#1]#2{%
  \def\QCBOptB{#1}
  \def\QCBOptA{#2}
}
\def\QCBNOpt#1{%
  \def\QCBOptA{#1}
  \let\QCBOptB\empty
}
\def\Qcb{%
  \@ifnextchar[{%
    \QCBOpt}{\QCBNOpt}
}
\def\PrepCapArgs{%
  \ifx\QCBOptA\empty
    \ifx\QCTOptA\empty
      {}%
    \else
      \ifx\QCTOptB\empty
        {\QCTOptA}%
      \else
        [\QCTOptB]{\QCTOptA}%
      \fi
    \fi
  \else
    \ifx\QCBOptA\empty
      {}%
    \else
      \ifx\QCBOptB\empty
        {\QCBOptA}%
      \else
        [\QCBOptB]{\QCBOptA}%
      \fi
    \fi
  \fi
}
\def\GRAPHICSPS#1{%
 \ifcase\GRAPHICSTYPE%\GRAPHICSTYPE=0
   \special{ps: #1}%
 \or%\GRAPHICSTYPE=1
   \special{language "PS", include "#1"}%
%%%\or%\GRAPHICSTYPE=2
%%%  #1%
 \fi
}%
\def\graffile#1#2#3#4{%
%%% \ifnum\GRAPHICSTYPE=\tw@
%%%  %Following if using psfig
%%%  \@ifundefined{psfig}{\input psfig.tex}{}%
%%%  \psfig{file=#1, height=#3, width=#2}%
%%% \else
  %Following for all others
  % JCS - added BOXTHEFRAME, see below
    \leavevmode
    \raise -#4 \BOXTHEFRAME{%
        \hbox to #2{\raise #3\hbox to #2{\null #1\hfil}}}%
}%
\def\draftbox#1#2#3#4{%
 \leavevmode\raise -#4 \hbox{%
  \frame{\rlap{\protect\tiny #1}\hbox to #2%
   {\vrule height#3 width\z@ depth\z@\hfil}%
  }%
 }%
}%
\newif\ifwasdraft
\def\GRAPHIC#1#2#3#4#5{%
 \ifnum\draft=\@ne\draftbox{#2}{#3}{#4}{#5}%
  \else\graffile{#1}{#3}{#4}{#5}%
  \fi
 }%
\def\addtoLaTeXparams#1{%
    \edef\LaTeXparams{\LaTeXparams #1}}%
\newif\ifBoxFrame \BoxFramefalse
\newif\ifOverFrame \OverFramefalse
\newif\ifUnderFrame \UnderFramefalse
\def\BOXTHEFRAME#1{%
   \hbox{%
      \ifBoxFrame
         \frame{#1}%
      \else
         {#1}%
      \fi
   }%
}
\def\doFRAMEparams#1{\BoxFramefalse\OverFramefalse\UnderFramefalse\readFRAMEparams#1\end}%
\def\readFRAMEparams#1{%
   \ifx#1\end%
  \let\next=\relax
  \else
  \ifx#1i\dispkind=\z@\fi
  \ifx#1d\dispkind=\@ne\fi
  \ifx#1f\dispkind=\tw@\fi
 	%% BEGIN CHANGES 0.12
	\ifx#1h
    \ifnum\dispkind=\tw@
			\@ifundefined{@HHfloat}{
			  \addtoLaTeXparams{h}
		 	 }{
         \def\LaTeXparams{H}
         \typeout{tcilatex: attribute align pos of FRAME  set to H}
         \typeout{\space \space \space \space all other placement options (tbp) are ignored }
   		 }
	  \else
			\addtoLaTeXparams{h}
    \fi
	\fi
  \if\LaTeXparams H
  	 \ifx#1t\fi	 %% ignore	all other placement
  	 \ifx#1b\fi	 %% options (tbp) 
     \ifx#1p\fi
  \else
      \ifx#1t\addtoLaTeXparams{t}\fi
      \ifx#1b\addtoLaTeXparams{b}\fi
      \ifx#1p\addtoLaTeXparams{p}\fi
  \fi
	%\typeout{LaTeXparms: \LaTeXparams}
%%END CHANGES 0.12

  \ifx#1X\BoxFrametrue\fi
  \ifx#1O\OverFrametrue\fi
  \ifx#1U\UnderFrametrue\fi
  \ifx#1w
    \ifnum\draft=1\wasdrafttrue\else\wasdraftfalse\fi
    \draft=\@ne
  \fi
  \let\next=\readFRAMEparams
  \fi
 \next
 }%
\def\IFRAME#1#2#3#4#5#6{%
      \bgroup
      \let\QCTOptA\empty
      \let\QCTOptB\empty
      \let\QCBOptA\empty
      \let\QCBOptB\empty
      #6%
      \parindent=0pt%
      \leftskip=0pt
      \rightskip=0pt
      \setbox0 = \hbox{\QCBOptA}%
      \@tempdima = #1\relax
      \ifOverFrame
          % Do this later
          \typeout{This is not implemented yet}%
          \show\HELP
      \else
         \ifdim\wd0>\@tempdima
            \advance\@tempdima by \@tempdima
            \ifdim\wd0 >\@tempdima
               \textwidth=\@tempdima
               \setbox1 =\vbox{%
                  \noindent\hbox to \@tempdima{\hfill\GRAPHIC{#5}{#4}{#1}{#2}{#3}\hfill}\\%
                  \noindent\hbox to \@tempdima{\parbox[b]{\@tempdima}{\QCBOptA}}%
               }%
               \wd1=\@tempdima
            \else
               \textwidth=\wd0
               \setbox1 =\vbox{%
                 \noindent\hbox to \wd0{\hfill\GRAPHIC{#5}{#4}{#1}{#2}{#3}\hfill}\\%
                 \noindent\hbox{\QCBOptA}%
               }%
               \wd1=\wd0
            \fi
         \else
            %\show\BBB
            \ifdim\wd0>0pt
              \hsize=\@tempdima
              \setbox1 =\vbox{%
                \unskip\GRAPHIC{#5}{#4}{#1}{#2}{0pt}%
                \break
                \unskip\hbox to \@tempdima{\hfill \QCBOptA\hfill}%
              }%
              \wd1=\@tempdima
           \else
              \hsize=\@tempdima
              \setbox1 =\vbox{%
                \unskip\GRAPHIC{#5}{#4}{#1}{#2}{0pt}%
              }%
              \wd1=\@tempdima
           \fi
         \fi
         \@tempdimb=\ht1
         \advance\@tempdimb by \dp1
         \advance\@tempdimb by -#2%
         \advance\@tempdimb by #3%
         \leavevmode
         \raise -\@tempdimb \hbox{\box1}%
      \fi
      \egroup%
}%
\def\DFRAME#1#2#3#4#5{%
 \begin{center}
     \let\QCTOptA\empty
     \let\QCTOptB\empty
     \let\QCBOptA\empty
     \let\QCBOptB\empty
     \ifOverFrame 
        #5\QCTOptA\par
     \fi
     \GRAPHIC{#4}{#3}{#1}{#2}{\z@}
     \ifUnderFrame 
        \nobreak\par #5\QCBOptA
     \fi
 \end{center}%
 }%
\def\FFRAME#1#2#3#4#5#6#7{%
 \begin{figure}[#1]%
  \let\QCTOptA\empty
  \let\QCTOptB\empty
  \let\QCBOptA\empty
  \let\QCBOptB\empty
  \ifOverFrame
    #4
    \ifx\QCTOptA\empty
    \else
      \ifx\QCTOptB\empty
        \caption{\QCTOptA}%
      \else
        \caption[\QCTOptB]{\QCTOptA}%
      \fi
    \fi
    \ifUnderFrame\else
      \label{#5}%
    \fi
  \else
    \UnderFrametrue%
  \fi
  \begin{center}\GRAPHIC{#7}{#6}{#2}{#3}{\z@}\end{center}%
  \ifUnderFrame
    #4
    \ifx\QCBOptA\empty
      \caption{}%
    \else
      \ifx\QCBOptB\empty
        \caption{\QCBOptA}%
      \else
        \caption[\QCBOptB]{\QCBOptA}%
      \fi
    \fi
    \label{#5}%
  \fi
  \end{figure}%
 }%
\def\makeactives{
  \catcode`\"=\active
  \catcode`\;=\active
  \catcode`\:=\active
  \catcode`\'=\active
  \catcode`\~=\active
}
   \gdef\activesoff{%
      \def"{\string"}
      \def;{\string;}
      \def:{\string:}
      \def'{\string'}
      \def~{\string~}
      %\bbl@deactivate{"}%
      %\bbl@deactivate{;}%
      %\bbl@deactivate{:}%
      %\bbl@deactivate{'}%
    }
\def\FRAME#1#2#3#4#5#6#7#8{%
 \bgroup
 \@ifundefined{bbl@deactivate}{}{\activesoff}
 \ifnum\draft=\@ne
   \wasdrafttrue
 \else
   \wasdraftfalse%
 \fi
 \def\LaTeXparams{}%
 \dispkind=\z@
 \def\LaTeXparams{}%
 \doFRAMEparams{#1}%
 \ifnum\dispkind=\z@\IFRAME{#2}{#3}{#4}{#7}{#8}{#5}\else
  \ifnum\dispkind=\@ne\DFRAME{#2}{#3}{#7}{#8}{#5}\else
   \ifnum\dispkind=\tw@
    \edef\@tempa{\noexpand\FFRAME{\LaTeXparams}}%
    \@tempa{#2}{#3}{#5}{#6}{#7}{#8}%
    \fi
   \fi
  \fi
  \ifwasdraft\draft=1\else\draft=0\fi{}%
  \egroup
 }%
\def\TEXUX#1{"texux"}
\long\def\QQQ#1#2{%
     \long\expandafter\def\csname#1\endcsname{#2}}%
\long\def\QQA#1#2{}%
\def\QTR#1#2{{\csname#1\endcsname #2}}%(gp) Is this the best?
\def\EXPAND#1[#2]#3{}%
\def\NOEXPAND#1[#2]#3{}%
\def\LaTeXparent#1{}%
\def\ChildStyles#1{}%
\def\ChildDefaults#1{}%
\def\QTagDef#1#2#3{}%
\def\QQfnmark#1{\footnotemark}
\def\makeatletter\input gnuindex.sty\makeatother\makeindex{\makeatletter\input gnuindex.sty\makeatother\makeindex}%	
\def\initial#1{\bigbreak{\raggedright\large\bf #1}\kern 2\p@\penalty3000}}%
 \def\abstract{%
  \if@twocolumn
   \section*{Abstract (Not appropriate in this style!)}%
   \else \small 
   \begin{center}{\bf Abstract\vspace{-.5em}\vspace{\z@}}\end{center}%
   \quotation 
   \fi
  }%
   \def\registered{\relax\ifmmode{}\r@gistered
                    \else$\m@th\r@gistered$\fi}%
 \def\r@gistered{^{\ooalign
  {\hfil\raise.07ex\hbox{$\scriptstyle\rm\text{R}$}\hfil\crcr
  \mathhexbox20D}}}}{}%
\newdimen\theight
\def\Column{%
 \vadjust{\setbox\z@=\hbox{\scriptsize\quad\quad tcol}%
  \theight=\ht\z@\advance\theight by \dp\z@\advance\theight by \lineskip
  \kern -\theight \vbox to \theight{%
   \rightline{\rlap{\box\z@}}%
   \vss
   }%
  }%
 }%
\def\qed{%
 \ifhmode\unskip\nobreak\fi\ifmmode\ifinner\else\hskip5\p@\fi\fi
 \hbox{\hskip5\p@\vrule width4\p@ height6\p@ depth1.5\p@\hskip\p@}%
 }%
\def\miss{\hbox{\vrule height2\p@ width 2\p@ depth\z@}}%
\def\tcol#1{{\baselineskip=6\p@ \vcenter{#1}} \Column}  %
\def\newfmtname{LaTeX2e}
\def\chkcompat{%
   \if@compatibility
   \else
     \usepackage{latexsym}
   \fi
}
  \DeclareOldFontCommand{\rm}{\normalfont\rmfamily}{\mathrm}
  \DeclareOldFontCommand{\sf}{\normalfont\sffamily}{\mathsf}
  \DeclareOldFontCommand{\tt}{\normalfont\ttfamily}{\mathtt}
  \DeclareOldFontCommand{\bf}{\normalfont\bfseries}{\mathbf}
  \DeclareOldFontCommand{\it}{\normalfont\itshape}{\mathit}
  \DeclareOldFontCommand{\sl}{\normalfont\slshape}{\@nomath\sl}
  \DeclareOldFontCommand{\sc}{\normalfont\scshape}{\@nomath\sc}
\def\alpha{{\Greekmath 010B}}%
\def\beta{{\Greekmath 010C}}%
\def\gamma{{\Greekmath 010D}}%
\def\delta{{\Greekmath 010E}}%
\def\epsilon{{\Greekmath 010F}}%
\def\zeta{{\Greekmath 0110}}%
\def\eta{{\Greekmath 0111}}%
\def\theta{{\Greekmath 0112}}%
\def\iota{{\Greekmath 0113}}%
\def\kappa{{\Greekmath 0114}}%
\def\lambda{{\Greekmath 0115}}%
\def\mu{{\Greekmath 0116}}%
\def\nu{{\Greekmath 0117}}%
\def\xi{{\Greekmath 0118}}%
\def\pi{{\Greekmath 0119}}%
\def\rho{{\Greekmath 011A}}%
\def\sigma{{\Greekmath 011B}}%
\def\tau{{\Greekmath 011C}}%
\def\upsilon{{\Greekmath 011D}}%
\def\phi{{\Greekmath 011E}}%
\def\chi{{\Greekmath 011F}}%
\def\psi{{\Greekmath 0120}}%
\def\omega{{\Greekmath 0121}}%
\def\varepsilon{{\Greekmath 0122}}%
\def\vartheta{{\Greekmath 0123}}%
\def\varpi{{\Greekmath 0124}}%
\def\varrho{{\Greekmath 0125}}%
\def\varsigma{{\Greekmath 0126}}%
\def\varphi{{\Greekmath 0127}}%
\def\nabla{{\Greekmath 0272}}
\def\FindBoldGroup{%
   {\setbox0=\hbox{$\mathbf{x\global\edef\theboldgroup{\the\mathgroup}}$}}%
}
\def\Greekmath#1#2#3#4{%
    \if@compatibility
        \ifnum\mathgroup=\symbold
           \mathchoice{\mbox{\boldmath$\displaystyle\mathchar"#1#2#3#4$}}%
                      {\mbox{\boldmath$\textstyle\mathchar"#1#2#3#4$}}%
                      {\mbox{\boldmath$\scriptstyle\mathchar"#1#2#3#4$}}%
                      {\mbox{\boldmath$\scriptscriptstyle\mathchar"#1#2#3#4$}}%
        \else
           \mathchar"#1#2#3#4% 
        \fi 
    \else 
        \FindBoldGroup
        \ifnum\mathgroup=\theboldgroup % For 2e
           \mathchoice{\mbox{\boldmath$\displaystyle\mathchar"#1#2#3#4$}}%
                      {\mbox{\boldmath$\textstyle\mathchar"#1#2#3#4$}}%
                      {\mbox{\boldmath$\scriptstyle\mathchar"#1#2#3#4$}}%
                      {\mbox{\boldmath$\scriptscriptstyle\mathchar"#1#2#3#4$}}%
        \else
           \mathchar"#1#2#3#4% 
        \fi     	    
	  \fi}
\newif\ifGreekBold  \GreekBoldfalse
\let\SAVEPBF=\pbf
\def\pbf{\GreekBoldtrue\SAVEPBF}%
  \newcounter{equationnumber}  
  \def\mathletters{%
     \addtocounter{equation}{1}
     \edef\@currentlabel{\theequation}%
     \setcounter{equationnumber}{\c@equation}
     \setcounter{equation}{0}%
     \edef\theequation{\@currentlabel\noexpand\alph{equation}}%
  }
    \def\BibTeX{{\rm B\kern-.05em{\sc i\kern-.025em b}\kern-.08em
                 T\kern-.1667em\lower.7ex\hbox{E}\kern-.125emX}}}{}%
\def\AmS{{\protect\usefont{OMS}{cmsy}{m}{n}%
                A\kern-.1667em\lower.5ex\hbox{M}\kern-.125emS}}}{}%
\def\DN@{\def\next@}%
\def\eat@#1{}%
\let\DOTSI\relax
\def\RIfM@{\relax\ifmmode}%
\def\FN@{\futurelet\next}%
\def\iint{\DOTSI\intno@\tw@\FN@\ints@}%
\def\iiint{\DOTSI\intno@\thr@@\FN@\ints@}%
\def\iiiint{\DOTSI\intno@4 \FN@\ints@}%
\def\idotsint{\DOTSI\intno@\z@\FN@\ints@}%
\def\ints@{\findlimits@\ints@@}%
\newif\iflimtoken@
\newif\iflimits@
\def\findlimits@{\limtoken@true\ifx\next\limits\limits@true
 \else\ifx\next\nolimits\limits@false\else
 \limtoken@false\ifx\ilimits@\nolimits\limits@false\else
 \ifinner\limits@false\else\limits@true\fi\fi\fi\fi}%
\def\multint@{\int\ifnum\intno@=\z@\intdots@                          %1
 \else\intkern@\fi                                                    %2
 \ifnum\intno@>\tw@\int\intkern@\fi                                   %3
 \ifnum\intno@>\thr@@\int\intkern@\fi                                 %4
 \int}%                                                               %5
\def\multintlimits@{\intop\ifnum\intno@=\z@\intdots@\else\intkern@\fi
 \ifnum\intno@>\tw@\intop\intkern@\fi
 \ifnum\intno@>\thr@@\intop\intkern@\fi\intop}%
\def\intic@{%
    \mathchoice{\hskip.5em}{\hskip.4em}{\hskip.4em}{\hskip.4em}}%
\def\negintic@{\mathchoice
 {\hskip-.5em}{\hskip-.4em}{\hskip-.4em}{\hskip-.4em}}%
\def\ints@@{\iflimtoken@                                              %1
 \def\ints@@@{\iflimits@\negintic@
   \mathop{\intic@\multintlimits@}\limits                             %2
  \else\multint@\nolimits\fi                                          %3
  \eat@}%                                                             %4
 \else                                                                %5
 \def\ints@@@{\iflimits@\negintic@
  \mathop{\intic@\multintlimits@}\limits\else
  \multint@\nolimits\fi}\fi\ints@@@}%
\def\intkern@{\mathchoice{\!\!\!}{\!\!}{\!\!}{\!\!}}%
\def\plaincdots@{\mathinner{\cdotp\cdotp\cdotp}}%
\def\intdots@{\mathchoice{\plaincdots@}%
 {{\cdotp}\mkern1.5mu{\cdotp}\mkern1.5mu{\cdotp}}%
 {{\cdotp}\mkern1mu{\cdotp}\mkern1mu{\cdotp}}%
 {{\cdotp}\mkern1mu{\cdotp}\mkern1mu{\cdotp}}}%
\def\RIfM@{\relax\protect\ifmmode}
\def\text{\RIfM@\expandafter\text@\else\expandafter\mbox\fi}
\let\nfss@text\text
\def\text@#1{\mathchoice
   {\textdef@\displaystyle\f@size{#1}}%
   {\textdef@\textstyle\tf@size{\firstchoice@false #1}}%
   {\textdef@\textstyle\sf@size{\firstchoice@false #1}}%
   {\textdef@\textstyle \ssf@size{\firstchoice@false #1}}%
   \glb@settings}
\def\textdef@#1#2#3{\hbox{{%
                    \everymath{#1}%
                    \let\f@size#2\selectfont
                    #3}}}
\newif\iffirstchoice@
\def\Let@{\relax\iffalse{\fi\let\\=\cr\iffalse}\fi}%
\def\vspace@{\def\vspace##1{\crcr\noalign{\vskip##1\relax}}}%
\def\multilimits@{\bgroup\vspace@\Let@
 \baselineskip\fontdimen10 \scriptfont\tw@
 \advance\baselineskip\fontdimen12 \scriptfont\tw@
 \lineskip\thr@@\fontdimen8 \scriptfont\thr@@
 \lineskiplimit\lineskip
 \vbox\bgroup\ialign\bgroup\hfil$\m@th\scriptstyle{##}$\hfil\crcr}%
\def\Sb{_\multilimits@}%
\def\endSb{\crcr\egroup\egroup\egroup}%
\def\Sp{^\multilimits@}%
\newdimen\ex@
\def\rightarrowfill@#1{$#1\m@th\mathord-\mkern-6mu\cleaders
 \hbox{$#1\mkern-2mu\mathord-\mkern-2mu$}\hfill
 \mkern-6mu\mathord\rightarrow$}%
\def\leftarrowfill@#1{$#1\m@th\mathord\leftarrow\mkern-6mu\cleaders
 \hbox{$#1\mkern-2mu\mathord-\mkern-2mu$}\hfill\mkern-6mu\mathord-$}%
\def\leftrightarrowfill@#1{$#1\m@th\mathord\leftarrow
\mkern-6mu\cleaders
 \hbox{$#1\mkern-2mu\mathord-\mkern-2mu$}\hfill
 \mkern-6mu\mathord\rightarrow$}%
\def\overrightarrow{\mathpalette\overrightarrow@}%
\def\overrightarrow@#1#2{\vbox{\ialign{##\crcr\rightarrowfill@#1\crcr
 \noalign{\kern-\ex@\nointerlineskip}$\m@th\hfil#1#2\hfil$\crcr}}}%
\def\overleftarrow{\mathpalette\overleftarrow@}%
\def\overleftarrow@#1#2{\vbox{\ialign{##\crcr\leftarrowfill@#1\crcr
 \noalign{\kern-\ex@\nointerlineskip}$\m@th\hfil#1#2\hfil$\crcr}}}%
\def\overleftrightarrow{\mathpalette\overleftrightarrow@}%
\def\overleftrightarrow@#1#2{\vbox{\ialign{##\crcr
   \leftrightarrowfill@#1\crcr
 \noalign{\kern-\ex@\nointerlineskip}$\m@th\hfil#1#2\hfil$\crcr}}}%
\def\underrightarrow{\mathpalette\underrightarrow@}%
\def\underrightarrow@#1#2{\vtop{\ialign{##\crcr$\m@th\hfil#1#2\hfil
  $\crcr\noalign{\nointerlineskip}\rightarrowfill@#1\crcr}}}%
\def\underleftarrow{\mathpalette\underleftarrow@}%
\def\underleftarrow@#1#2{\vtop{\ialign{##\crcr$\m@th\hfil#1#2\hfil
  $\crcr\noalign{\nointerlineskip}\leftarrowfill@#1\crcr}}}%
\def\underleftrightarrow{\mathpalette\underleftrightarrow@}%
\def\underleftrightarrow@#1#2{\vtop{\ialign{##\crcr$\m@th
  \hfil#1#2\hfil$\crcr
 \noalign{\nointerlineskip}\leftrightarrowfill@#1\crcr}}}%
\def\qopnamewl@#1{\mathop{\operator@font#1}\nlimits@}
\let\nlimits@\displaylimits
\def\setboxz@h{\setbox\z@\hbox}
\def\varlim@#1#2{\mathop{\vtop{\ialign{##\crcr
 \hfil$#1\m@th\operator@font lim$\hfil\crcr
 \noalign{\nointerlineskip}#2#1\crcr
 \noalign{\nointerlineskip\kern-\ex@}\crcr}}}}
 \def\rightarrowfill@#1{\m@th\setboxz@h{$#1-$}\ht\z@\z@
  $#1\copy\z@\mkern-6mu\cleaders
  \hbox{$#1\mkern-2mu\box\z@\mkern-2mu$}\hfill
  \mkern-6mu\mathord\rightarrow$}
\def\leftarrowfill@#1{\m@th\setboxz@h{$#1-$}\ht\z@\z@
  $#1\mathord\leftarrow\mkern-6mu\cleaders
  \hbox{$#1\mkern-2mu\copy\z@\mkern-2mu$}\hfill
  \mkern-6mu\box\z@$}
\def\projlim{\qopnamewl@{proj\,lim}}
\def\injlim{\qopnamewl@{inj\,lim}}
\def\varinjlim{\mathpalette\varlim@\rightarrowfill@}
\def\varprojlim{\mathpalette\varlim@\leftarrowfill@}
\def\varliminf{\mathpalette\varliminf@{}}
\def\varliminf@#1{\mathop{\underline{\vrule\@depth.2\ex@\@width\z@
   \hbox{$#1\m@th\operator@font lim$}}}}
\def\varlimsup{\mathpalette\varlimsup@{}}
\def\varlimsup@#1{\mathop{\overline
  {\hbox{$#1\m@th\operator@font lim$}}}}
\def\align{\@verbatim \frenchspacing\@vobeyspaces \@alignverbatim
You are using the "align" environment in a style in which it is not defined.}
\let\csname endalign*\endcsname =\endtrivlist
\def\alignat{\@verbatim \frenchspacing\@vobeyspaces \@alignatverbatim
You are using the "alignat" environment in a style in which it is not defined.}
\let\csname endalignat*\endcsname =\endtrivlist
\def\xalignat{\@verbatim \frenchspacing\@vobeyspaces \@xalignatverbatim
You are using the "xalignat" environment in a style in which it is not defined.}
\let\csname endxalignat*\endcsname =\endtrivlist
\def\gather{\@verbatim \frenchspacing\@vobeyspaces \@gatherverbatim
You are using the "gather" environment in a style in which it is not defined.}
\let\csname endgather*\endcsname =\endtrivlist
\def\multiline{\@verbatim \frenchspacing\@vobeyspaces \@multilineverbatim
You are using the "multiline" environment in a style in which it is not defined.}
\let\csname endmultiline*\endcsname =\endtrivlist
\def\arrax{\@verbatim \frenchspacing\@vobeyspaces \@arraxverbatim
You are using a type of "array" construct that is only allowed in AmS-LaTeX.}
\def\tabulax{\@verbatim \frenchspacing\@vobeyspaces \@tabulaxverbatim
You are using a type of "tabular" construct that is only allowed in AmS-LaTeX.}
\let\csname endarrax*\endcsname =\endtrivlist
\let\csname endtabulax*\endcsname =\endtrivlist
\def\@@eqncr{\let\@tempa\relax
    \ifcase\@eqcnt \def\@tempa{& & &}\or \def\@tempa{& &}%
      \else \def\@tempa{&}\fi
     \@tempa
     \if@eqnsw
        \iftag@
           \@taggnum
        \else
           \@eqnnum\stepcounter{equation}%
        \fi
     \fi
     \global\tag@false
     \global\@eqnswtrue
     \global\@eqcnt\z@\cr}
 \def\endequation{%
     \ifmmode\ifinner % FLEQN hack
      \iftag@
        \addtocounter{equation}{-1} % undo the increment made in the begin part
        $\hfil
           \displaywidth\linewidth\@taggnum\egroup \endtrivlist
        \global\tag@false
        \global\@ignoretrue   
      \else
        $\hfil
           \displaywidth\linewidth\@eqnnum\egroup \endtrivlist
        \global\tag@false
        \global\@ignoretrue 
      \fi
     \else   
      \iftag@
        \addtocounter{equation}{-1} % undo the increment made in the begin part
        \eqno \hbox{\@taggnum}
        \global\tag@false%
        $$\global\@ignoretrue
      \else
        \eqno \hbox{\@eqnnum}% $$ BRACE MATCHING HACK
        $$\global\@ignoretrue
      \fi
     \fi\fi
 } 
 \newif\iftag@ \tag@false
 \def\tag{\@ifnextchar*{\@tagstar}{\@tag}}
 \def\@tag#1{%
     \global\tag@true
     \global\def\@taggnum{(#1)}}
 \def\@tagstar*#1{%
     \global\tag@true
     \global\def\@taggnum{#1}%  
}
\newcommand{\kirill}[1]{ \ifthenelse{\boolean{wordcount}}{}{\ifthenelse{\boolean{showcomments}}
{\textcolor{magenta}{(K:  #1)}}{}}{}}
\newcommand{\auxiliar}[1]{  \ifthenelse{\boolean{wordcount}}{}{\ifthenelse{\boolean{showauxiliar}}
{#1}{}}{}}
\newcommand{\optional}[1]{\ifthenelse{\boolean{wordcount}}{}{#1}{}}
\renewenvironment{proof}[1][\proofname] {\par\pushQED{\qed}\normalfont\topsep6\p@\@plus6\p@\relax\trivlist\item[\hskip\labelsep\bfseries#1\@addpunct{.}]\ignorespaces}{\popQED\endtrivlist\@endpefalse}
\crefname{theorem}{theorem}{theorems}
\newenvironment{delayedproof}[1]
 {\begin{proof}[Proof of \Cref{#1}]}
 {\end{proof}}
\titleformat{\subsection}[runin]
        {\normalfont\bfseries}
        {\thesubsection.}
        {0.5em}
        {}
        [.]
\titleformat{\subsubsection}[runin]
        {\normalfont\bfseries}
        {\thesubsubsection.}
        {0.5em}
        {}
        [.]
\newcommand\xqed[1]{%
  \leavevmode\unskip\penalty9999 \hbox{}\nobreak\hfill
  \quad\hbox{#1}}
\newcommand\demo{\xqed{$\triangle$}}
\tikzset{myfillcolor/.style = {draw,fill=#1}}%
\NewDocumentCommand{\highlight}{O{blue!40} m m}{%
\draw[myfillcolor=#1] (#2.north west)rectangle (#3.south east);
}
\NewDocumentCommand{\vshade}{O{blue!40} O{white} m m}{%
\draw[bottom color =#1,top color=#2, draw=none] (#3.north west)rectangle (#4.south east);
}
\NewDocumentCommand{\oshade}{O{blue!40} O{white} m m}{%
\draw[right color =#1,left color=#2, draw=none] (#3.north west)rectangle (#4.south east);
}
\NewDocumentCommand{\inshade}{O{blue!40} O{white} m m}{%
\draw[inner color =#1,outer color=#2, draw=none] (#3.north west)rectangle (#4.south east);
}
\NewDocumentCommand{\fillpattern}{O{north west lines} O{blue!50} m m}{%
\draw[pattern=#1, pattern color=#2, draw=none] (#3.north west)rectangle (#4.south east);
}
\definecolor{princetonorange}{rgb}{1.0, 0.56, 0.0}
\DeclareMathAlphabet{\mathpzc}{OT1}{pzc}{m}{it}
\definecolor{orange}{RGB}{238,136,102}
\definecolor{lightblue}{RGB}{119,170,221}
\definecolor{pear}{RGB}{187,204,51}
\definecolor{lightyellow}{RGB}{238,221,136}
\definecolor{pink}{RGB}{255,170,187}
\definecolor{mint}{RGB}{68,187,153}
\newtheoremstyle{special}% name
    {\topsep}%   Space above
    {\topsep}%   Space below
    {\itshape}%  Body font
    {}%          Indent amount
    {\bfseries}% Theorem head font
    {}%          Punctuation after theorem head -- blank
    {0.5em}%     Space after theorem head (0.5em is the default)
    {{\thmname{#1}\thmnumber{ #2$^{\bm*}\!$.}\thmnote{\ \textmd{(#3)}}}}% Theorem head spec 
\theoremstyle{special}
\def\th@plain{%
  \thm@notefont{}% same as heading font
  \itshape % body font
}
\def\th@definition{%
  \thm@notefont{}% same as heading font
  \normalfont % body font
}
\DeclarePairedDelimiter\ceil{\lceil}{\rceil}
\DeclarePairedDelimiter\floor{\lfloor}{\rfloor}
\providecommand{\NOOP}[1]{}
\providecommand{\keywords}[1]{\noindent\textbf{{Keywords:}} #1}
\begin{document}

\newpage

\title{Fragile Stable Matchings}

\author{Kirill Rudov\thanks{\normalsize Department of Economics, UC Berkeley (email: krudov@berkeley.edu). I am grateful to Leeat Yariv for her guidance in developing this project. I would like to thank Noga Alon, Ata Atay, Arjada Bardhi, P{\'e}ter Bir{\'o}, Sylvain Chassang, Federico Echenique, Andrew Ferdowsian, Faruk Gul, Navin Kartik, Alessandro Lizzeri, Sofia Moroni, Xiaosheng Mu, Pietro Ortoleva, Wolfgang Pesendorfer, Doron Ravid, Joseph Root, Alvin Roth, Eran Shmaya, Alex Teytelboym, Can Urgun, Vijay Vazirani, and various seminar and conference participants for helpful comments and insightful discussions. 
}
}
\date{March 14, 2024}

\optional{
\maketitle
\thispagestyle{empty}

\begin{changemargin}{0.25in}{0.25in}
\begin{abstract}

\normalsize  
We show how fragile stable matchings are in a decentralized one-to-one matching setting. The classical work of \cite{roth1990random} suggests simple decentralized dynamics in which randomly-chosen blocking pairs match successively. Such decentralized interactions guarantee convergence to \textit{a} stable matching. Our first theorem shows that, under mild conditions, \textit{any} unstable matching---including a small perturbation of a stable matching---can culminate in \textit{any} stable matching through these dynamics. Our second theorem highlights another aspect of fragility: stabilization may take a long time. Even in markets with a unique stable matching, where the dynamics always converge to the same matching, decentralized interactions can require an exponentially long duration to converge. A small perturbation of a stable matching may lead the market away from stability and involve a sizable proportion of mismatched participants for extended periods. Our results hold for a broad class of dynamics.
\end{abstract}
\keywords{Fragility, Stability, Decentralized Matching, Market Design.}
\end{changemargin}
\clearpage
\setcounter{page}{1}
}

%\tableofcontents
\newpage

\section{Introduction}
\subsection{Overview} We examine the fragility of stable matchings in a decentralized one-to-one matching environment.\footnote{A matching is stable if there is no pair of participants that prefer each other to their partners in the matching; such a pair is called a blocking pair (see \citealp{roth_two-sided_1992} for details).} Most of the existing literature studies centralized markets: the medical residency match, school allocations, and others. However, many markets are not fully centralized: the market for junior economists, college admissions, marriage markets, and so on. Furthermore, decentralized interactions often precede or follow centralized markets. Even when a market relies on a stable centralized clearinghouse, ex-post preference shocks, changes in market composition, or small implementation errors may lead to an unstable matching, close to or distant from the intended stable matching. How fragile are stable matchings with respect to small perturbations? In general, when a matching in place is unstable, what is the set of outcomes decentralized interactions can generate? Even if convergence to a desired stable matching is guaranteed, what are efficiency costs of decentralized interactions, in terms of duration of instability and the proportion of market participants who spend long periods mismatched? These questions are at the heart of this paper.

We consider a decentralized process in which randomly-chosen blocking pairs successively match with each other, potentially breaking their former matches. These dynamics always culminate in a stable matching and impose low sophistication requirements on market participants. Theorem~\ref{theorem_characterization} shows that, under mild conditions, starting from \textit{any} unstable matching, these dynamics attain \textit{any} stable matching with positive probability. In particular, even small perturbations of stable matchings may lead to \textit{any} stable matching. Thus, stable matchings are fragile. Simulations suggest that, in fact, even a minimal deviation from a stable matching might converge with substantial probability to another, possibly vastly different stable matching. Another insight of this paper is that the stabilization dynamics may involve many mismatched participants for long durations. Theorem~\ref{theorem_exponential} shows that, even in markets with a unique stable matching, where the dynamics always converge to the same stable matching, stabilization may take an exponentially long time. This result highlights another fragility aspect: a small perturbation of a stable outcome may lead the market away from stability for a long stretch of time. Using simulations, we show that, whether or not the stable matching is unique, the stabilization process typically strays from stability, takes a long time to regain it, and involves many mismatched participants for substantial periods.

Altogether, our results imply the fragility of stable matchings. Our analysis illustrates the importance of taking into account decentralized interactions that precede or follow centralized markets and casts doubt on empirical studies of decentralized markets that estimate parameters in one snapshot of time, relying on stability as an identification assumption.

Stability is a central concept in matching theory. It is defined by the absence of blocking pairs, which are pairs of participants who prefer each other over their current partners \citep{gale1962college}. This definition is motivated by the concern that some participants may profitably deviate and circumvent the intended matching by forming blocking pairs. Such a concern was part of the impetus for introducing stable centralized clearinghouses.

Folk wisdom suggests that if an unstable outcome occurs, decentralized interactions eventually yield stability. Going back to \cite{knuth1976marriages}, the literature has sought to provide a theoretical framework for understanding how stable outcomes emerge. The classical work of \cite{roth1990random} suggests simple and natural decentralized dynamics generating stability. They show that in one-to-one markets, for any unstable matching, there exists a finite sequence of blocking pairs that generates a stable matching. Consequently, when blocking pairs are formed at random in each step, stability is guaranteed. A benefit of these dynamics is that convergence to stability is possible even when market participants have limited sophistication and possess only local information---participants need to identify their own blocking pairs, but not much more.

We consider a broad class of such dynamics, allowing for non-uniform and time-dependent formation of blocking pairs, for which convergence to stability is guaranteed (cf. \citealp{roth1990random}). This class includes dynamics where, at each step, a participant randomly chosen from those with at least one blocking partner selects her \textit{best} blocking partner. The probability that a particular blocking pair of market participants matches can be quite general, reflecting factors such as the likelihood that these participants would meet, or their incentives to form a match, potentially driven by cardinal payoffs.

We study the robustness of stable matchings with respect to arbitrary, possibly minimal, perturbations. Such perturbations may arise in both decentralized and centralized settings. Over time, participants' preferences may shift, market composition may change, or matched participants might end their partnerships. For instance, in labor markets, a family shock may generate new geographical preferences for workers; new positions may appear and workers may either become available or retire; an employer and an employee might mutually agree to terminate the employment relationship. While the set of stable matchings may also be altered by such changes, they could all lead to instability. In particular, some participants that should be paired under \textit{current} market conditions might be mismatched or unmatched.

In some situations, decentralized interactions cannot break certain partnerships. Imagine a two-sided market divided into two submarkets, New York and Los Angeles, in which every Angeleno prefers any Angeleno to any New Yorker, and vice versa. Suppose that the New York submarket has a unique stable matching and the Los Angeles submarket has two stable matchings. The entire market naturally has two stable matchings as well. If all Angelenos are paired in accordance with one of the two stable matchings for the Los Angeles submarket, and therefore for the entire market, decentralized interactions cannot unmatch any such pair of matched Angelenos. As a result, the stabilization dynamics cannot attain another stable matching in the entire market. Here, Angelenos can be matched in a stable way inside the group so that every insider prefers her stable partner to anyone outside Los Angeles. We call such a group a \textit{fragment}. In this example, New Yorkers form a fragment as well. A fragment is \textit{trivial} if all stable matchings in the entire market match participants inside the fragment in the same way. In particular, Angelenos constitute a \textit{non-trivial} fragment, while New Yorkers form a trivial fragment.

If a market has a non-trivial fragment, some unstable matchings cannot yield certain stable matchings. Theorem~\ref{theorem_characterization} shows that the reverse is also true. When there are no non-trivial fragments, \textit{any} unstable matching can yield \textit{any} stable matching through decentralized interactions. Non-trivial fragments are the only restraints on the stabilization dynamics. The absence of non-trivial fragments is a mild condition: simulations suggest that they are rare in large random markets. Thus, in most markets, the decentralized process is fluid enough to attain any stable matching. 

Theorem~\ref{theorem_characterization} suggests one type of fragility of stable matchings. A small perturbation of a stable matching may culminate in \textit{any} stable matching, close to or distant from the original stable matching. The celebrated Deferred Acceptance mechanism of \cite{gale1962college}, which is used to match doctors to hospitals and students to schools, aims at implementing an extremal stable matching, the best stable matching for one market side, doctors and students, respectively. Our results imply that any perturbation of such a clearinghouse's outcome need not revert back to the intended extremal stable matching; decentralized interactions can instead lead to the other extremal stable matching, or anything in between.

We employ simulations to investigate the stable matchings that have stronger drawing power in random markets with multiple stable matchings. In order to inspect the effects of minimal perturbations to stability, we initialize markets at an \textit{almost stable} matching, which is one blocking pair away from stable. For presentation simplicity, suppose a blocking pair is chosen uniformly at random at each step of the dynamics.\footnote{Analogous results hold for other dynamics, specifically for dynamics where, at each step, a randomly-chosen participant forms a match with her best blocking partner.} The simulations show that almost stable matchings---even those corresponding to an extremal stable matching---converge with substantial probability to a stable matching distant from the slightly perturbed matching.

Fragility may take other forms: the stabilization dynamics may take a long time and involve a sizable fraction of participants mismatched for a long duration. To isolate these other forms of fragility, we focus on markets with a unique stable matching. For such markets, the dynamics always converge to the same matching, the unique stable one. In fact, there is work suggesting that large markets have small cores in certain settings.\footnote{\label{footnote_mixed_evidence}Several empirical papers document small cores for \textit{reported} preferences in some markets (e.g., \citealp{roth1999redesign}). There are also theoretical studies identifying conditions under which large markets have an essentially unique stable matching (see \citealp*{Immorlica2005}, \citealp*{kojima2009incentives}, and \citealp*{ashlagi2017unbalanced}). Of course, our first theorem has bite as long as there are multiple stable matchings, even when they are not numerous. In addition, there is growing evidence that reported preferences might differ significantly from truthful ones; see \cite*{artemov2023stable}, \cite{echenique2022top}, and references therein. Furthermore, recent papers suggest that cores are large in certain settings, including large imbalanced markets (\citealp{biro2022large}; \citealp*{hoffman2023stable, rheingans2024large, brilliantova2022fair}), with possibly correlated preferences.}  

Theorem~\ref{theorem_exponential} shows that for a large class of markets with a unique stable matching, even a small deviation from stability entails an exponentially long stabilization process. Specifically, we start with \textit{any} market with a unique stable matching. We show that the market can be augmented with a small number of market participants so that the resulting market still features a unique stable matching, but where for most matchings on any path to stability, there are many blocking pairs. Importantly, considerably more blocking pairs are ``destabilizing,'' moving the augmented market away from stability, than ``stabilizing,'' moving the augmented market towards stability. The stabilization dynamics can then be associated with a random walk that is heavily biased in the direction of instability. This bias substantially decelerates the convergence, making it exponential.

We illustrate that fragments accelerate convergence. For example, consider a labor market in which participants on each market side have assortative preferences---say, firms have the same ranking of workers based on their college GPA and workers rely on the same company ratings site to rank firms. Then, the best firm and the best worker form a trivial fragment. In addition, the top two firms and the top two workers constitute a trivial fragment that nests the previous one, and so forth. This market belongs to a general class of markets having a nested structure of fragments. We show that for any market in this class, the stabilization dynamics take polynomial time. However, in view of our second theorem, even such markets are fragile when augmented with a small fraction of new participants.

We use simulations to inspect convergence speeds in random markets with and without a unique stable matching. The simulations suggest that, regardless of whether stable matchings are unique, and even when starting from almost stable matchings, the stabilization dynamics typically move far away from stability. Consequently, stabilization takes a very long time, and involves long durations with a sizable fraction of market participants being mismatched.

Taken together, our results suggest the importance of accounting for decentralized interactions even in the presence of stable matching clearinghouses. Small deviations from stability might lead to significant and undesirable changes affecting a large fraction of market participants for a long period of time. From a market design perspective, interventions directed at reducing efficiency costs of decentralized interactions might prove valuable. These interventions may include designing specific rules governing how participants interact with each other or introducing restrictions that reduce market activity, possibly at the expense of flexibility of decentralized interactions to adjust to market conditions. An alternative would be to devise a dynamic reassignment clearinghouse that fixes instabilities on a regular basis; this might be especially relevant given the dynamic nature of this problem. 

Our results also have implications for empirical work on matching. A growing empirical literature uses stability of observed matches as an identification assumption to estimate preferences in decentralized markets (see the surveys by \citealp{fox2009structural} and \citealp{chiappori_econometrics_2016}, as well as \citealp{boyd2013analyzing}). Our findings imply that observed matchings might be far from stable for a long duration. This raises caution for the interpretation of those estimates.

We provide additional implications in the discussion section at the end of the paper.

\subsection{Related Literature}
To the best of our knowledge, no prior work has examined the fragility of stable matchings with respect to arbitrary perturbations through the lens of decentralized interactions. In a somewhat related study, \cite{jackson2002evolution} use similar dynamics to investigate the evolution of networks in the presence of mutations that might randomly add or delete links. Applied to matching problems, their results show that the support of the limiting stationary distribution coincides with the set of stable matchings; in this sense, no stable matching is more ``fragile'' than others in the face of such mutations.\footnote{\cite{newton2015one} show that for alternative mutation models, some stable matchings might not be stochastically stable.} Close to our work, the recent and growing computer science literature studies robust solutions to stable matchings in the presence of incomplete information about preferences \citep{aziz2020stable}, stable matchings that are most tolerant to special classes of errors \citep{genc2019complexity, reddy2022structural}, and algorithms to find a stable matching---after certain changes---that is as close as possible to an original stable matching (see \citealp*{boehmer2022deepening} and references therein), to name a few. More broadly, the current paper relates to the fragility of complex economic systems (e.g., see \citealp{elliott2022networks}).\footnote{\cite{kojima2011robust} examines a mechanism for the school choice setting that is robust to a combined manipulation, where a student first misreports his preferences and then blocks the implemented matching.}

A few papers model a decentralized matching process by which stable outcomes are created. \cite{roth1990random} propose simple dynamics of pairwise blocking.\footnote{\label{footnote_stabilization}Similar stabilization results were later obtained for roommate markets \citep*{chung2000existence, diamantoudi2004random, inarra2008random}, many-to-one markets with couples \citep{klaus2007paths}, many-to-many markets without contracts \citep{kojima2008random} and with contracts \citep{millan2018random}, markets with incomplete information \citep{lazarova2017paths, chen2020learning}, and supply chain networks \citep{rudov2023decentralized}.}\textsuperscript{,}\footnote{Another stream of the literature models market participants as farsighted, see \cite*{herings2020matching} and references therein.} Some recent studies impose additional structure on the matching process or markets to define a decentralized market game in which agents interact strategically; see \cite{haeringer2011decentralized}, \cite*{ferdowsian2023decentralized}, and references therein. They seek to identify conditions under which stable outcomes arise as equilibrium outcomes. We contribute to this literature by shifting the focus to the robustness of stability. As a by-product, we characterize markets in which the \cite{roth1990random} dynamics can yield any stable outcome.\footnote{The search and matching literature pursues a rather different approach to the emergence of stable matchings, in which frictions are modeled explicitly and play a crucial role (see the survey by \citealp*{chade2017sorting} and references therein). Both approaches are used in empirical work to estimate preferences in decentralized markets \citep{chiappori_econometrics_2016, chiappori2020theory}.}

Several experimental papers study frictionless decentralized markets with complete information \citep*{echenique2023experimental, pais2020decentralized}.\footnote{\cite*{pais2020decentralized} also study the impact of commitment, search costs, and limited information. \cite{agranov2023paying} show that incomplete information hinders stability in markets with transfers.} These studies examine convergence to stable matchings and their selection assuming that participants start with an empty matching in which no one is matched. Stable outcomes are prevalent in lab markets. In markets with multiple stable matchings, a non-extremal stable matching---which presents a compromise between the two sides of the market---emerges most frequently \citep*{echenique2023experimental}. For robustness purposes, we consider any starting matching. We show that even a minimal deviation from an extremal stable matching may lead to a completely different stable matching with substantial probability.

Although the convergence dynamics and the selection of stable matchings have recently received attention in the theoretical literature, mostly in computer science, they remain largely understudied. \cite{ackermann2011uncoordinated} construct a particular sequence of markets, with an increasing number of stable matchings, and the corresponding sequence of starting matchings to illustrate an exponential lower bound for the convergence time assuming uniform \cite{roth1990random}-type random dynamics. They also identify a class of markets for which the convergence is polynomial.\footnote{Later, \cite*{hoffman2013jealousy} expanded this class by using a graph-theoretic approach.} Motivated by fragility, we extend their techniques to prove that, for a large class of markets with a unique stable matching, small deviations from stability lead to the exponential stabilization under a wide range of dynamics. We consider a broader class of markets for which the stabilization is polynomial. Moreover, we examine how the stabilization dynamics affect market participants. \cite{biro2013analysis} connect the dynamics to Markov chains and show how to calculate the absorption probabilities of various stable matchings.\footnote{See also \cite{boudreau2011note, boudreau2012exploration}.} We use this connection to quantify the fragility of stable matchings in smaller markets. More generally, the current paper suggests that fragments might be a key determinant of the dynamics. 

Several studies design re-equilibration mechanisms for labor markets in which matchings are destabilized by retirements or new entries. \cite*{blum1997vacancy} propose a procedure, closely related to the deferred acceptance algorithm, that regains stability in case of such disruptions. \cite{combe2022market} introduce static mechanisms involving the assignment of new workers and reassignment of existing ones to rebalance markets that suffer from distributional issues---in their empirical application, the unequal distribution of experienced public school teachers across regions of France.\footnote{See \cite*{combe2022design} and references in these two papers.} Our results indicate that promptly addressing other instabilities might be just as important. Since all such issues are dynamic in their nature, they may require the design of dynamic centralized clearinghouses.\footnote{\cite*{akbarpour2020thickness} study dynamic mechanisms for networked markets in which agents enter or exit stochastically over time. See \cite{baccara2023dynamic} for a review of the recent developments in dynamic matching.}

\section{The Model}
\label{Model}

\subsection{Basic Definitions}

A \textit{matching market} is a triplet $\mathcal{M}=(F,W,\succ)$
composed of a finite set of firms $F=\{f_i\}_{i\in[n]}$, where $[n] = \{1,2,\ldots,n\}$, a finite set of workers $W=\{w_j\}_{j \in [m]}$, and a profile of strict preferences $\succ=\{\succ_{f_{i}},\succ_{w_{j}}\}_{(i,j)\in [n] \times [m]}$.\footnote{For convenience, we label two sides as firms and workers. In practice, they may correspond to doctors and hospitals, students and colleges, teachers and schools, men and women, actual firms and workers for labor markets with fixed wages (e.g., government and union jobs, see \citealp{hall2012evidence}), or any other matching setting with limited or no transfers at all.} That is, each firm $f_i$ is endowed with a strict preference relation $\succ_{f_{i}}$ over workers and being single, i.e., $W\cup\{f_i\}$. Similarly, each worker $w_j$  is endowed with a strict preference relation $\succ_{w_{j}}$ over $F \cup \{w_j\}$. Throughout, we focus on markets where all worker-firm pairs are mutually acceptable, i.e., $w_j \succ_{f_i} f_i$ and $f_i \succ_{w_j} w_j$ for all $i$, $j$.\footnote{Our main results do not rely on all agents being acceptable and can be extended.}

A \textit{matching} is a one-to-one function $\mu:F \cup W \to F \cup W$ that \begin{inparaenum}[(i)]
\item assigns to each firm $f_i$ either a worker or herself, $\mu(f_i) \in W\cup\{f_i\}$; 
\item assigns to each worker $w_j$ either a firm or himself, $\mu(w_j) \in F \cup \{w_j\}$; 
\item is of order two, i.e, $\mu(f_i)=w_j$ if and only if $\mu(w_j)=f_i$.
\end{inparaenum} 
For a given matching $\mu$, a pair $(f_i,w_j)$ is said to form a \textit{blocking pair} if firm $f_i$ and worker $w_j$ are not matched to one another at $\mu$, but prefer each other over their assigned partners, i.e., $w_j \succ_{f_i} \mu(f_i)$ and $f_i \succ_{w_j} \mu(w_j)$. A matching is \textit{stable} if it has no blocking pairs.

\subsection{Decentralized Process} \label{section_dynamics} A folk argument suggests that if an unstable outcome is realized, decentralized interactions eventually attain stability. The leading model of such a decentralized process is \cite{roth1990random}. They show that for any unstable matching, there is a finite sequence of successive matches arising from blocking pairs that generates a stable matching. Hence, the decentralized process of allowing randomly-chosen blocking pairs to match yields stability with certainty. An advantage of this process is that convergence to stability is ensured even when agents have limited sophistication and know only local information about others' preferences.\footnote{Experimental studies suggest that subjects have a limited degree of sophistication even in relatively small decentralized markets \citep*{echenique2023experimental, pais2020decentralized}.} Theoretically, these dynamics are employed to study the formation of networks \citep{jackson2002evolution}, to justify stability as a suitable identification assumption for estimating preferences in decentralized matching markets \citep{boyd2013analyzing}, to provide the decentralized foundation of incomplete-information stability \citep{chen2020learning}, and more. 

Formally, let matching $\lambda$ be unstable. For any blocking pair $(f_i,w_j)$ of $\lambda$, a matching  $\mu$ is \textit{obtained from} $\lambda$ \textit{by satisfying} $(f_i,w_j)$ if
\begin{inparaenum}[(i)]
\item $(f_i,w_j)$ are matched to each other in $\mu$;
\item their partners at $\lambda$, if any, are unmatched at $\mu$;
\item and all other agents are matched identically under both $\mu$ and $\lambda$. 
\end{inparaenum}
In other words, we match agents inside the blocking pair, divorce their original partners, and maintain matches of all other agents. We say that there is a \textit{path} (of \textit{length} $k-1$) from matching $\lambda$ to matching $\mu$ if there exists a sequence of matchings $\lambda_1, \lambda_2, \ldots, \lambda_k$, such that $\lambda=\lambda_{1}$, $\mu=\lambda_k$, and for each $i<k$, $\lambda_{i+1}$ is obtained from $\lambda_i$ by satisfying a blocking pair. In that case, we say matching $\lambda$ can \textit{reach} or \textit{attain} matching $\mu$.\footnote{This process originates from \cite{knuth1976marriages}. Knuth assumes an equal number of firms and workers and requires divorced agents to match when satisfying a blocking pair. He showed that such a process may cycle and raised a question if we can always find a path to stability. More than a decade later, \cite{roth1990random} provided a positive answer for the setting used in the current paper. In Knuth's setting, though, sometimes it is impossible to attain stability \citep{tamura1993transformation, tan1995divorce}.}

\begin{namedthm*}{Theorem {\normalfont \citep{roth1990random}}}
\label{theorem_roth_vande_vate}
For any unstable matching, there is a finite sequence of blocking pairs that leads to a stable matching.
\end{namedthm*}

Consider a random decentralized process that starts at matching $\lambda=\lambda_1$ and successively produces matchings $\lambda_i$ in such a way that for any $i>1$, matching $\lambda_{i+1}$ is obtained from $\lambda_i$ by satisfying a single blocking pair, chosen randomly from all possible blocking pairs for $\lambda_i$. Additionally, suppose that at each step, for any two blocking pairs, one blocking pair is at most $\kappa \geq 1$ times as likely to be satisfied than the other.\footnote{Equivalently, for any unstable matching, whenever it arises, each of its blocking pairs is satisfied with a probability bounded away from zero as $i$ goes to infinity; see footnote 8 in \cite{roth1990random}.} We refer to such a random process as \textit{$\kappa$-random dynamics} for brevity. Then,

\begin{namedthm*}{Corollary {\normalfont \citep{roth1990random}}}
\label{corollary_roth_vande_vate}
For any unstable matching, the sequence of blocking pairs, produced by $\kappa$-random dynamics, converges to a stable matching with certainty.
\end{namedthm*}

The example below illustrates the dynamics of blocking-pair formation in a simple market.

\begin{example}
\label{motivating_ex_1}
Consider a market with two firms and two workers. The following payoff matrix describes the agents' ordinal preferences in a convenient way:
{\setlength\abovedisplayskip{5pt}
\setlength\belowdisplayskip{10pt}\begin{equation*}
\bordermatrix{~ &  w_1 &  w_2\cr
               f_1 &  2,1 & 1,2 \cr
               f_2 & 1,2 & 2,1\cr}.
\end{equation*}}

\noindent In this matrix notation, \begin{inparaenum}[(i)]
\item each row $i$ corresponds to firm $f_i$ and each column $j$ corresponds to worker $w_j$;
\item every $(i,j)$-entry specifies payoffs for $f_i$ and $w_j$, respectively, if they are matched with each other;
\item larger payoffs correspond to more preferred partners;
\item and all payoffs from being unmatched are normalized to zero. For instance, $f_2$ finds all workers acceptable and prefers $w_2$ to $w_1$.
\end{inparaenum}

There are two stable matchings. Matching $\mu_F$ in which agents with identical indices are matched, $\mu_F(f_i)=w_i$, is the firm-optimal stable matching: each firm weakly prefers her assigned partner to the partner she would get in any other stable matching. The worker-optimal stable matching $\mu_W$ pairs agents with different indices, $\mu_W(f_i)=w_j$ for $i = 3 - j$.\footnote{See Appendix~\hyperref[appendix-a]{A} for further details on the structure of stable matchings.}

It is easy to verify that starting from any unstable matching, the decentralized process can attain any stable matching. For instance, consider unstable matching $\lambda$ obtained from the firm-optimal stable matching $\mu_F$ by unmatching firm $f_2$ with her stable partner, worker $w_2$; that is, $\lambda(f_1)=w_1$ and $\lambda(f_2)=f_2$. Then, there exists a path, of length two, from this ``almost firm-optimal stable'' matching $\lambda$ to the worker-optimal stable matching $\mu_W$:
{\setlength\abovedisplayskip{10pt}
\setlength\belowdisplayskip{5pt}\begin{equation*}
        \lambda=\lambda_1 \xrightarrow[(f_2,w_1)]{} \lambda_2 = (\mkern-4mu\stackon{{\color{white},}f_2,}{\stackon{\mid}{w_1}}\, \stackon{{\color{white},}w_2{\color{white}f\mkern-5mu}}{\stackon{\mid}{w_2}}\mkern-5mu) \xrightarrow[(f_1,w_2)]{} \lambda_3 =\mu_W,
    \end{equation*}}
    
\noindent where under each transition arrow, we specify the corresponding blocking pair to be satisfied.

Suppose that at each step, a blocking pair is selected uniformly at random; this corresponds to $1$-random dynamics. Then, $\lambda$ converges to $\mu_W$  with probability
{\setlength\abovedisplayskip{5pt}
\setlength\belowdisplayskip{8pt}\begin{equation*}
     p \; = \underbrace{1/2}_{(f_2,w_2)} \times \;\; 0 \; + \underbrace{1/2}_{(f_2,w_1)} \times \;\; (1-p) \quad \Longrightarrow \quad p = 1/3.
\end{equation*}}

\noindent Indeed, there are two equally-likely blocking pairs for matching $\lambda$, $(f_2,w_2)$ and $(f_2,w_1)$. By satisfying pair $(f_2,w_2)$, we end up in the firm-optimal stable matching $\mu_F$; see the first term. If we satisfy blocking pair $(f_2,w_1)$, we obtain ``almost worker-optimal stable'' matching $\lambda_2$. It converges to $\mu_W$ with probability $1-p$ by symmetry; see the second term.\demo
\end{example}

In our paper, we also explore a variation of the \cite{roth1990random} dynamics, wherein agents successively select their \textit{best} blocking partners. Specifically, a blocking pair is the \textit{best blocking pair} for an agent if she prefers her partner in this pair to any other partner with whom she could form a blocking pair. Analogously to $\kappa$-random dynamics, we define \textit{$\kappa$-random best dynamics} by substituting ``blocking pairs'' with ``\textit{best} blocking pairs'' in the definition of $\kappa$-random dynamics. In fact, due to \cite{ackermann2011uncoordinated}, for any unstable matching, there exists a finite sequence of best blocking pairs that leads to a stable matching. Consequently, results similar to the above-mentioned theorem and corollary hold when we restrict attention to best blocking pairs.

Both versions of random dynamics particularly allow for non-uniform and time-dependent formation of (best) blocking pairs. The probability that a specific blocking pair of agents matches at a given time can be quite general, reflecting factors such as the likelihood that these agents would encounter each other \citep{roth1990random}. Alternatively, the probability may depend on the agents' incentives to form a match. One may assume, for instance, that agents have cardinal payoffs, and a blocking pair generating a higher total surplus---or a higher total surplus gain compared to the former match---is more likely to be formed. We present our main results under these broad classes of decentralized dynamics.

\section{Anything Goes}
\label{section_characterization}
In this section, we prove that, under mild conditions, starting from \textit{any} unstable matching, decentralized interactions can attain \textit{any} stable matching. For simplicity, we consider balanced markets, with an equal number $n$ of agents on both sides, for the rest of the paper.\footnote{All definitions and results can be generalized in a straightforward manner to imbalanced markets.} 

Formally, consider a subset of firms and a subset of workers of equal size $k<n$. They form a submarket with preferences inherited from the original market. Suppose that this submarket has a stable matching such that any agent---firm or worker---inside the submarket prefers her stable partner under that matching to every agent outside the submarket. In this case, we say that the above two subsets constitute a \textit{fragment}. A fragment is \textit{trivial} if all stable matchings in the original market agree on how they match agents inside the fragment.

In addition, we say that a matching is \textit{almost stable} if it has a path of length one to a stable matching; that is, it is just one blocking pair away from stability.\footnote{This concept is partially related to \textit{almost stable matchings} of \cite*{abraham2005almost} and \textit{minimally unstable matchings} of \cite{dougan2021minimally}. These papers consider settings in which a stable matching may not exist and focus on matchings that are ``as stable as possible.''} Notably, the theorem below holds even when we restrict attention to best blocking pairs.

\begin{theorem}
\label{theorem_characterization}
The following three statements are equivalent:
\begin{enumerate}[(i)]
    \item for any unstable matching $\lambda$ and any stable matching $\mu$, there exists a finite sequence of (best) blocking pairs that leads from $\lambda$ to $\mu$;
    \item for any almost stable matching $\lambda$ and any stable matching $\mu$, there exists a finite sequence of (best) blocking pairs that leads from $\lambda$ to $\mu$;
    \item there are no non-trivial fragments.
\end{enumerate}
\end{theorem}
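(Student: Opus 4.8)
The plan is to prove the cycle of implications $(i)\Rightarrow(ii)\Rightarrow(iii)\Rightarrow(i)$. The implication $(i)\Rightarrow(ii)$ is immediate: an almost stable matching is one blocking pair away from a stable matching, hence is itself unstable, so it is a special case of the matchings quantified over in $(i)$.

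For $(ii)\Rightarrow(iii)$ I would argue the contrapositive: given a non-trivial fragment, exhibit an almost stable $\lambda$ and a stable $\mu$ with no path $\lambda\to\mu$. Let $S=F_S\cup W_S$ be the fragment with witnessing internal matching $\nu$ (every insider prefers her $\nu$-partner to every outsider), and let $O$ be the complementary submarket. The first step is to observe that for any stable matching $\rho$ of $O$, the matching $\nu\cup\rho$ is stable in the whole market: there are no blocking pairs inside $S$ (since $\nu$ is stable in $S$), none inside $O$ (since $\rho$ is stable in $O$), and no insider--outsider blocking pair (an insider strictly prefers her $\nu$-partner to every outsider). Pick such a $\mu^\star=\nu\cup\rho$ and form $\lambda$ from $\mu^\star$ by dissolving a single matched pair inside $O$; then $\lambda$ is almost stable (one step back to $\mu^\star$) and agrees with $\nu$ on $S$. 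The key step is the \emph{freezing} claim: along every path out of $\lambda$ the insiders remain matched exactly as in $\nu$, because an insider can never belong to a blocking pair---she prefers her $\nu$-partner to every outsider, and the insiders always form the stable configuration $\nu$ among themselves---so every satisfiable blocking pair lies inside $O$ and leaves $S$ untouched. Non-triviality supplies a stable matching $\mu$ that matches some insider differently from $\nu$; since that insider is frozen at her $\nu$-partner, $\mu$ is unreachable from $\lambda$. As this argument forbids all paths, it rules out best-blocking paths a fortiori.

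The substance of the theorem is $(iii)\Rightarrow(i)$, which I would prove by contradiction through an extremal argument. Fix an unstable $\lambda$ and a stable $\mu$, assume $\mu$ is not reachable from $\lambda$, and among all matchings reachable from $\lambda$ choose $\lambda^\star$ maximizing the agreement $|\{a:\lambda^\star(a)=\mu(a)\}|$ with $\mu$; let $D$ be the set of agents on which $\lambda^\star$ and $\mu$ disagree, which is closed under both $\mu$ and $\lambda^\star$. The first observation is that no \emph{aligning move} can be available: if some firm $f\in D$ had $(f,\mu(f))$ as a blocking pair of $\lambda^\star$, satisfying it would match both $f$ and $\mu(f)$ as in $\mu$ while only dissolving partners already lying in $D$, strictly raising the agreement and contradicting maximality. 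The plan is then to convert this universal failure of aligning moves---sharpened by using that \emph{no} reachable matching beats $\lambda^\star$, not merely no single move---into the existence of a non-trivial fragment sitting inside $D$, witnessed by the restriction of $\mu$ to an appropriate $\mu$-closed, preference-closed subset $T\subseteq D$ on which every insider prefers her $\mu$-partner to every outsider. Non-triviality would follow because $\lambda^\star$ certifies that $T$ is matched in a non-$\mu$ way reachably, forcing a second stable matching disagreeing with $\mu$ on $T$; this contradicts $(iii)$.

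The hard part is exactly this extraction: identifying the correct trapped subset $T$ (a proper subset, to respect the size constraint $k<n$, which also disposes of the degenerate case where $\lambda^\star$ agrees with $\mu$ nowhere), selecting the matching that witnesses the fragment property, and proving non-triviality rather than merely exhibiting one internal configuration. I expect this to require a closure construction---starting from a disagreeing agent and repeatedly adjoining the partners that blocked-ness forces to be preferred inward---together with the lattice structure of stable matchings to produce the second stable matching. A final wrinkle is the parenthetical ``(best)'': for the best-blocking version of $(iii)\Rightarrow(i)$ the aligning move $(f,\mu(f))$ need not be $f$'s \emph{best} blocking pair, so I would process the alignment in an order (e.g.\ along the firms' preferences or by eliminating rotations toward $\mu$) guaranteeing that the targeted partner is the current best blocking partner when satisfied.
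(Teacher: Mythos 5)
Your two ``easy'' implications are sound. $(i)\Rightarrow(ii)$ is immediate, and your $(ii)\Rightarrow(iii)$ is correct and complete: the observation that $\nu\cup\rho$ is stable in the full market (no insider--insider, outsider--outsider, or insider--outsider block) is exactly what is needed to manufacture an \emph{almost stable} matching agreeing with the inducing matching $\nu$, and the freezing argument is the same one the paper uses. Note that the paper proves the cycle in the other orientation, $(ii)\Rightarrow(i)\Rightarrow(iii)\Rightarrow(ii)$, so it only needs an arbitrary unstable matching agreeing with $\nu$ in this step and gets $(ii)\Rightarrow(i)$ from the Roth--Vande Vate/Ackermann et al.\ stabilization theorem; your extra construction correctly compensates for your different cycle, which instead forces you to prove the strongest implication $(iii)\Rightarrow(i)$ directly.

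That last implication is the entire substance of the theorem, and you do not prove it: after setting up the extremal reachable matching $\lambda^\star$ and correctly noting that no aligning move can exist, you explicitly defer the conversion of this fact into a non-trivial fragment (``the hard part is exactly this extraction''), which is precisely where all the difficulty lies. Worse, the concrete shape you propose for the witness is refuted by the paper's own Example~\ref{motivating_ex_2}. There, take $\lambda=(f_1,f_2,w_3)$ and target $\mu=\mu_W$: the only matchings reachable from $\lambda$ are $\lambda$ and $\mu_F$, so $\lambda^\star=\mu_F$ and $D=\{f_1,f_2,w_1,w_2\}$; the non-trivial fragment is indeed $D$, but its \emph{unique} inducing matching is $\mu_F|_D=\lambda^\star|_D$, and no subset $T\subseteq D$ is a fragment witnessed by $\mu_W|_T$ (e.g.\ $f_1$ prefers the outsider $w_3$ to $\mu_W(f_1)=w_2$, and likewise $f_2$). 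So a fragment ``on which every insider prefers her $\mu$-partner to every outsider'' simply need not exist when $\mu$ is unreachable: the witness must be extracted from the trapped \emph{reachable} configuration, not from the unreachable target, and there is no a priori reason $\lambda^\star|_T$ is internally stable and outward-dominant. Closing this gap is what the paper's machinery does: an induction on market size, a case analysis showing that an almost stable matching of $\mu$ can reach any other almost stable matching of $\mu$ unless a trivial fragment lets the induction hypothesis apply (using Lemma~\ref{lemma_submarket}), and an emulation of McVitie--Wilson breakmarriage operations by \emph{best} blocking pairs (Lemma~\ref{lemma_breakmarriage}), combined with the McVitie--Wilson theorem that successive breakmarriages generate all stable matchings. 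Your parenthetical fix for the best-pair version is likewise only gestured at, whereas the paper gets it for free because the breakmarriage emulation already uses best blocking pairs. As it stands, you have two of the three implications; the main one is open.
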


This result shows one type of fragility of stable matchings. Small perturbations of a stable matching may yield \textit{any} stable matching, close to or distant from the perturbed matching. 

\begin{corollary}
\label{corollary_characterization}
Suppose that there are no non-trivial fragments. For any unstable matching, the sequence of (best) blocking pairs, produced by $\kappa$-random (best) dynamics, converges to any stable matching with positive probability.
\end{corollary}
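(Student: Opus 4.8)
To prove Corollary~\ref{corollary_characterization}, the plan is to read it off directly from Theorem~\ref{theorem_characterization}, converting the deterministic reachability statement into a probabilistic one. Fix the (unstable) starting matching $\lambda$ and an arbitrary stable matching $\mu$. Since there are no non-trivial fragments, the implication (iii)$\Rightarrow$(i) of Theorem~\ref{theorem_characterization} supplies a finite sequence of (best) blocking pairs tracing a path $\lambda=\lambda_1\to\lambda_2\to\cdots\to\lambda_L=\mu$, where each $\lambda_{s+1}$ is obtained from $\lambda_s$ by satisfying a designated (best) blocking pair $b_s$. Each intermediate $\lambda_s$ with $s<L$ is unstable, since it admits the blocking pair $b_s$, while $\lambda_L=\mu$ is stable and hence absorbing under the dynamics (it has no blocking pairs to satisfy).

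The core step is a uniform lower bound on the chance that the $\kappa$-random (best) dynamics follows this prescribed path. At any matching there are at most $n^2$ blocking pairs (and at most $2n$ best blocking pairs, since each agent has a unique best blocking partner under strict preferences), so the set of available (best) blocking pairs is always finite. The $\kappa$-boundedness assumption then forces each available (best) blocking pair to be selected with probability at least $q:=1/(\kappa n^2)>0$, a constant independent of the step. Crucially, at each $\lambda_s$ the prescribed pair $b_s$ is genuinely available to the dynamics: in the general case $b_s$ is a blocking pair of $\lambda_s$ by construction of the path, and in the best-blocking case the path provided by the best-blocking reachability result consists of best blocking pairs, so $b_s$ is a best blocking pair of $\lambda_s$. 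By the Markov property, the probability that the first $L-1$ transitions are exactly $b_1,\dots,b_{L-1}$, so that the process visits $\lambda_1,\dots,\lambda_L$ in order, is at least $q^{\,L-1}>0$.

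Finally, I would observe that once the process reaches $\mu$ it stays there, because $\mu$ has no blocking pairs; hence the event that the first $L-1$ steps trace the path is contained in the event that the dynamics converges to (is absorbed at) $\mu$. This yields $\Pr[\text{limit}=\mu]\ge q^{\,L-1}>0$, and since $\mu$ was an arbitrary stable matching, the dynamics converges to any stable matching with positive probability. The best-blocking variant is covered verbatim, using the best-blocking version of the reachability result (due to \citealp{ackermann2011uncoordinated}) for the path and the best-dynamics selection bound.

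Given Theorem~\ref{theorem_characterization}, there is no substantive obstacle here; the statement is essentially a probabilistic restatement of part (i). The only points requiring care are the bookkeeping behind the uniform positive lower bound $q$ on per-step selection probabilities, namely verifying that the number of (best) blocking pairs is bounded so that $\kappa$-boundedness yields a step-independent constant, and the identification of ``convergence to $\mu$'' with absorption at $\mu$. The latter is legitimate because the stable matchings are exactly the absorbing states of the dynamics, and the Roth--Vande Vate corollary guarantees almost-sure absorption within that set, so that the limit of the sequence is well defined and equals whichever stable matching the process first hits.
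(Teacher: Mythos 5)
Your proof is correct and is essentially the argument the paper intends: the corollary is stated as an immediate consequence of Theorem~\ref{theorem_characterization}, obtained exactly as you do by taking the finite path of (best) blocking pairs from (iii)$\Rightarrow$(i), lower-bounding each prescribed transition probability by $1/(\kappa n^2)$ (resp.\ $1/(2\kappa n)$ for best dynamics) via the $\kappa$-boundedness condition, and noting that the stable matching reached is absorbing. Your bookkeeping on the number of (best) blocking pairs and the identification of convergence with absorption are the right details to check, and they hold.
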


This corollary is further refined in Sections~\ref{section_exponential} and~\ref{section_experiments}, illustrating that decentralized interactions often lead the market away from a slightly perturbed stable matching, with no guarantee of returning or even converging to a close, similar stable matching.

In what follows, we discuss fragments in more detail and elucidate their role in constraining decentralized interactions. We argue that the absence of non-trivial fragments is a mild condition. We then present a sketch of the proof of the theorem.

\subsection{Fragments}
\label{subsection_fragments}
Consider subsets $\bar{F} \subsetneq F$ and $\bar{W} \subsetneq W$ of equal size $k<n$. Let $\bar{\mathcal{M}}$ be the market induced by the original market  $\mathcal{M}$ when restricted to $\bar{F} \times \bar{W}$. Firms $\bar{F}$ and workers $\bar{W}$ form a \textit{fragment}, of \textit{size} $k$, in $\mathcal{M}$ if there exists a stable matching $\bar{\mu}: \bar{F} \cup \bar{W} \to \bar{F} \cup \bar{W}$ for $\bar{\mathcal{M}}$ such that 
\begin{inparaenum}[(i)]
\item for any $\bar{f} \in \bar{F}$ and $w \notin \bar{W}$, $\bar{\mu}(\bar{f}) \succ_{\bar{f}} w$, 
\item and for any $\bar{w} \in \bar{W}$ and $f \notin \bar{F}$, $\bar{\mu}(\bar{w}) \succ_{\bar{w}} f$.
\end{inparaenum}
In this case, we say that matching $\bar{\mu}$ \textit{induces} fragment $(\bar{F}, \bar{W})$.\footnote{A market may have multiple fragments: nested, overlapping, or disjoint. In addition, a fragment might be induced by multiple matchings; see Example 3 in the Online Appendix. For any inducing matching, we can find a stable matching in the original market that agrees with that inducing matching over its corresponding fragment, as shown by Lemma 4 in the Online Appendix. The converse is not true. That is, a stable matching in the original market might disagree with all matchings inducing a fragment---in Example~\ref{motivating_ex_2} below, the worker-optimal stable matching $\mu_W$ disagrees with the unique inducing matching $\bar{\mu}$.} 

Fragments have the following defining property: any stable matching in the original market must match agents inside a fragment within the fragment. That is why we call them ``fragments.'' Nonetheless, stable matchings may still disagree with each other, and with all inducing matchings, on how they match agents inside the fragment. 

\begin{lemma}
\label{lemma_definition}
Suppose that firms  $\bar{F}$ and workers $\bar{W}$ form a fragment. Then, $\mu(\bar{F})=\bar{W}$ for for any stable matching $\mu$ in the original market.
\end{lemma}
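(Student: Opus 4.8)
The plan is to fix an arbitrary stable matching $\mu$ of the original market and show $\mu(\bar F)\subseteq\bar W$; since $|\bar F|=|\bar W|=k$ and $\mu$ restricted to $\bar F$ is injective, this inclusion already forces $\mu(\bar F)=\bar W$ (and in particular no fragment firm is left single). So I would argue by contradiction, assuming some firm $\bar f_0\in\bar F$ has $\mu(\bar f_0)\notin\bar W$, i.e. $\bar f_0$ is matched outside the fragment or single. Throughout I let $\bar\mu$ be a matching inducing the fragment, so that every fragment firm prefers her $\bar\mu$-partner to every outside worker (and to being single), and symmetrically for fragment workers.

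Next I would build a deterministic alternating walk through the fragment, following $\bar\mu$ on the firm side and $\mu$ on the worker side. Starting from $\bar f_0$, set $\bar w_0=\bar\mu(\bar f_0)$. The fragment property gives $\bar w_0\succ_{\bar f_0}\mu(\bar f_0)$, so stability of $\mu$ rules out the blocking pair $(\bar f_0,\bar w_0)$ and yields $\mu(\bar w_0)\succ_{\bar w_0}\bar f_0$; since $\bar w_0$ prefers her fragment partner $\bar f_0$ to every firm outside $\bar F$, this forces $\bar f_1:=\mu(\bar w_0)\in\bar F$ with $\bar f_1\succ_{\bar w_0}\bar f_0$. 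I would then iterate: having reached $\bar f_t\in\bar F$ with $\mu(\bar f_t)=\bar w_{t-1}$ and $\bar f_t\succ_{\bar w_{t-1}}\bar f_{t-1}=\bar\mu(\bar w_{t-1})$, stability of $\bar\mu$ (no block $(\bar f_t,\bar w_{t-1})$) gives $\bar w_t:=\bar\mu(\bar f_t)\succ_{\bar f_t}\bar w_{t-1}=\mu(\bar f_t)$, and then stability of $\mu$ together with the fragment property for $\bar w_t$ forces $\bar f_{t+1}:=\mu(\bar w_t)\in\bar F$. Thus the walk, governed by the map $\Phi=\mu\circ\bar\mu$, never leaves $\bar F$ and never stalls.

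Finally I would close by a finiteness/injectivity observation. Since $\bar\mu$ and $\mu$ are injective, $\Phi$ is injective on the orbit, which lives in the finite set $\bar F$, so some value repeats; taking the first repetition and pushing it backward through $\Phi^{-1}$ shows the repeat must occur already at $\bar f_0$, i.e. $\bar f_0\in\mathrm{Im}\,\Phi$. But $\bar f_0=\Phi(\bar f)$ would mean $\mu(\bar f_0)=\bar\mu(\bar f)\in\bar W$, contradicting $\mu(\bar f_0)\notin\bar W$. Hence no such $\bar f_0$ exists and $\mu(\bar F)=\bar W$.

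I expect the main obstacle to be exactly this last step: unlike a typical ``exhibit a blocking pair'' argument, the walk does \emph{not} yield an immediate contradiction, because a closed cycle on which $\mu$ and $\bar\mu$ differ by a rotation (firms preferring their $\bar\mu$-partners, workers preferring their $\mu$-partners) is perfectly compatible with stability of both matchings. The resolution is to observe that the walk is generated by an \emph{injective} map and that $\bar f_0$, being matched outside the fragment, has no $\Phi$-preimage; this is what prevents the orbit from closing up. The other point requiring care is keeping the two roles of stability distinct—$\bar\mu$'s stability to propagate the firm-side preference, and $\mu$'s stability together with the fragment property to keep the walk inside $\bar F$.
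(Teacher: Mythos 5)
Correct, and essentially the paper's own proof viewed from the opposite side of the market: the paper starts from a worker $\bar w\in\bar W$ with $\mu(\bar w)\notin\bar F$ and builds the same alternating $\bar\mu$/$\mu$ chain (using stability of $\bar\mu$ in the submarket, stability of $\mu$ in the full market, and the fragment preference property in exactly the roles you assign them), concluding by explicit relabeling and pigeonhole---after $k$ steps the chain would need a $(k+1)$-st distinct worker inside $\bar W$. Your injective-orbit formulation of the finiteness step ($\bar f_0$ has no $\Phi$-preimage) is just a repackaging of the paper's distinctness bookkeeping, which likewise rests on the starting agent being matched outside the fragment.
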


Intuitively, consider the fragment induced by matching $\bar{\mu}(f_i)=w_i$, $i \leq k < n$. Assume by contradiction that, say, worker $w_1$ is not matched within the fragment for some stable matching $\mu$. Then, he prefers his partner under $\bar{\mu}$, firm $f_1$. By the stability of $\mu$, firm $f_1$ prefers her partner under $\mu$ to worker $w_1$. Therefore, by the definition of a fragment, $f_1$ must be matched within the fragment under $\mu$, say, with worker $w_2$. By the stability of $\bar{\mu}$ for the fragment, $w_2$ prefers his stable partner under $\bar{\mu}$, firm $f_2$, to firm $f_1$. This implies that $f_2$ prefers her partner under $\mu$ to worker $w_2$, again by the stability of $\mu$. Hence, $f_2$ also must be matched within the fragment under $\mu$, say, with worker $w_3$. We can proceed iteratively until we reach the last firm, firm $f_k$, that cannot be matched within the fragment since all workers inside the fragment are already matched with other firms. This yields a contradiction.

A fragment is called \textit{trivial} if all stable matchings in the original market agree on how they match agents within the fragment; this is more restrictive than the conclusion of Lemma~\ref{lemma_definition}. If a fragment is trivial, it has the unique inducing matching, and all stable matchings in the original market coincide with that inducing matching on the fragment. For instance, a firm-worker pair that are each other's favorite partner constitutes a trivial fragment of size one. This pair is called a \textit{top-top match}. Furthermore, a sequence of $k < n$ such pairs---where each new top-top pair in the sequence is obtained from the submarket after removing all previous top-top pairs from the original market---forms a trivial fragment of size $k$. Nonetheless, trivial fragments are not limited to such sequences, see Example 4 in the Online Appendix. The following example shows how a \textit{non-trivial} fragment might restrain the dynamics.

\begin{example}
\label{motivating_ex_2}
Consider a market with three firms and three workers
{\setlength\abovedisplayskip{5pt}
\setlength\belowdisplayskip{10pt}
\begin{equation*}
\bordermatrix{~ & \color{orange} w_1 &  \color{orange} w_2 & w_3\cr
 \color{orange} f_1 &  \color{orange} \bf 3,2 &  \color{orange} 1,3 & 2,1  \cr
 \color{orange} f_2 &  \color{orange} 1,3 &  \color{orange} \bf 3,2 & 2,2 \cr
f_3 & 3,1 & 2,1 & 1,3 \cr}
\end{equation*}}

\noindent that has two stable matchings, $\mu_F = (f_1,f_2,f_3)$ and $\mu_W = (f_2, f_1, f_3)$. As a side note, for markets with more than two workers, we use a compact matching notation that specifies partners of workers $w_1$, $w_2$, and so forth.

Firms $\bar{F}=\{f_1,f_2\}$ and workers $\bar{W}=\{w_1,w_2\}$ form a fragment induced by matching $\bar{\mu}$ that couples firms $f_1$ and $f_2$ with workers $w_1$ and $w_2$, respectively. It is non-trivial since the worker-optimal stable matching $\mu_W$ disagrees with the inducing matching $\bar{\mu}$ on the fragment.

In contrast to Example~\ref{motivating_ex_1}, some unstable matchings cannot attain all stable ones. Consider almost stable matching $\lambda=(f_1,f_2,w_3)$ that agrees with the inducing matching $\bar{\mu}$ on the fragment. Matching $\bar{\mu}$ traps the decentralized process. Indeed, since $\bar{\mu}$ is stable for the fragment, no agent inside it can form a blocking pair with anyone that is also inside. Also, by definition, every agent inside the fragment prefers her partner under $\bar{\mu}$ to every agent outside it. Altogether, no agent inside the fragment can form a blocking pair with anyone, neither inside nor outside it. Thus, the dynamics cannot reach any stable matching that disagrees with $\bar{\mu}$. Non-triviality ensures that such a stable matching---here, $\mu_W$---exists.\demo

\end{example}

We argue that a non-trivial fragment is demanding and most markets have no such fragments. A non-trivial fragment requires that \begin{inparaenum}[(i)]
\item agents inside the fragment are matched in a \textit{stable} way; 
\item they prefer their stable partners to \textit{everyone} outside;
\item and there is \textit{another stable} way to match these agents in the original market.
\end{inparaenum}
As an illustration, consider markets with uniformly random preferences. Numerical simulations show that, with high probability, there are no non-trivial fragments at all; we provide theoretical results regarding probabilistic aspects of fragments in the Online Appendix.\footnote{Fragments are a novel concept, and their analysis appears rather delicate. The Online Appendix presents results on the likelihood of fragments and discusses associated challenges. Specifically, we demonstrate that, with high probability, there are no large fragments at all, including trivial ones. We also introduce techniques, in the spirit of \cite{pittel1989average}, that may be potentially useful for future research.} Figure~\ref{figure_no_fragments} illustrates that, although the frequency of markets with non-trivial fragments is increasing for market sizes of up to $n=7$, this frequency never exceeds 20\% and eventually decreases with the market size.\footnote{We simulate $n\times n$ markets, the number of which varies from 1,000 for larger $n$ to 20,000 for smaller $n$.} Thus, for such markets, Theorem~\ref{theorem_characterization} holds almost generically.\footnote{At the same time, balanced markets with uniformly random preferences have many stable matchings---all of which are anticipated to be fragile---with significantly different welfare properties. Formally, \cite{pittel1989average} shows that, asymptotically, the expected number of stable matchings is $e^{-1}n\ln n$. Moreover, the firms' average ranks of their assigned workers under the firm-optimal ($\ln n$) and worker-optimal ($n/\ln n$) stable matchings are substantially different; analogous results hold for workers. The above exercise becomes less relevant for imbalanced markets with uniformly random preferences \citep*{ashlagi2017unbalanced}.}

\begin{figure}[h]
    \centering
    \includegraphics[width=0.5\linewidth]{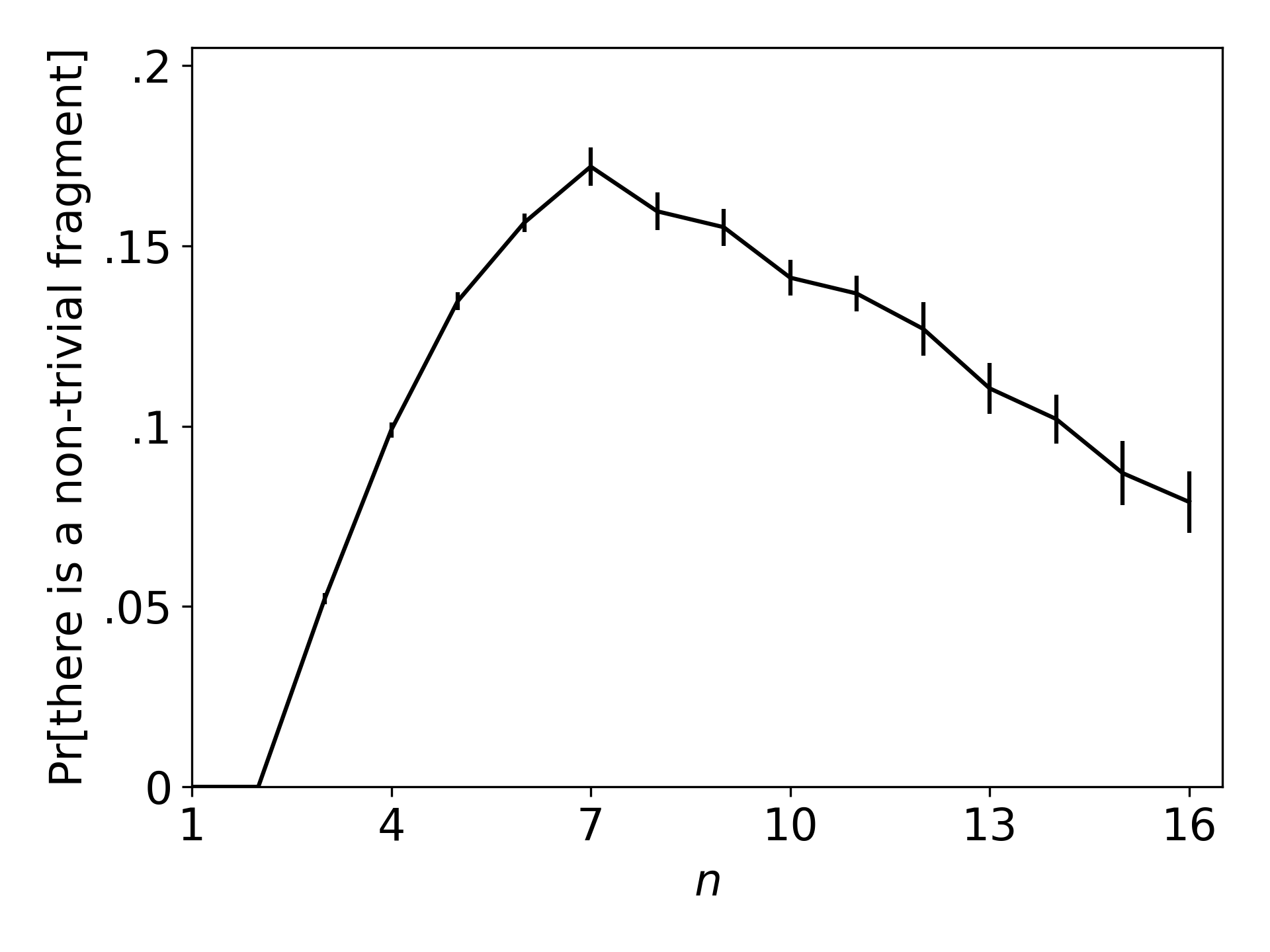}
    \captionsetup{justification=centering}
    \caption{Frequency of random markets with non-trivial fragments}
    \label{figure_no_fragments}
\end{figure}

Furthermore, even when non-trivial fragments are present, they typically restrain the dynamics only for a small fraction of specific unstable matchings. The great majority of unstable matchings can still attain many, if not all, stable matchings. 

\subsection{Sketch of the Proof}

In the proof of the theorem, we use one more piece of notation. Let $\mu_{-f}$ denote the almost stable matching obtained from stable matching $\mu$ by unmatching firm $f$ with her stable partner, worker $\mu(f)$.

We prove the theorem by establishing the following cycle of implications: $(ii) \Rightarrow (i) \Rightarrow (iii) \Rightarrow (ii)$. To avoid technicalities, we provide a sketch of the proof for the case when no restriction is placed on blocking pairs; the full proof is relegated to Appendix~\hyperref[appendix-b]{B}. The first implication $(ii) \Rightarrow (i)$ follows from the stabilization theorem of \cite{roth1990random}. Specifically, consider any unstable matching $\lambda$. By the stabilization result, there is a finite sequence of blocking pairs that leads to some stable matching $\mu$:
{\setlength\abovedisplayskip{5pt}
\setlength\belowdisplayskip{5pt}
\begin{equation*}
    \lambda=\lambda_1 \longrightarrow \lambda_2 \longrightarrow  \ldots \longrightarrow  {\color{orange}\lambda_{k-1}} \longrightarrow  \mu = \lambda_{k}.
\end{equation*}}Therefore, $\lambda$ can reach almost stable matching $\lambda_{k-1}$, from which in turn it can attain any stable matching by the premise $(ii)$. 

The second implication $(i) \Rightarrow (iii)$ holds by the definition of a non-trivial fragment. Suppose by contradiction that there is a non-trivial fragment. Consider an unstable matching that, when restricted to the fragment, agrees with its inducing matching. Then, this unstable matching cannot yield any stable matching that instead disagrees with the inducing matching over the fragment. Since the fragment is non-trivial, such a stable matching exists, thus contradicting the premise $(i)$; this argument is similar to Example~\ref{motivating_ex_2}. 

Finally, and most importantly, why can decentralized interactions lead to any stable matching if there are no non-trivial fragments? The third implication $(iii) \Rightarrow (ii)$ is the key part of the proof. We prove this implication by induction on the size $n$ of matching markets with no non-trivial fragments. The result is trivial for $n=1$, when there is only one agent on each side of the market. It also holds for such markets of size $n=2$. This is because it is valid for the market from Example~\ref{motivating_ex_1} that is the only $2\times 2$ market with multiple stable matchings, up to permutations of firms' and workers' indices.

To glean intuition behind the induction step, assume that the result holds for all markets with no non-trivial fragments of size up to $n-1$ and focus on any such market of size $n$. Consider any almost stable matching $\lambda$ which corresponds to some stable matching $\mu$. The proof consists of two main observations. The first observation is at the core of the proof and crucially relies on the properties of fragments as well as the induction hypothesis. We show that $\lambda$ can reach either any stable matching---as desired---or any other almost stable matching for stable matching $\mu$. Hence, suppose that any almost stable matching for $\mu$ can be attained. The second observation, which we also describe in more detail later in this section, is more technical. This observation connects our procedure to the deferred acceptance algorithm in order to construct a path from one of those almost stable matchings to a new almost stable matching corresponding to some new stable matching. We employ these two observations repeatedly until we find paths to all stable matchings.

To illustrate the first observation, consider any market with no non-trivial fragments of size $n=3$ and an arbitrary almost stable matching $\lambda$. Up to relabeling, almost stable $\lambda=\mu_{-f_3}$ is obtained from stable matching $\mu=(f_1,f_2,f_3)$ by unmatching firm $f_3$ with her stable partner, worker $w_3$. We show that $\mu_{-f_3}$ can attain either any stable matching or any other almost stable matching $\mu_{-f_i}$, $i < 3$.

Focus on the subsets $\bar{F}=\{f_1,f_2\}$ and $\bar{W}=\{w_1,w_2\}$ of firms and workers, respectively. Suppose first that these subsets form a fragment. This fragment must be trivial by the premise. Therefore, all stable matchings match $f_1$ with $w_1$ and $f_2$ with $w_2$. This implies that matching $\mu$ is the unique stable matching, so the result follows immediately, say, by the stabilization theorem. Henceforth, suppose that the subsets $\bar{F}$ and $\bar{W}$ do not constitute a fragment. Since stable matching $\mu$ remains stable when restricted to the submarket induced by these subsets, there is an agent inside these subsets that prefers someone outside to her stable partner under $\mu$. By symmetry, let firm $f_2$ prefer worker $w_3$ to her stable partner, worker $w_2$. By the stability of matching $\mu$ in the original market, $w_3$ must prefer his stable partner, firm $f_3$, to firm $f_2$. But then almost stable matching $\mu_{-f_2}$ can be attained by successively satisfying blocking pairs $(f_2,w_3)$ and $(f_3,w_3)$. This can be described by the following diagram in which solid circles represent initial matches:
\begin{figure}[htb]
    \centering
\begin{tikzpicture}
 \tikzstyle{main node} = [circle,draw,minimum size=0.4cm,inner sep=0pt, text centered, node distance=1cm]
  \tikzstyle{label node} = [circle,draw = white,minimum size=0.4cm,inner sep=0pt, align=right, node distance=1cm]
    \node[main node, label=above:{$w_1$}, label=left:{$f_1$}] (a1) {};
    \node[main node, color = white, right = of a1, label=above:{$w_2$}] (a2) {};
    \node[main node, color = white, right = of a2, label=above:{$w_3$}] (a3) {};
    \node[main node, color = white, below = of a1, label=left:{$f_2$}] (b1) {};
    \node[main node, color = orange, below = of a2] (b2) {};
    \node[main node, color = orange, dashed, below = of a3] (b3) {};
    \node[main node, color = white, below = of b1, label=left:{$f_3$}] (c1) {};
    \node[main node, color = white, below = of b2] (c2) {};
    \node[main node, color = orange, dashed, below = of b3] (c3) {};
    
    %\path (b2) -- node[label=above:{Step 1}] (pair_2) {} (b3);
    \draw[-{Latex[width=2mm]}, color = orange] (b2) -- (b3);
    %\path (b3) -- node[label={[rotate=270,text depth=1ex]Step 2}] (pair_2) {} (c3);
    \draw[-{Latex[width=2mm]}, color = orange] (b3) -- (c3);
\end{tikzpicture}
\end{figure}

\noindent Thus, $\mu_{-f_3}$ can attain the new almost stable matching $\mu_{-f_2}$.

We next consider smaller subsets of firms $\bar{F}=\{f_1\}$ and workers $\bar{W}=\{w_1\}$. Suppose first that they form a fragment. It must be trivial by the premise; in fact, top-top match $(f_1,w_1)$ is a trivial fragment by definition. Thus, all stable matchings match $f_1$ with $w_1$. As it turns out, after removing agents forming a trivial fragment from the original market, the remaining market---that consists of firms $F\setminus \bar{F}$ and workers $W\setminus\bar{W}$---continues to have no non-trivial fragments (Lemma~\ref{lemma_submarket} in Appendix~\hyperref[appendix-b]{B}). When restricted to the remaining market, $\mu_{-f_3}$ is unstable and, by the induction hypothesis, can reach any stable matching in this smaller market. Since all stable matchings in the original market agree on the removed trivial fragment and persist to be stable in the remaining market, matching $\mu_{-f_3}$ can reach all stable matchings in the original market, as desired. Henceforth, assume that $\bar{F}$ and $\bar{W}$ do not form a fragment. As before, by symmetry, let firm $f_1$ prefer some worker outside, $w_2$ or $w_3$, to her stable partner, worker $w_1$. In either of these two cases, the stability of matching $\mu$ ensures that matching $\mu_{-f_1}$ can be attained in two steps from either $\mu_{-f_2}$ or $\mu_{-f_3}$:

\begin{figure}[htb]
    \centering
\begin{tikzpicture}
 \tikzstyle{main node} = [circle,draw,minimum size=0.4cm,inner sep=0pt, text centered, node distance=1cm]
  \tikzstyle{label node} = [circle,draw = white,minimum size=0.4cm,inner sep=0pt, align=right, node distance=1cm]
    \node[main node, color = orange, label=above:{$w_1$}, label=left:{$f_1$}] (a1) {};
    \node[main node, color = orange, dashed, right = of a1, label=above:{$w_2$}] (a2) {};
    \node[main node, color = white, right = of a2, label=above:{$w_3$}] (a3) {};
    \node[main node, color = white, below = of a1, label=left:{$f_2$}] (b1) {};
    \node[main node, color = orange, dashed, below = of a2] (b2) {};
    \node[main node, color = white, below = of a3] (b3) {};
    \node[main node, color = white, below = of b1, label=left:{$f_3$}] (c1) {};
    \node[main node, color = white, below = of b2] (c2) {};
    \node[main node, color = black, below = of b3] (c3) {};
    
    %\path (a1) -- node[label=below:{Step 1}] (pair_2) {} (a2);
    \draw[-{Latex[width=2mm]}, color = orange] (a1) -- (a2);
    %\path (a2) -- node[label={[rotate=270,text depth=1ex]Step 2}] (pair_2) {} (b2);
    \draw[-{Latex[width=2mm]}, color = orange] (a2) -- (b2);
\end{tikzpicture}
\qquad \qquad
\begin{tikzpicture}
 \tikzstyle{main node} = [circle,draw,minimum size=0.4cm,inner sep=0pt, text centered, node distance=1cm]
  \tikzstyle{label node} = [circle,draw = white,minimum size=0.4cm,inner sep=0pt, align=right, node distance=1cm]
    \node[main node, color = orange, label=above:{$w_1$}, label=left:{$f_1$}] (a1) {};
    \node[main node, color = white, right = of a1, label=above:{$w_2$}] (a2) {};
    \node[main node, color = orange, dashed, right = of a2, label=above:{$w_3$}] (a3) {};
    \node[main node, color = white, below = of a1, label=left:{$f_2$}] (b1) {};
    \node[main node, color = black, below = of a2] (b2) {};
    \node[main node, color = white, dashed, below = of a3] (b3) {};
    \node[main node, color = white, below = of b1, label=left:{$f_3$}] (c1) {};
    \node[main node, color = white, below = of b2] (c2) {};
    \node[main node, color = orange, dashed, below = of b3] (c3) {};
    %\path (b2) -- node[label=above:{Step 1}] (pair_2) {} (b3);
    \draw[-{Latex[width=2mm]}, color = orange] (a1) -- (a3);
    %\path (b3) -- node[label={[rotate=270,text depth=1ex]Step 2}] (pair_2) {} (c3);
    \draw[-{Latex[width=2mm]}, color = orange] (a3) -- (c3);
\end{tikzpicture}
\end{figure}
\noindent Consequently, $\mu_{-f_3}$ can reach any other almost stable matching $\mu_{-f_i}$, $i < 3$, corresponding to $\mu$. This concludes the first observation.

In the second observation, we connect the decentralized process of our interest to the deferred acceptance algorithm. \cite{mcvitie1971stable} adapt this algorithm by introducing a new operation to determine the set of all stable matchings. Their adapted version can be roughly described as follows; see the full details and relevant results in Appendix~\hyperref[appendix-a]{A}. They first run the standard firm-proposing version to compute the firm-optimal stable matching $\mu_F$, the best stable matching for firms. Next, they seek to obtain any other stable matching. Particularly, fix an arbitrary stable matching $\mu \neq \mu_F$. Then, some firm $f$ prefers worker $w=\mu_F(f)$ to worker $\mu(f)$. \cite{mcvitie1971stable}'s idea is to break the $(f,w)$-partnership and restart the previously terminated deferred acceptance algorithm by forcing firm $f$ to propose to the worker following $w$ in her list. As it turns out, this operation called \textit{breakmarriage} generates a new stable matching $\mu'$ that is at least as good as matching $\mu$ for all firms. If $\mu$ and $\mu'$ are different, some firm $f'$ prefers worker $w'=\mu'(f')$ to worker  $\mu(f')$. \cite{mcvitie1971stable} then break the $(f',w')$-partnership, restart the deferred acceptance algorithm one more time, and proceed iteratively. They show that any stable matching can be obtained by successive applications of breakmarriage operations. In Lemma~\ref{lemma_breakmarriage} in Appendix~\hyperref[appendix-a]{A}, we prove that our dynamics can replicate the restarted deferred acceptance algorithm following breakups for all relevant breakmarriage operations.

We conclude the proof by combining the two observations. The first observation suggests that the absence of non-trivial fragments effectively allows to attain any almost stable matching for a particular stable matching. Equivalently, we can break any stable partnership in that stable matching. This, together with the second observation, implies that all relevant breakmarriage operations in \cite{mcvitie1971stable}'s algorithm can be emulated by the dynamics. We use this connection to establish the result.

\section{Exponential Stabilization}
\label{section_exponential}
We now turn to another form of fragility. Even though convergence to stability is always guaranteed, the market may be far from stable for long periods of time. To isolate this form of fragility, this section examines markets with a unique stable matching. In such markets, the dynamics necessarily converge to one matching, the unique stable one. As mentioned earlier, our focus on markets with a unique stable matching is also in line with the literature suggesting that large markets entail small cores.\footnote{Specifically, theoretical work has identified various conditions under which in large markets, all agents, except for a vanishing proportion, are matched with the same partners in all stable matchings; see \cite*{Immorlica2005}, \cite*{kojima2009incentives}, and \cite*{ashlagi2017unbalanced}, as well as our discussion in footnote~\ref{footnote_mixed_evidence}. Although our first theorem applies to such markets and suggests re-equilibration dynamics, it becomes less relevant since all stable matchings are essentially identical.} Henceforth, we analyze the random dynamics introduced in Section~\ref{section_dynamics}, which allow for non-uniform and time-dependent sequential formation of (best) blocking pairs.

Our second theorem shows that for a large class of markets with a unique stable matching, even a small deviation from stability leads to an exponentially long stabilization process. More precisely, we consider \textit{any} market with a unique stable matching. We prove that the market can be augmented by adding a small fraction of new agents so that the resulting market still has a unique stable matching, but where even for small perturbations of the stable matching, stabilization takes an exponentially long time.\footnote{Our approach is similar in spirit to the ``maximal domain'' exercises common in the matching literature (e.g., see \citealp*{gul1999walrasian, hatfield2012matching,kamada2023fair}) and to the augmentation exercises in \cite*{fernandez2022centralized}.}

Formally, a \textit{$\delta$-augmented market} of the original market of size $n$ is a new market of size at most $(1+\delta)n$ obtained from the original market by adding at most a $\delta$-proportion of new agents on both sides, so that \begin{inparaenum}[(i)]
\item both the original and the augmented markets coincide when restricted to the original agents;
\item the augmented market has a unique stable matching;
\item and all original agents have identical stable partners in both markets.
\end{inparaenum} 
Importantly, an augmentation does not alter the original market and its structure of stable matchings.

We are interested in small deviations from stability. A matching is \textit{$\epsilon$-unstable} if at least an $\epsilon$-fraction of firms, and thus workers, are not matched with their unique stable partners.

\begin{theorem}
\label{theorem_exponential}
Focus on $\kappa$-random (best) dynamics. Consider any sequence of markets of size $n \in \mathbb{N}$ with a unique stable matching and any $\delta>0$,  $\epsilon \in (0,1]$, and $\kappa \geq 1$. Then, there exists a corresponding sequence of $\delta$-augmented markets for which any sequence of $\epsilon$-unstable matchings with probability $1-2^{-\Omega(n)}$ needs $2^{\Omega(n)}$ steps to regain stability.\footnote{We write $f(n) = \Omega(g(n))$ if $g(n)=O(f(n))$.}
\end{theorem}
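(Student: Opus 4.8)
The plan is to reduce the statement to a biased random walk and to engineer the augmentation so that \emph{any} path back to stability must fight that walk over a linear distance. Concretely, I would build the $\delta$-augmented market by grafting onto the given market $\mathcal{M}_n$ a \emph{slow gadget} $\mathcal{G}$ on $\Theta(\delta n)$ fresh agents, where $\delta$ is fixed and the gadget size is $\Theta(n)$ (this is what will yield the $2^{\Omega(n)}$ scaling). The gadget has its own unique stable matching $\sigma_{\mathcal G}$, and I would specify the cross-preferences between original and gadget agents so that (a) the unique stable matching of the augmented market is $\mu^{*}\cup\sigma_{\mathcal G}$, where $\mu^{*}$ is the stable matching of $\mathcal{M}_n$, and (b) every original agent keeps its partner $\mu^{*}(\cdot)$. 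For (a) I would arrange that at any block-respecting matching the two sides look like fragments, so that \Cref{lemma_definition} forces every stable matching to match within blocks; uniqueness on each block then gives global uniqueness. The entangling cross-preferences described below are triggered only at unstable matchings, hence do not create any new stable matching. Clauses (a)--(b) are exactly the three requirements in the definition of a $\delta$-augmented market.

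The heart of the construction is a \emph{drift property}. I would design $\mathcal{G}$ and its hook into $\mathcal{M}_n$ so that there is a potential $\Phi(\lambda)$---a distance to stability dominated by the number of mismatched agents---with the features: (i) $\Phi=0$ exactly at the unique stable matching; (ii) every $\epsilon$-unstable matching has $\Phi(\lambda)\ge \epsilon(1+\delta)n=\Omega(n)$; and, crucially, (iii) at every matching with $1\le\Phi\le\Omega(n)$ reachable on a path to stability, the blocking pairs that \emph{increase} $\Phi$ outnumber those that \emph{decrease} it by a factor exceeding $\kappa$. Since $\kappa\ge 1$ is a fixed constant, I can build a slack factor $C=C(\kappa)$ into the gadget so that, even under the worst admissible weighting (every stabilizing pair $\kappa$ times as likely as every destabilizing one), a single step increases $\Phi$ with probability at least $(1+\eta)$ times the probability of decreasing it, for some constant $\eta>0$. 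The same counting must be arranged to survive the restriction to \emph{best} blocking pairs, so that the $\kappa$-random best dynamics are handled at once.

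Given the drift, the probabilistic conclusion is routine. I would couple the $\kappa$-random (best) dynamics to a random walk on $\{0,1,\dots,N\}$ absorbed at $0$, with constant upward drift throughout $[1,\Omega(n)]$, started at height $\Phi\ge\epsilon(1+\delta)n$. A gambler's-ruin / supermartingale estimate---for instance an exponential Azuma bound applied to a test function of the form $(1+\eta)^{-\Phi}$---then shows that the process fails to reach $0$ within $2^{\Omega(n)}$ steps except with probability $2^{-\Omega(n)}$, which is precisely the claimed guarantee. I would invoke the stabilization theorem of \cite{roth1990random} only to know that $0$ is eventually reached; all the quantitative slowness comes from the drift.

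The main obstacle is clause (iii), and specifically making it hold \emph{uniformly over the whole reachable region and from every} $\epsilon$-unstable start---including the awkward case where the initial perturbation lies entirely inside the original block, which is possible whenever $\epsilon\le 1/(1+\delta)$. A naive disjoint-union augmentation fails here: if $\mathcal{G}$ sits at $\sigma_{\mathcal G}$ and only $\mathcal{M}_n$ is perturbed, the cross-preferences kill every cross blocking pair, the dynamics resolve $\mathcal{M}_n$ in isolation, and, since $\mathcal{M}_n$ is an arbitrary unique-core market that may stabilize in polynomial time, convergence could be fast. The hook between $\mathcal{G}$ and $\mathcal{M}_n$ must therefore be designed so that \emph{any} deviation of an original agent from $\mu^{*}$ necessarily activates a blocking pair feeding the gadget's biased walk, i.e.\ the original block's instability cannot be discharged without first driving $\Phi$ up through $\mathcal{G}$. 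Arranging this entanglement so that it (a) leaves $\mu^{*}\cup\sigma_{\mathcal G}$ the \emph{unique} stable matching while (b) guaranteeing the drift from every $\epsilon$-unstable configuration is the delicate step; I would do it by extending the exponential lower-bound gadget of \cite{ackermann2011uncoordinated}, whose instances already realize such a biased walk, and then verifying that the counting bound (iii) is preserved under the augmentation and under the best-blocking restriction.
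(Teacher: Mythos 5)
Your high-level architecture---graft a gadget of size $\Theta(\delta n)$, establish a drift so that destabilizing blocking pairs dominate stabilizing ones by more than the factor $\kappa$ can compensate, and then invoke a biased-random-walk/hitting-time bound---is the same skeleton as the paper's proof, and you correctly diagnose the central danger (a preference-disjoint union lets the original block stabilize on its own). But the proposal has a genuine gap: everything that makes the theorem true is deferred to ``extend the gadget of \cite{ackermann2011uncoordinated} and verify the counting,'' and that particular plan does not go through as stated. The Ackermann et al.\ instances have a \emph{growing number of stable matchings}, whereas a $\delta$-augmented market is required by definition to have a \emph{unique} stable matching; moreover their exponential bound is proved for specially chosen starting matchings under uniform dynamics, while here the bound must hold for \emph{every} $\epsilon$-unstable start and every $\kappa$-biased, possibly time-dependent, (best-pair) dynamics. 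The paper resolves this by first proving Proposition~\ref{proposition_class} for a new class of unique-stable-matching markets in which every agent except one per side is preferred by an $\eta$-fraction of the opposite side to their stable partners, and then augmenting with such a gadget of size $\lfloor\delta n\rfloor$ under the single cross-preference rule ``everyone, original or new, prefers originals to news.'' That rule is what your proposal is missing: it simultaneously (a) keeps $\mu^{*}\cup\sigma_{\mathcal G}$ stable and unique, and (b) makes every \emph{original} agent preferred by all $\lfloor\delta n\rfloor$ gadget agents on the other side to their gadget partners, so the moment any original agent becomes unmatched there are $\Theta(n)$ destabilizing cross blocking pairs---this is exactly the entanglement you ask for, obtained without any bespoke ``hook.''

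A second, more technical gap is your clause (iii): demanding the drift at \emph{every} matching with $1\le\Phi\le\Omega(n)$ reachable on a path to stability is stronger than what can be arranged. The linear count of destabilizing pairs comes from an \emph{unmatched} non-exclusion agent being desired by linearly many stably-matched agents; at matchings where no such agent is unmatched (e.g.\ perfect matchings, or matchings where only the two exclusion agents are single) this count can collapse, so the factor-$\kappa$ domination fails pointwise. The paper's proof confronts exactly this via its condition ($\star$): it classifies pairs as destabilizing$^{\star}$/stabilizing$^{\star}$, shows that a matching violating ($\star$) returns to the good region within at most three steps, and then couples to a walk in which every stabilizing$^{\star}$ step is pessimistically credited with \emph{four} units of progress toward stability, preserving ($\star$) throughout. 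Without this bookkeeping (or a potential $\Phi$ engineered to absorb it, which you do not specify), the coupling to the biased walk---and hence the Chernoff-type $2^{\Omega(n)}$ conclusion---is not established; the same goes for your unproven assertion that the counting survives the restriction to best blocking pairs, which the paper verifies by noting that each of the linearly many ``admirers'' of the unmatched agent contributes its own best blocking pair, and satisfying it breaks a stable pair.
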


This theorem emphasizes an additional fragility aspect of stable matchings; even for slight deviations from stability, the stabilization process takes an extremely long time. A sketch of the proof of the theorem is provided in Section~\ref{section_proof_exponential}. At a very high level, we construct augmented markets in which, for many matchings along any path to stability, there are substantially more ``destabilizing'' (best) blocking pairs that move the market away from stability than ``stabilizing'' (best) blocking pairs---corresponding to stable pairs of agents---that instead facilitate the convergence. This significantly slows down the convergence and causes exponentially long stabilization paths.\footnote{In our analysis of Theorem~\ref{theorem_exponential}, we focus on random dynamics that guarantee convergence to stability, consistent with our discussion in Section~\ref{section_dynamics}. To produce long paths---though not necessarily leading to stability---an even broader class of dynamics suffices, wherein ``destabilizing'' pairs are not vastly less likely to be satisfied compared to other pairs.  Moreover, it seems feasible to allow $\kappa=\kappa(n)$ to increase with $n$, albeit at the expense of introducing unnecessary complexity into the analysis.}

Theorem~\ref{theorem_exponential}, along with its underlying intuition, naturally complements Theorem~\ref{theorem_characterization}. In fact, the simulations in Section~\ref{section_experiments} demonstrate that for random markets, irrespective of whether stable matchings are unique, and even when initializing a market at almost stable matchings, the stabilization process typically leads the market away from stability and takes a very long time to regain it.\footnote{Obtaining this result theoretically is challenging due to two layers of randomness inherent in both random markets and random decentralized dynamics. Addressing this challenge would require not only alternative methods but also specific assumptions about randomness. Instead, similar to our first theorem, Theorem~\ref{theorem_exponential} applies to non-random markets, specifically to all markets having a unique stable matching. Moreover, our result holds for a broad range of (potentially non-uniform and time-dependent) random dynamics.} Due to the prevalence of ``destabilizing'' rematching opportunities, many market participants are mismatched for extended periods. Furthermore, for markets with multiple stable matchings, even almost stable matchings converge with substantial probability to a stable matching distant from the minimally perturbed matching; this further refines our findings in Section~\ref{section_characterization}, particularly Corollary~\ref{corollary_characterization}.

In the rest of this section, we first outline the proof of the second theorem. We then discuss one more way in which fragments can restrain decentralized interactions.

\subsection{Sketch of the Proof} \label{section_proof_exponential} 
The proof proceeds in two steps. First, we establish an analogous result for a class of non-augmented markets having a unique stable matching. Then, we utilize markets from this class to augment arbitrary markets and demonstrate the theorem.

For the first step, consider a market of size $n\in \mathbb{N}$ with a unique stable matching $\mu$, where for some $\eta \in (0,1)$, the following two conditions hold: 
\begin{inparaenum}[(i)]
\item any firm \textit{except} firm $\bar{f}$ is preferred by at least an $\eta$-proportion of workers to their stable partners;
\item and similarly, any worker \textit{except} worker $\bar{w}$ is preferred by at least an $\eta$-proportion of firms to their stable partners.
\end{inparaenum}
These exceptions arise because in a market with a unique stable matching, it is impossible for \textit{every} agent on one side to be preferred by some agents from the other side to their stable partners.\footnote{See Lemma 7 in \cite{balinski1997stable}.} We prove that any sequence of such markets (indexed by sizes $n\in \mathbb{N}$), where $\eta$ is fixed along the sequence, achieves a conclusion parallel to that of the theorem; see Proposition~\ref{proposition_class} in Appendix~\hyperref[appendix-c]{C}. The identified class of markets, which is non-empty and quite broad, particularly includes certain ``almost'' fully assortative markets, wherein all agents on each side agree on how they rank the vast majority of all agents from the other side.

To illustrate the idea behind Proposition~\ref{proposition_class}, consider the market of size $n$ described earlier. For any matching $\lambda$, let $\mathcal{S}(\lambda)$ denote the number of firms, and hence workers, matched with their stable partners. In order to restore stability, $\mathcal{S}(\mu)=n$, at some point of the process, the market must enter the region $\mathcal{S}(\lambda) \geq (1-\zeta)n$, where $\zeta \leq \epsilon$, and remain in this region until convergence; in fact, $\zeta>0$ can be arbitrarily small. Therefore, consider any unstable matching $\lambda$ with $\mathcal{S}(\lambda) \geq (1-\zeta)n$. To avoid technical difficulties, suppose $\lambda$ satisfies the following condition {\color{orange}($\star$)}: some agent $a \notin\{\bar{f}, \bar{w}\}$ is unmatched. In that case, this agent is preferred by at least an $\eta$-proportion of agents from the other side to their stable partners. Since at least a $(1-\zeta)$-fraction of agents are matched with their stable partners in matching $\lambda$, there are at least $(\eta-\zeta)n$ destabilizing (best) blocking pairs that generate matchings $\lambda'$ with $\mathcal{S}(\lambda')=\mathcal{S}(\lambda)-1$; note, however, that new matchings $\lambda'$ might fail to satisfy {\color{orange}($\star$)}. At the same time, there are at most $\zeta n$ stabilizing (best) blocking pairs that lead to matchings $\lambda'$ with $\mathcal{S}(\lambda')=\mathcal{S}(\lambda)+1$; they may not satisfy {\color{orange}($\star$)} as well. For small enough $\zeta>0$, there are many more destabilizing pairs than stabilizing ones. If all matchings along the process satisfied {\color{orange}($\star$)}, we could associate the dynamics with a random walk that is heavily biased in the ``destabilizing'' direction. Such random walks are known to take exponentially many steps, with high probability, to reach $\mathcal{S(\mu)}=n$. In our proof, since some matchings resulting from the dynamics violate {\color{orange}($\star$)}, we use more sophisticated arguments to establish the result.

In the second step of the proof, we can use any sequence of constructed markets from the identified class, for which Proposition~\ref{proposition_class} holds, to $\delta$-augment the original markets and establish the theorem by analogous techniques. For each original market, of size $n$, add new agents with preferences over each other that mirror the constructed market of size $\floor{\delta n}$. Additionally, let any agent, original or new, prefer original agents to new ones. The resulting market, of size $\floor{\delta n}$, is indeed $\delta$-augmented and satisfies conditions akin to those described in Proposition~\ref{proposition_class}. Despite the sizes of the $\delta$-augmented markets being $n+\floor{\delta n}$ rather than $n$, they still increase linearly with $n \in \mathbb{N}$. The desired exponential convergence result then follows by similar arguments as before, completing the proof of the theorem.

\subsection{Fragments Revisited} 
Section~\ref{section_characterization}, and notably Theorem~\ref{theorem_characterization}, sheds light on the special role of non-trivial fragments in constraining the stabilization dynamics. Below, we illustrate another way in which fragments, particularly trivial ones, restrain decentralized interactions.

Specifically, consider a market in which firms and workers can be relabeled so that subsets $\bar{F}_1=\{f_1\}$ and $\bar{W}_1=\{w_1\}$ form a fragment of size one, $\bar{F}_2=\bar{F}_1 \cup \{f_2\}$ and $\bar{W}_2=\bar{W}_1 \cup \{w_2\}$ constitute a fragment of size two, and so forth. In particular, firms $\bar{F}_{n-1}=\bar{F}_{n-2} \cup \{f_{n-1}\} = \{f_1, f_2,\ldots, f_{n-1}\}$ and workers $\bar{W}_{n-1}=\bar{W}_{n-2} \cup \{w_{n-1}\} = \{w_1, w_2,\ldots, w_{n-1}\}$ generate a fragment of size $n-1$. In other words, a market has a \textit{nested structure of fragments}; all these fragments are trivial by definition. This market can also be equivalently described as a sequence of top-top match pairs. Firm-worker pair $(f_1,w_1)$ is a top-top match: they are each other's favorite partner. Furthermore, for any $i \geq 2$, pair $(f_i,w_i)$ is a top-top match in the submarket formed by firms $F \setminus \bar{F}_{i-1}$ and workers $W \setminus \bar{W}_{i-1}$. This alternative formulation coincides with the \textit{sequential preference condition}, first introduced by \cite{eeckhout2000uniqueness}. 

Any such market has a unique stable matching that can be derived by successively partnering top-top match pairs. Many matching markets studied in the literature have a nested structure of fragments.\footnote{For instance, the $\alpha$-reducibility property \citep{clark_uniqueness_2006}, the aligned preferences condition \citep*{ferdowsian2023decentralized}, and oriented preferences \citep{reny_simple_2021} satisfy \cite{eeckhout2000uniqueness}'s condition and thus generate a nested structure of fragments.} This class includes assortative markets, in which agents on one or both sides share the same ranking of agents from the other side.

A nested structure of fragments makes the stabilization dynamics polynomial. Consider an arbitrary unstable matching and, as an example, focus on $1$-random dynamics: at each step, a blocking pair is selected uniformly at random. If firm $f_1$ and worker $w_1$ from top-top match pair $(f_1,w_1)$ are not yet matched with each other, this pair remains a blocking pair until they are matched. Since there are at most $n^2$ blocking pairs at each step of the dynamics, they are expected to match in $O(n^2)$ time. Once matched, $f_1$ and $w_1$ continue to be partners until convergence, which means the decentralized process gets stuck in fragment $(\bar{F}_1, \bar{W}_1)$.  Next, if firm $f_2$ and worker $w_2$ from top-top match pair $(f_2,w_2)$ are not yet matched, this pair remains a blocking pair until it is resolved. These agents are expected to match with each other in $O(n^2)$ time as well. The dynamics then get stuck in fragment $(\bar{F}_2, \bar{W}_2)$. By iterative arguments, the stabilization takes $O(n^3)$ expected time.\footnote{\cite{lebedev2007using} and \cite{ackermann2011uncoordinated} show the polynomial convergence under uniform dynamics for markets with aligned preferences, a special case of markets with a nested structure of fragments.}

Nonetheless, even markets in which the convergence is polynomial are potentially fragile, as demonstrated by Theorem~\ref{theorem_exponential}. A slight proportion of agents can be added to make the stabilization exponential. In fact, the construction in the theorem's proof can be straightforwardly modified to ensure that augmented markets have no fragments at all. 

\section{Computational Experiments}
\label{section_experiments}

This section presents simulation results for random markets that complement and sharpen our theoretical findings. In markets with multiple stable matchings, even a minimal perturbation of a stable matching often converges to a stable matching distant from the perturbed matching. Furthermore, regardless of whether stable matchings are unique, the stabilization dynamics typically stray from stability, take a long time to reach stability, and involve a sizable proportion of mismatched agents for long durations. For simplicity, suppose that at each step of the dynamics, a blocking pair is chosen uniformly at random.\footnote{\label{footnote_other_probability}Similar results hold for other dynamics. As shown in the Online Appendix, the same insights are obtained for the dynamics where at each step, a randomly-chosen agent---selected uniformly at random among those who have at least one blocking partner---forms a match with her best blocking partner. Analogous results also emerge in various markets with cardinal preferences, where a blocking pair generating a higher total surplus, or a total surplus gain compared to the previous match, is more likely to be formed.}

First, we consider markets with multiple stable matchings. For each market size $n$, we simulate $n \times n$ markets with uniformly random preferences and select 1,000 markets with multiple stable matchings.  In each of these markets, we examine two stable matchings. One is an extremal stable matching, which is often targeted by centralized clearinghouses and is thus of special interest; by symmetry, the analysis is identical for firm- and worker-optimal stable matchings. The other is chosen uniformly at random from the set of all stable matchings, including extremal stable matchings. To provide a lower bound on the fragility of stable matchings, we focus on minimal deviations from a stable matching. Specifically, for each of the two stable matchings, we select one almost stable matching uniformly at random. Next, we calculate various statistics, such as the probability of returning to the minimally perturbed stable matching and the expected time to regain stability, by simulating 300 paths to stability. Then, for each market size and each perturbed stable matching, we compute the expected value of these statistics across all markets in the sample.

\begin{figure}[!ht]
\centering
\begin{subfigure}[t]{.4\textwidth}
\centering
\includegraphics[width=\linewidth]{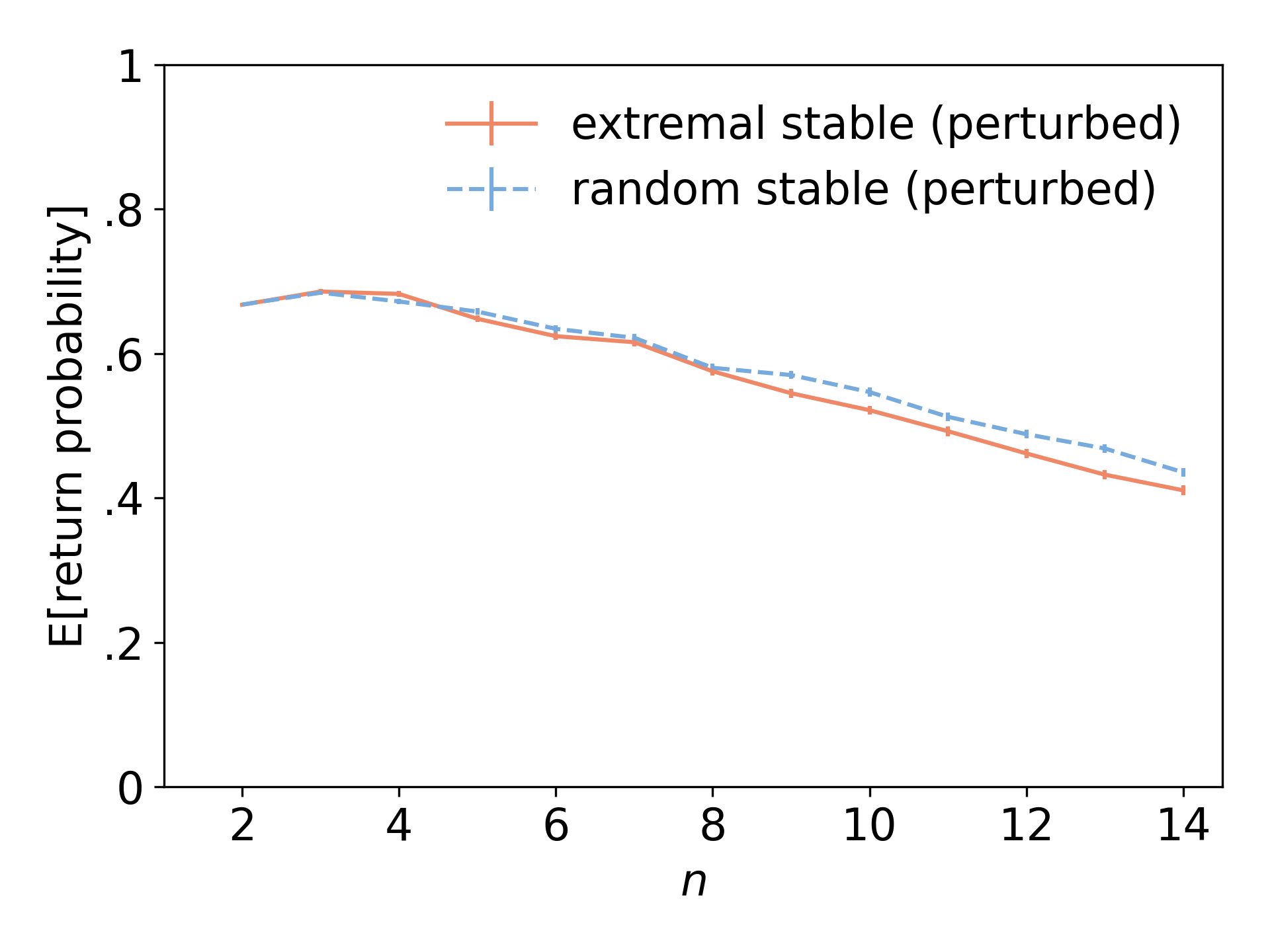}
        \caption{Return probability}
\label{figure_multiple_a}
\end{subfigure}
\begin{subfigure}[t]{.4\textwidth}
\centering
\captionsetup{justification=centering}
\includegraphics[width=\linewidth]{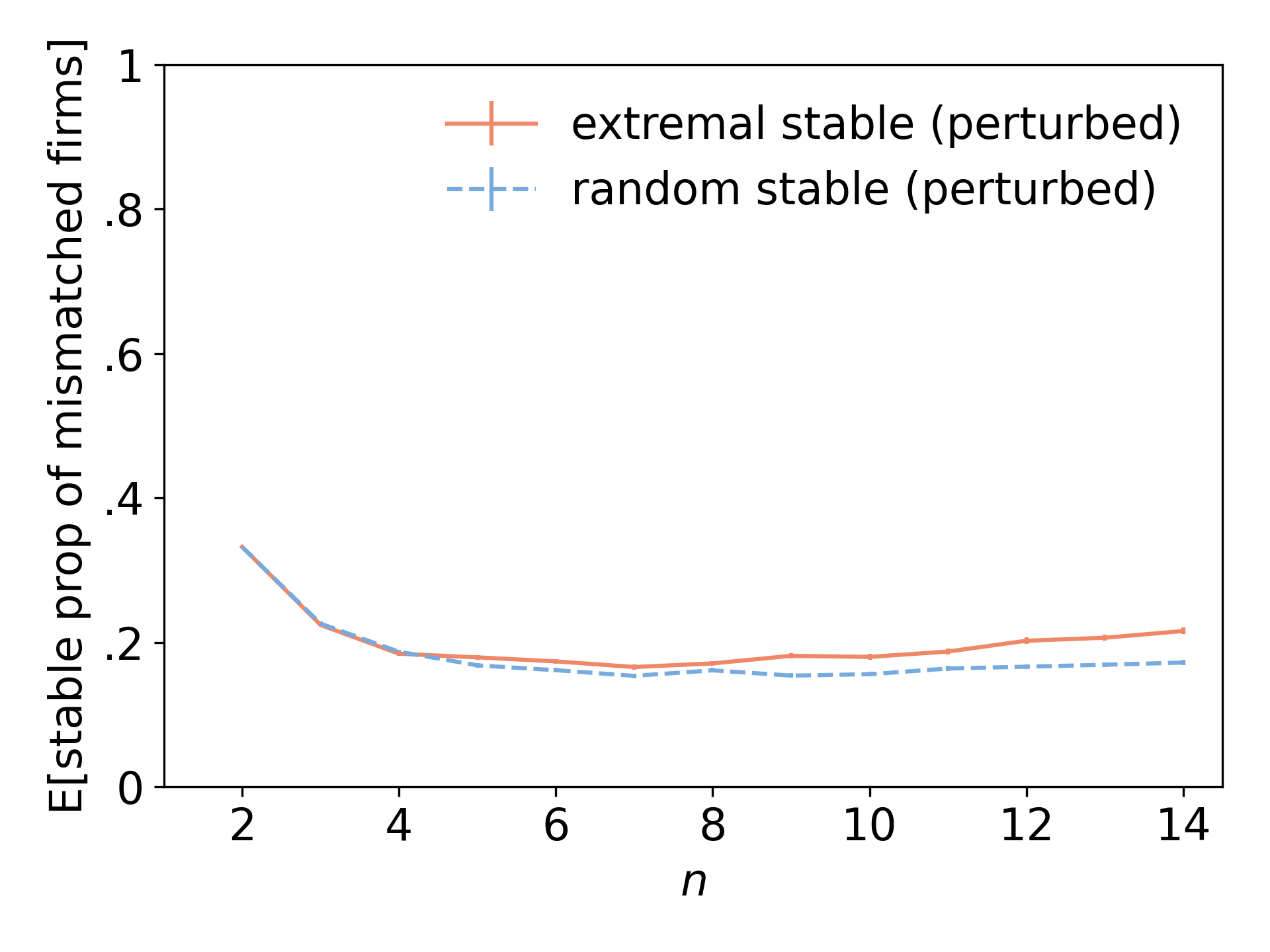}
        \caption{Proportion of mismatched firms/workers in ultimate stable matching}
\label{figure_multiple_b}
\end{subfigure}
\begin{subfigure}[t]{.4\textwidth}
\centering
\includegraphics[width=\linewidth]{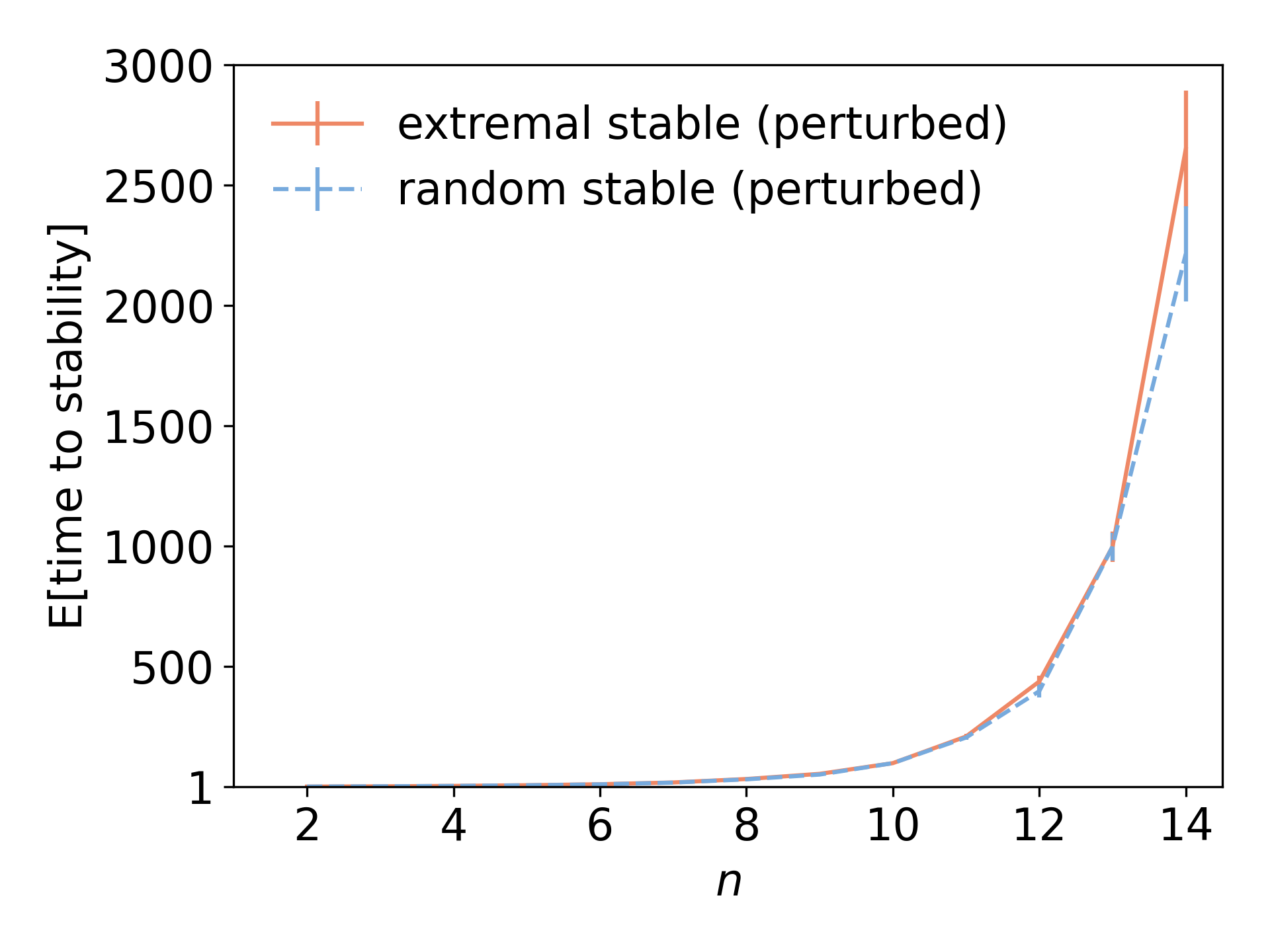}
        \caption{Time to stability}
\label{figure_multiple_c}
\end{subfigure}
\begin{subfigure}[t]{.4\textwidth}
\centering
\includegraphics[width=\linewidth]{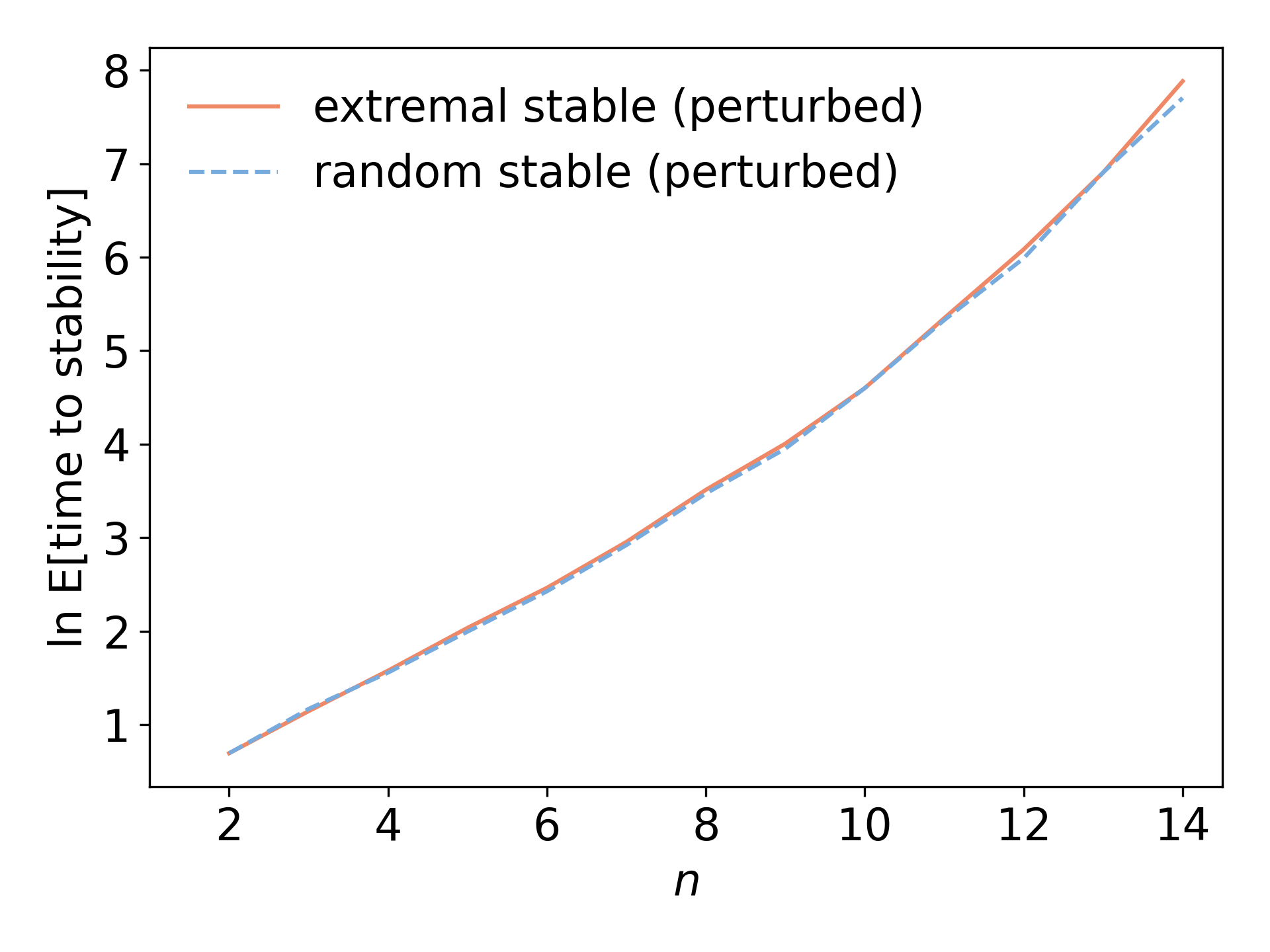}
        \caption{$\ln\left[\text{Time to stability}\right]$}
\label{figure_multiple_d}
\end{subfigure}
\begin{subfigure}[t]{.4\textwidth}
\centering
\captionsetup{justification=centering}
\includegraphics[width=\linewidth]{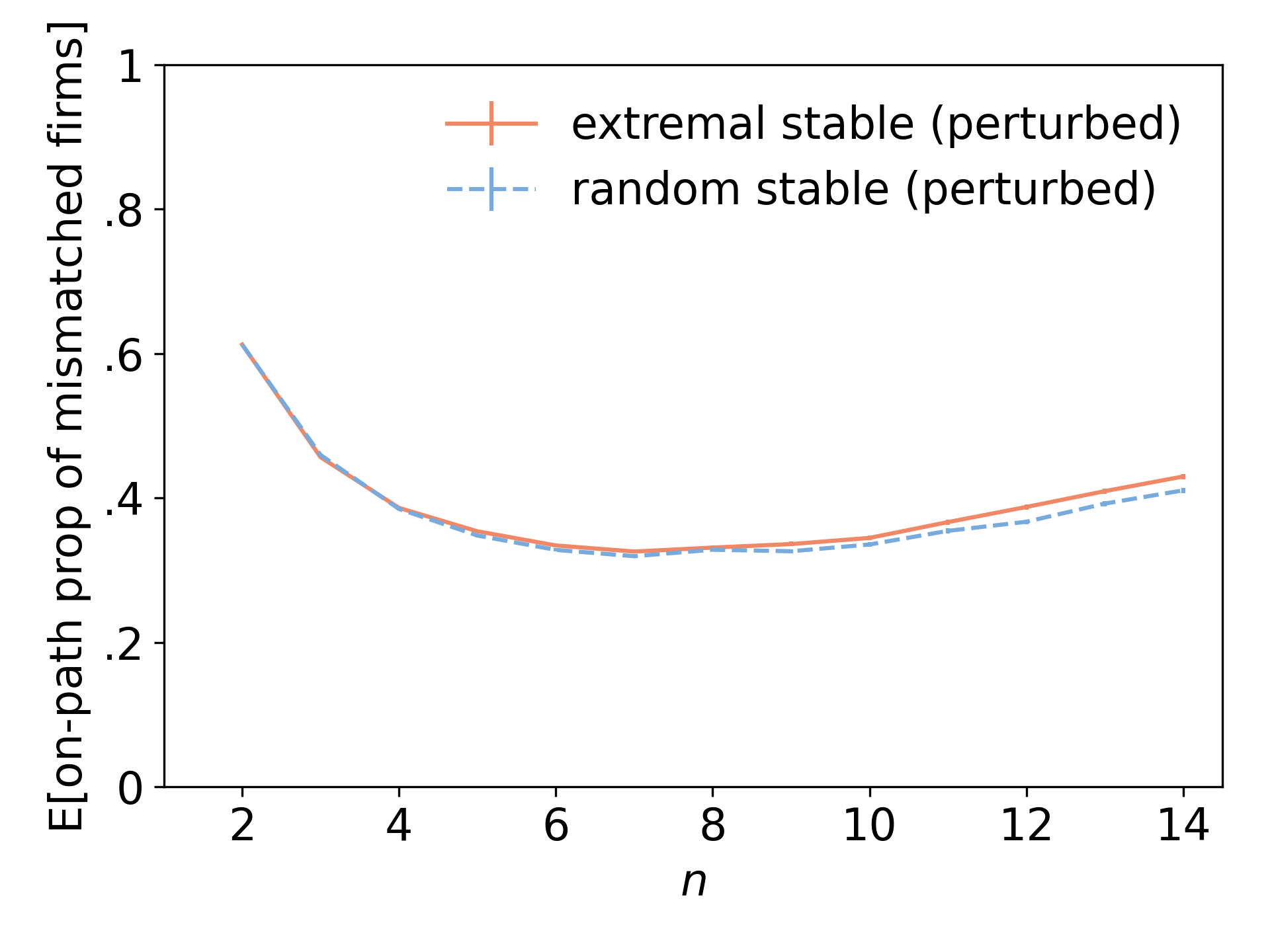}
        \caption{On-path average proportion of mismatched firms/workers}
\label{figure_multiple_e}
\end{subfigure}
\captionsetup{justification=centering}
\caption{Random $n \times n$ markets with multiple stable matchings}
\label{figure_multiple}
\end{figure}

Figure~\ref{figure_multiple} highlights fragility aspects of stable matchings. Even a minimally perturbed stable matching takes an extremely---possibly exponentially---long time to attain a stable matching, not necessarily the perturbed one; see Panels~(\ref{figure_multiple_c}) and~(\ref{figure_multiple_d}) that illustrate the stabilization time and its natural logarithm, respectively. The stabilization process is not only slow but also involves many mismatched agents for long periods, as shown in Panel~(\ref{figure_multiple_e}), which exhibits the on-path average proportion of agents that have different partners compared to the perturbed stable matching. These observations confirm and refine our intuition, discussed in Section~\ref{section_exponential}, regarding the prevalence of ``destabilizing'' rematching opportunities. 

Since decentralized interactions often lead the market away from stability, there is no guarantee whatsoever that the perturbed stable matching returns back or even converges to a nearby, similar stable matching. Panel~(\ref{figure_multiple_a}) shows that indeed the corresponding return probabilities are eventually decreasing, despite a slight non-monotonicity. In markets of size $n=14$, extremal and random stable matchings, when perturbed minimally, return with a probability of only 40-50\%, with the remaining probability converging to other stable matchings. These other stable matchings are considerably different, as demonstrated in Panel~(\ref{figure_multiple_b}), which displays a sizable and non-vanishing proportion of agents whose ultimate stable partners differ from their initial stable partners. Interestingly, extremal stable matchings---a common target of centralized clearinghouses---are more fragile than random stable matchings.\footnote{The Online Appendix presents two examples that analyze the fragility of stable matchings within a given market. Example 5 provides a market with two completely different stable matchings and shows that one of them is fragile with respect to arbitrary perturbations. Example 6 reports a market in which all stable matchings are fragile and extremal stable matchings are most fragile.}\textsuperscript{,}\footnote{\cite{roth1989two} and \cite*{fernandez2022centralized} study outcomes of a centralized clearinghouse in the presence of uncertainty and show that unstable outcomes can emerge in equilibrium.} These findings sharpen the results in Section~\ref{section_characterization}, particularly Corollary~\ref{corollary_characterization}.

Second, we focus on markets with a unique stable matching. We simulate 1,000 such markets for each market size. For each market, we select one almost stable matching uniformly at random and simulate 300 paths to the stable matching. In addition to balanced $n \times n$ markets, we also consider imbalanced $n \times (n+k)$ markets, both with a fixed imbalance $k \in \{1,2,3\}$ and an increasing imbalance $k=n$. This is because for markets with \textit{uniformly random preferences}, imbalanced markets are precisely those markets that have an essentially unique stable matching \citep*{ashlagi2017unbalanced}.

\begin{figure}[H]
\centering
\vspace{-1em}
\begin{subfigure}[t]{.34\textwidth}
\centering
\includegraphics[width=\linewidth]{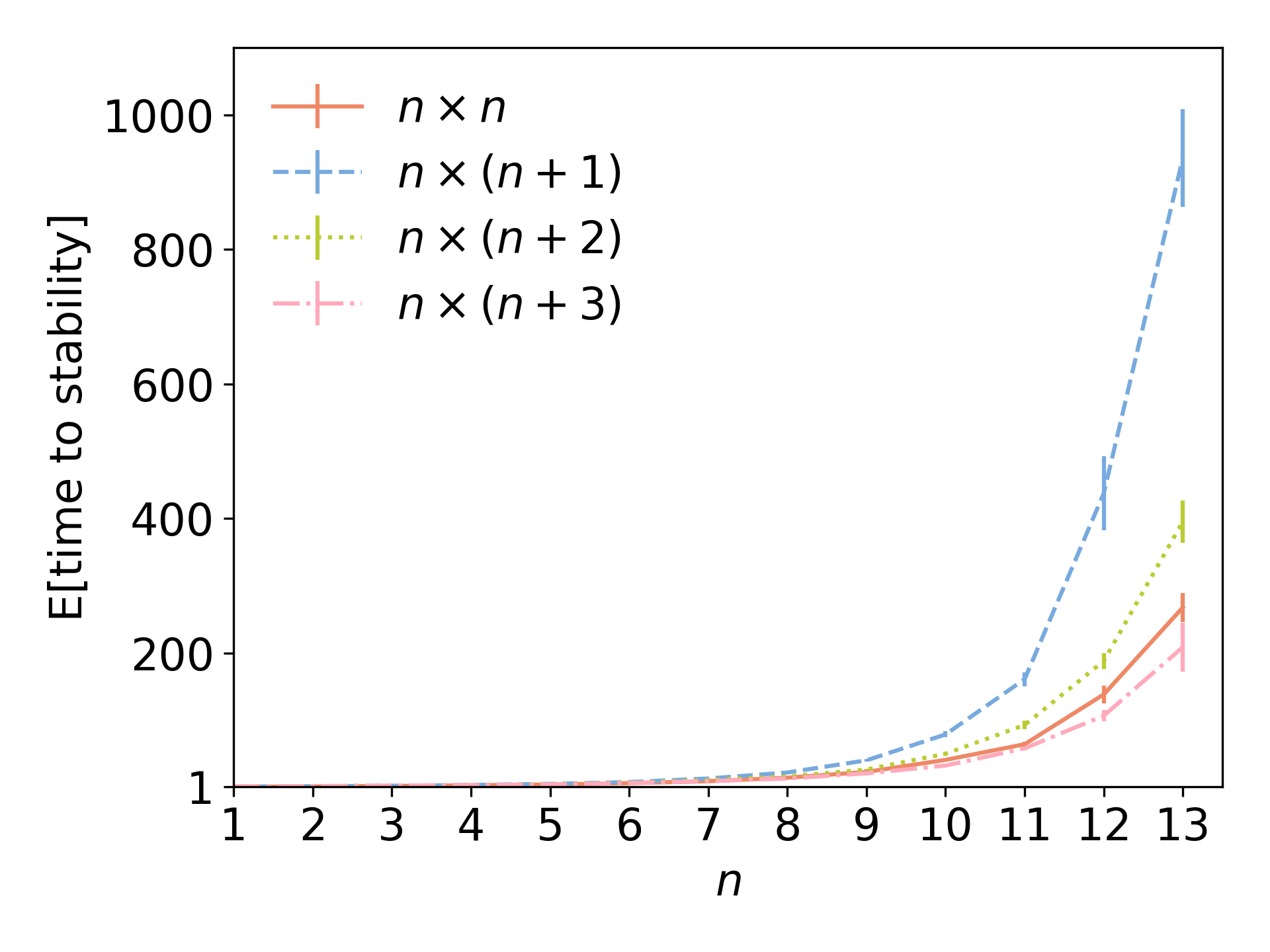}
        \caption{Time to stability}
        \label{figure_unique_fixed_a}
\end{subfigure}
\begin{subfigure}[t]{.34\textwidth}
\centering
\captionsetup{justification=centering}
\includegraphics[width=\linewidth]{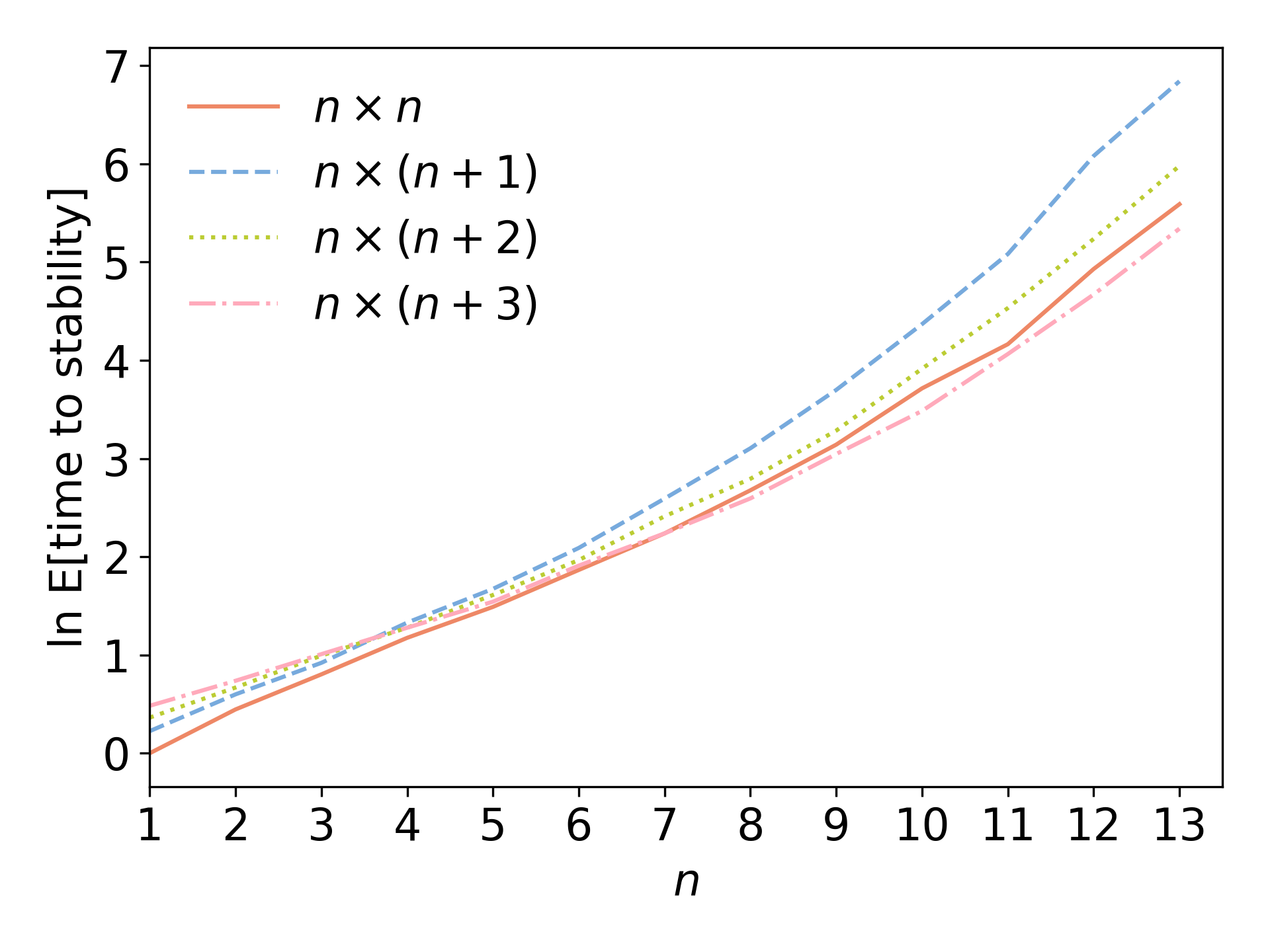}
        \caption{$\ln\left[\text{Time to stability}\right]$}
        \label{figure_unique_fixed_b}
\end{subfigure}
\begin{subfigure}[t]{.34\textwidth}
\centering
\captionsetup{justification=centering}
\includegraphics[width=\linewidth]{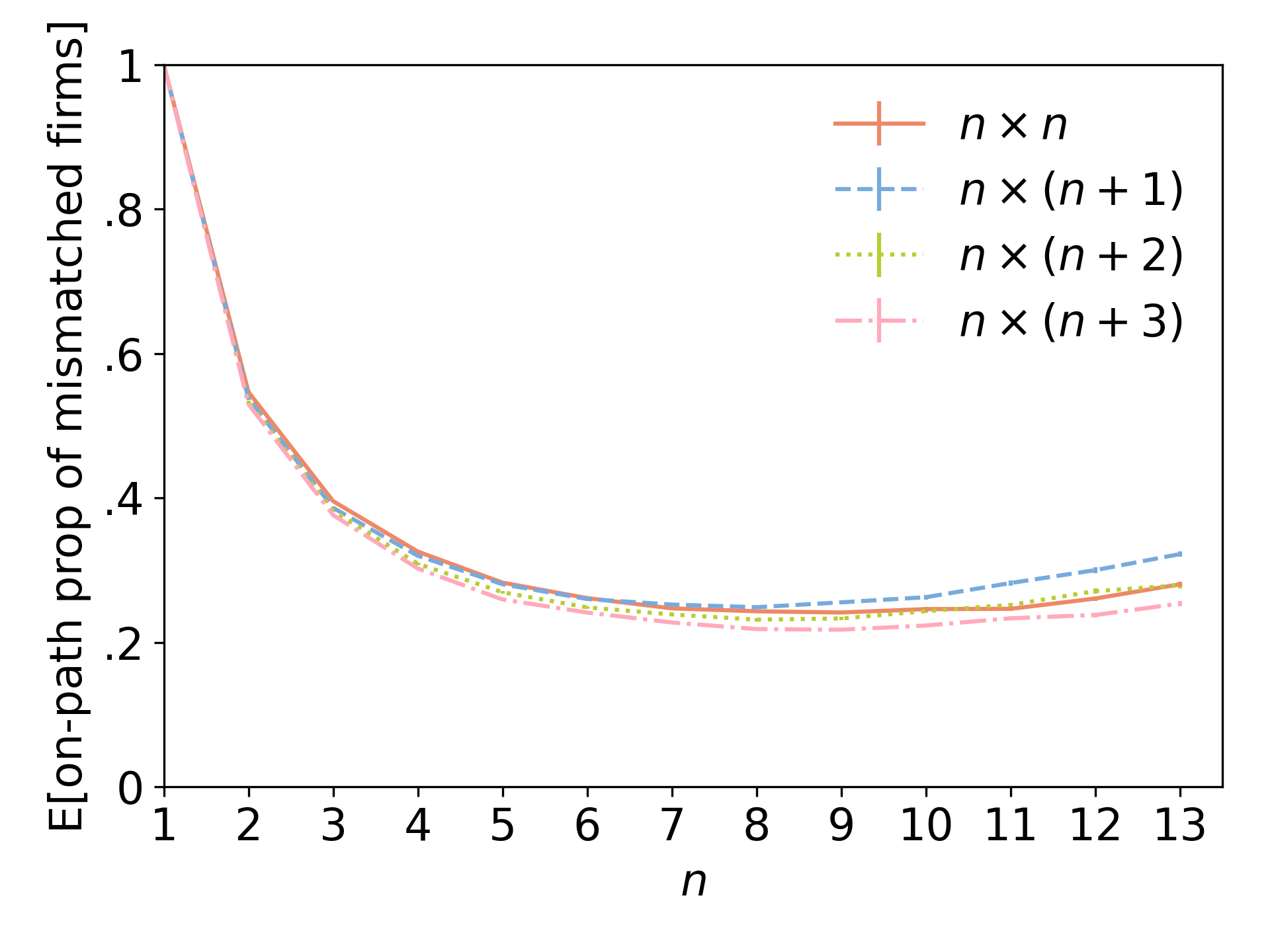}
        \caption{On-path average proportion of mismatched firms}
        \label{figure_unique_fixed_c}
\end{subfigure}
\begin{subfigure}[t]{.34\textwidth}
\centering
\captionsetup{justification=centering}
\includegraphics[width=\linewidth]{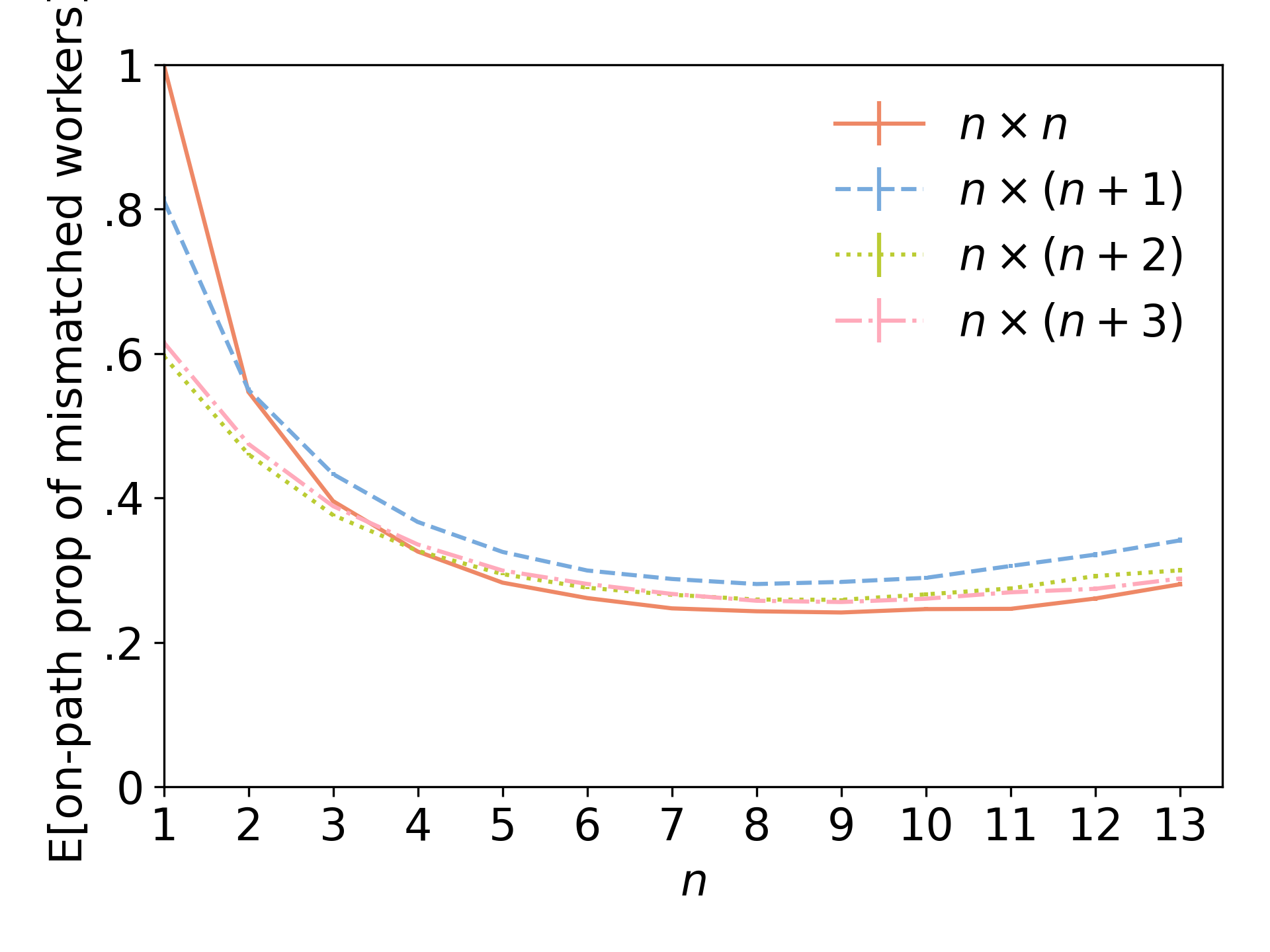}
        \caption{On-path average proportion of mismatched workers}
        \label{figure_unique_fixed_d}
\end{subfigure}
\captionsetup{justification=centering}
\caption{Random $n \times (n+k)$ markets with a unique stable matching, $k \in \{0,1,2,3\}$}
\label{figure_unique_fixed}
%\end{figure}
%\begin{figure}[!ht]
\centering
\begin{subfigure}[t]{.34\textwidth}
\centering
\includegraphics[width=\linewidth]{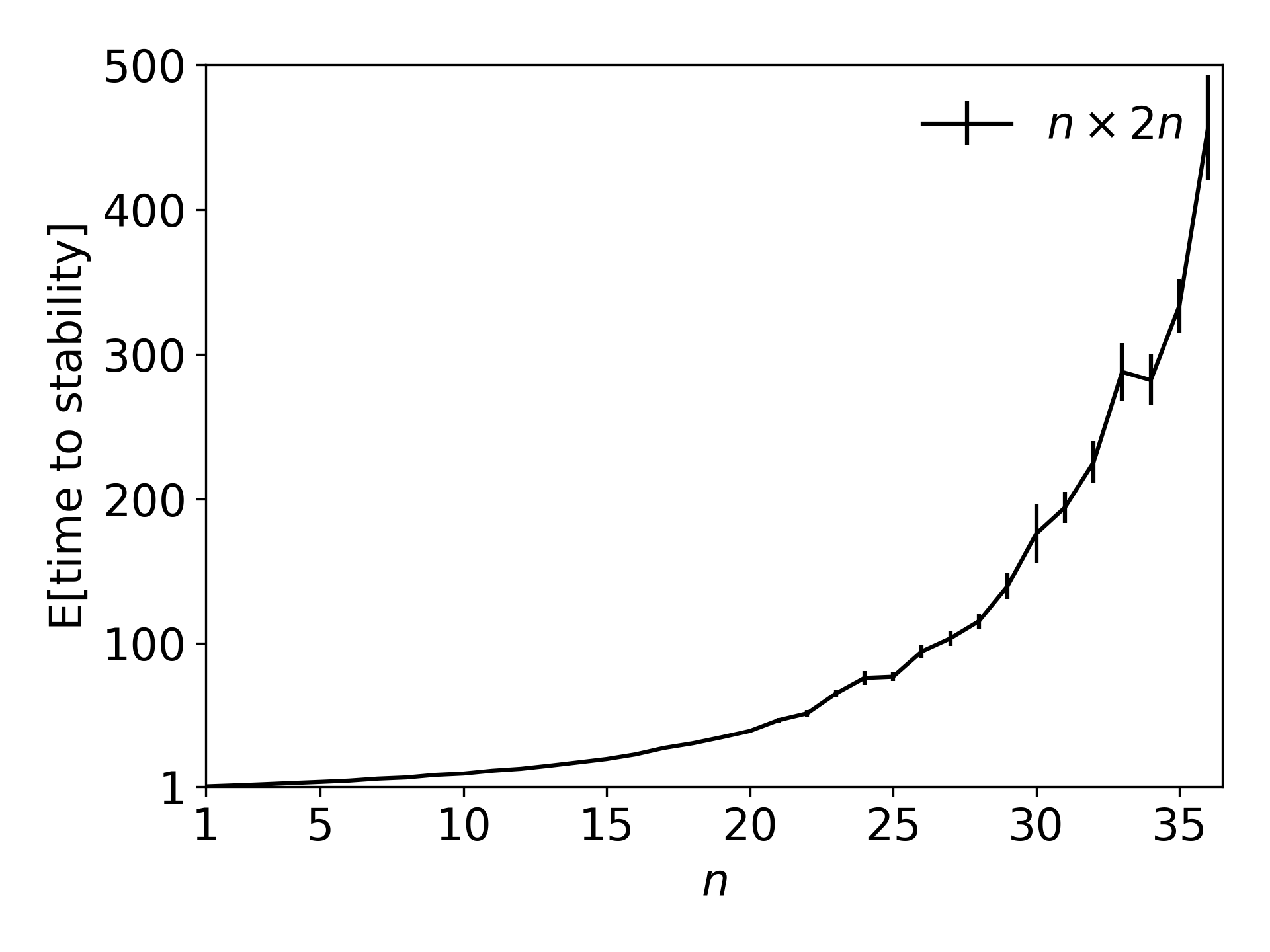}
        \caption{Time to stability}
        \label{figure_unique_increasing_a}
\end{subfigure}
\begin{subfigure}[t]{.34\textwidth}
\centering
\captionsetup{justification=centering}
\includegraphics[width=\linewidth]{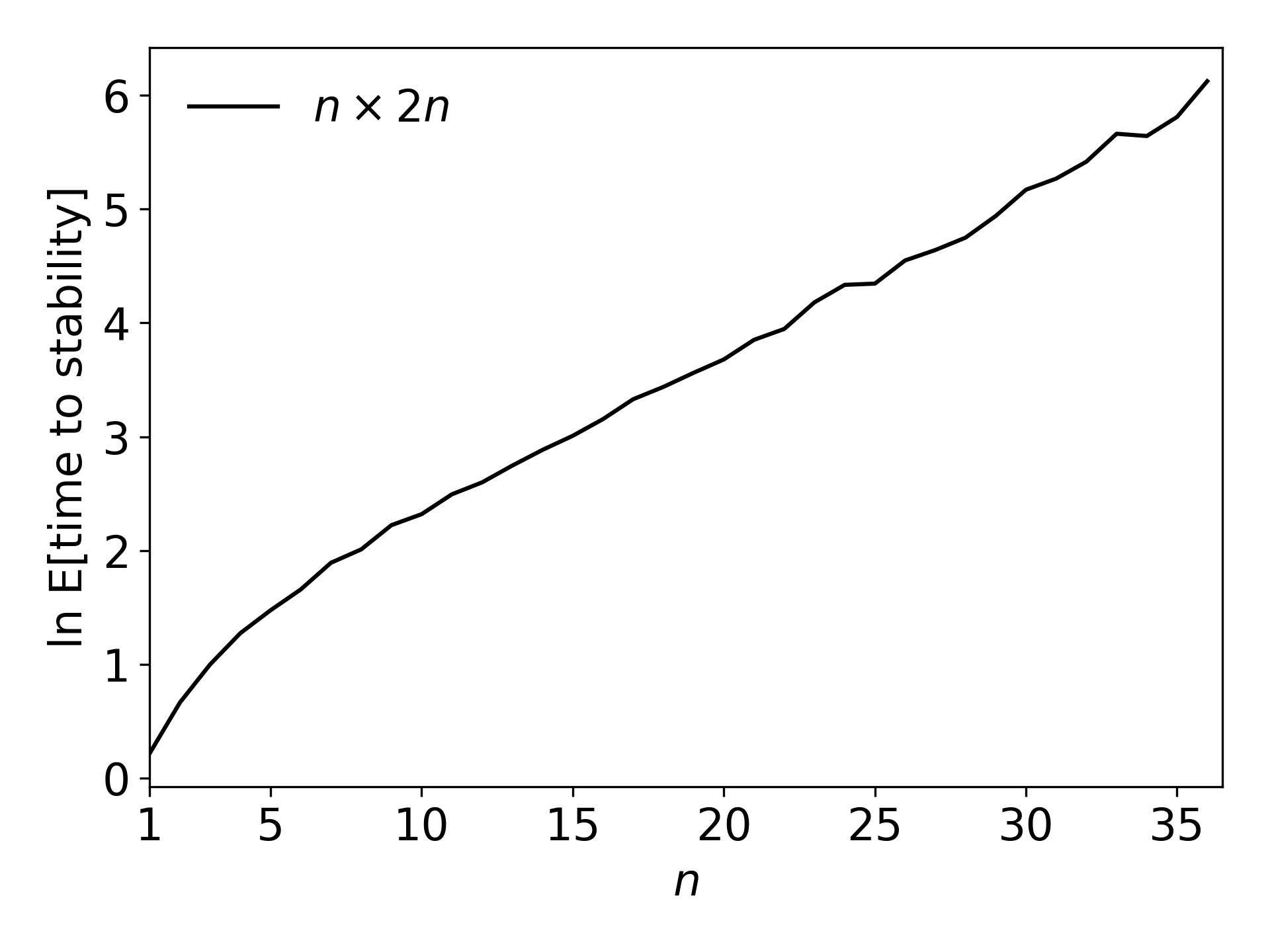}
        \caption{$\ln\left[\text{Time to stability}\right]$}
        \label{figure_unique_increasing_b}
\end{subfigure}
\begin{subfigure}[t]{.34\textwidth}
\centering
\captionsetup{justification=centering}
\includegraphics[width=\linewidth]{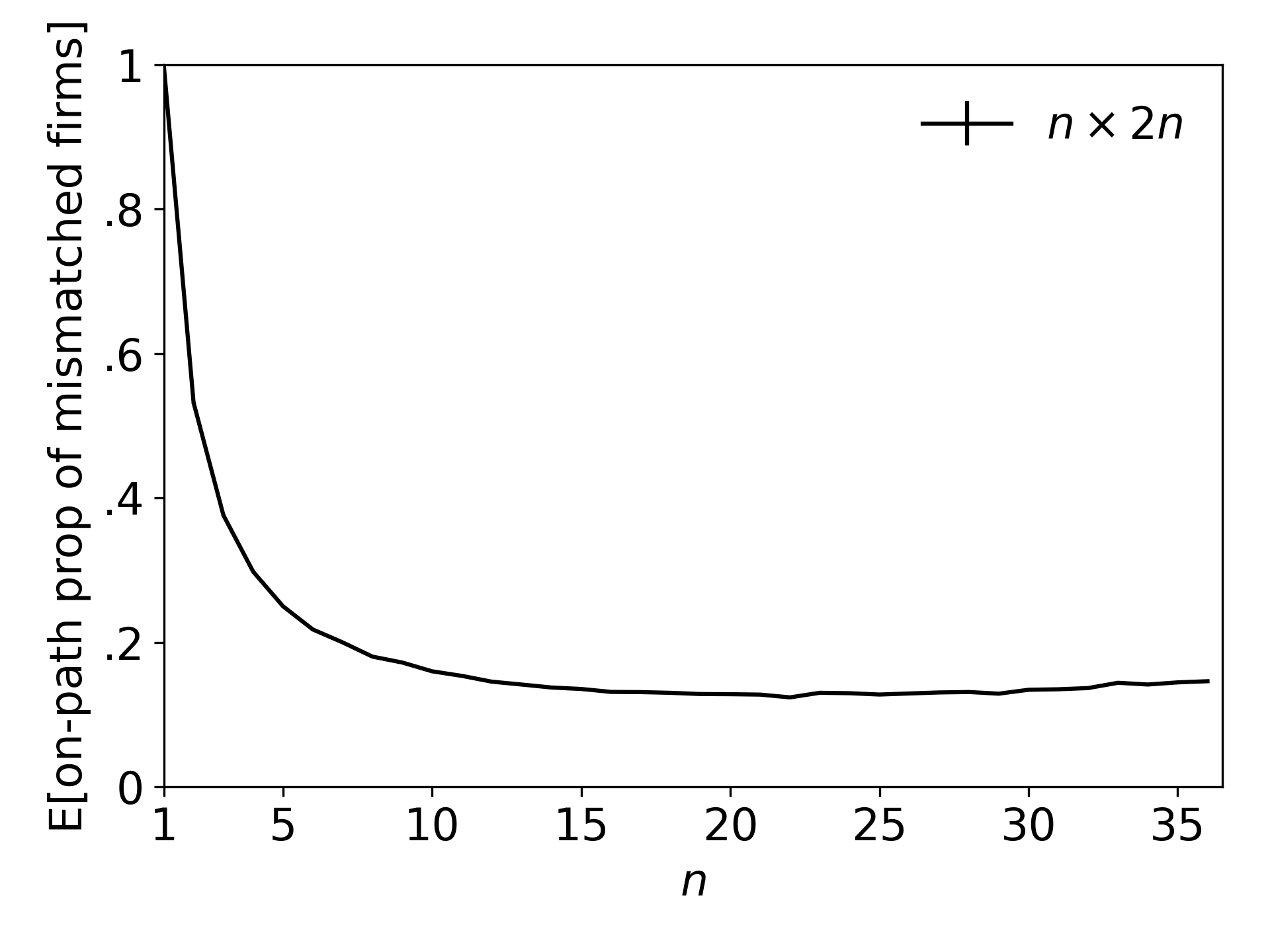}
        \caption{On-path average proportion of mismatched firms}
        \label{figure_unique_increasing_c}
\end{subfigure}
\begin{subfigure}[t]{.34\textwidth}
\centering
\captionsetup{justification=centering}
\includegraphics[width=\linewidth]{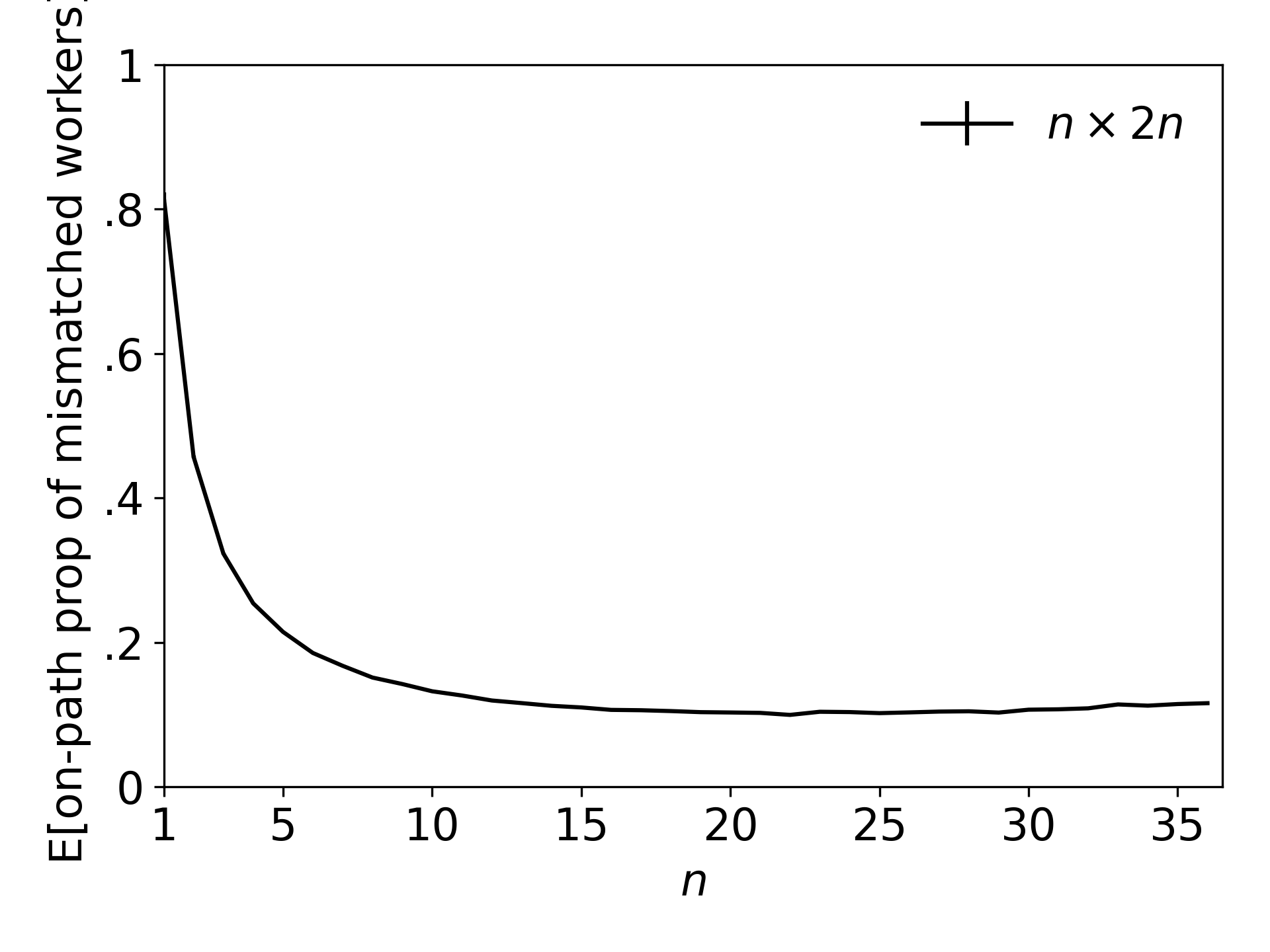}
        \caption{On-path average proportion of mismatched workers}
        \label{figure_unique_increasing_d}
\end{subfigure}
\captionsetup{justification=centering}
\caption{Random $n \times 2n$ markets with a unique stable matching}
\label{figure_unique_increasing}
\end{figure}

Figures~\ref{figure_unique_fixed} and~\ref{figure_unique_increasing} show that unique stable matchings are also fragile, regardless of the size of the imbalance. Although the stabilization process is faster compared to markets with multiple stable matchings, it still lasts for a substantial, and possibly exponentially large, amount of time; see panels (\ref{figure_unique_fixed_a}), (\ref{figure_unique_fixed_b}), (\ref{figure_unique_increasing_a}), and (\ref{figure_unique_increasing_b}). Furthermore, the dynamics typically entail long periods with a considerable proportion of agents being mismatched, as illustrated by panels (\ref{figure_unique_fixed_c}), (\ref{figure_unique_fixed_d}), (\ref{figure_unique_increasing_c}), and (\ref{figure_unique_increasing_d}).

\section{Discussion}

This paper demonstrates the fragility of stable matchings. Even for slight deviations from stability, the stabilization dynamics typically take a very long time, involving many mismatched market participants for extended periods. In markets with multiple stable matchings, stabilization often does not lead back to the slightly perturbed matching or even to a close, similar stable matching. In this sense, stable matchings are not ``locally stable.'' Our results hold for a broad and natural class of dynamics, which provide the decentralized foundation of stable matchings and underlie stability as a solution concept, based on the absence of blocking pairs.\footnote{We show the fragility of stable matchings under favorable conditions for their robustness. We focus on complete-information markets and dynamics that ensure convergence to stability. Furthermore, we suppose that an initial deviation from stability, resulting from some arbitrary shock, is slight; and that there are no additional shocks that could hinder the stabilization process. Despite these favorable conditions, stable matchings are fragile. In settings with incomplete information and changing market conditions, even greater hurdles to stability may emerge.}

Stable outcomes, whenever they exist, may also be fragile in other environments. For several decades, researchers have sought to identify conditions, sometimes perceived as restrictive, that guarantee the existence of stable outcomes in various settings such as many-to-one and many-to-many markets, roommate markets, and more recently, supply chains and trading networks (e.g., \citealp{hatfield2023generalized}), among others; or to find outcomes close to stable when no stable outcome exists.\footnote{Little is still known about the core size in many of these settings.} Significant efforts have also been made to establish the decentralized foundation of stable outcomes in these settings, using \cite{roth1990random}-type dynamics analogous to those in this paper. Our results suggest that even stable outcomes, when they exist, may not be inherently robust.

Our findings highlight the importance of further analysis of the robustness of stable matchings; this area has received much less attention than the robustness analysis of equilibria, other solution concepts, and mechanisms in economics. It is possible that stable matchings, as currently defined, are less fragile under certain dynamics. The Online Appendix considers dynamics that, at each step, match multiple blocking pairs whenever possible, along with several other reasonable dynamics. However, none of these dynamics even guarantee convergence to stability, which is arguably a minimal requirement for suitable dynamics. Exploring alternative dynamics appears to be a valuable direction for future research. For alternative dynamics, not necessarily relying on formation of blocking pairs, it could be more appropriate to define stability, as a solution concept, more generally, as being based on the absence of beneficial rematching opportunities under these specific dynamics.\footnote{In the school choice context, stability corresponds to elimination of justified envy and is also regarded as a normative fairness criterion \citep{kojima2010axioms}.} 

There are also other directions for future research. One such direction involves deeper analysis of fragments, a novel concept whose role in various markets is yet to be fully understood, including problems related to algorithmic, probabilistic, and combinatorial aspects. Another direction is using our framework to define numerical robustness measures for stable matchings, which is possible even in ordinal markets, and to compare the fragility levels of different stable matchings within a given market or across markets. Identifying more robust stable matchings may be a promising area.

\appendix
\section*{Appendix A. McVitie and Wilson (1971)'s Algorithm}
\addcontentsline{toc}{section}{Appendix A. McVitie and Wilson (1971)'s Algorithm}
\label{appendix-a}

\cite{gale1962college} showed that the set of stable matchings is non-empty for any matching market. In doing so, they introduced the deferred acceptance algorithm (DA) described below:
\begin{enumerate}
    \item[] Step 1. Each firm proposes to her favorite (acceptable) worker (if any).  Each worker who receives more than one offer, tentatively holds on to his favorite (acceptable) offer (if any), and rejects all other offers.
    \item[] Step $k$. Each firm who was rejected in step $(k-1)$ makes an offer to her favorite (acceptable) worker who has not rejected her yet (if any). Each worker holds his most preferred acceptable offer to date (if any), and rejects all other offers. 
    \item[] Stop. When no further proposals are made, match each firm to the worker (if any) whose proposal he is holding. 
\end{enumerate}

The description above corresponds to the firm-proposing DA. \cite{gale1962college} proved that its output is the firm-optimal stable matching $\mu_F$. Namely, each firm weakly prefers her assigned partner under this matching to the partner she would get in any other stable matching. The worker-proposing DA defined in a similar way produces the worker-optimal stable matching $\mu_W$. In fact, firms and workers have opposing interests with regard to stable matchings: if firms prefer one stable matching over another, workers hold the opposite opinion \citep{knuth1976marriages}. In particular, the firm-optimal stable matching is also worker-pessimal; similarly, the worker-optimal stable matching is firm-pessimal.\footnote{In fact, the set of stable matchings forms a distributive lattice (proved by Conway, see \citealp{knuth1976marriages}).} Furthermore, the set of unmatched agents is the same across all stable matchings, according to the so-called ``rural hospital theorem'' proved by \cite{mcvitie_stable_1970}.

\cite{mcvitie1971stable} developed an alternative sequential version of the firm-proposing DA. In their version, proposals by firms to workers are made one at a time and in an arbitrary order, but not simultaneously as in \cite{gale1962college}'s algorithm. However, both versions result in the same set of proposals and produce the same matching, the firm-optimal stable matching $\mu_F$.
 
Importantly, \cite{mcvitie1971stable} also adapted the firm-proposing DA, by using an operation called \textit{breakmarriage}, to determine the set of all stable matchings starting from the firm-optimal stable matching. In their paper, they considered a balanced market, i.e., $n=m$, with all agents being acceptable. In what follows, we provide extended results---potentially of independent interest---allowing for a market imbalance and unacceptable agents and connect them to our decentralized dynamics; see Lemma~\ref{lemma_breakmarriage} and discussion at the end of this section.\footnote{These extended results can be easily used to generalize our main findings to arbitrary markets.}

For any stable matching $\mu$ and any paired firm $f$ with $\mu(f)=w \in W$, the operation $breakmarriage(\mu, f)$ is defined as follows.  It first ``breaks'' the $(f,w)$-partnership making firm $f$ free and worker $w$ ``semi-free'' in the sense that he will only accept a proposal from a firm he prefers to $f$ \citep{gusfield1987three}; all other agents are initially matched as in $\mu$. Then, the operation forces firm $f$ to propose to the worker following $w$ in her list, thus initiating a sequence of further proposals, rejections, and (tentative) acceptances according to the restarted firm-proposing DA. As there is at most one free firm that has not yet run out of her acceptable alternatives at any time,\footnote{Indeed, all unmatched firms have already run out of their acceptable options in the initial DA instance.} the corresponding process is deterministic and terminates when 
\begin{inparaenum}[(i)]
\item worker $w$ receives a proposal from a firm he prefers to his current partner $f$,
\item or some previously matched firm runs out of her acceptable alternatives,
\item or some previously unmatched worker receives a proposal from an acceptable firm.
\end{inparaenum}
In the first case, the termination is called \textit{successful} and leads to a new stable matching $\mu' \neq \mu$ (Theorem 3 in \citealp{mcvitie1971stable}, also stated as a lemma below).

\begin{namedthm*}{Lemma {\normalfont \citep{mcvitie1971stable}}}
\label{lemma_mcvitie}
Consider any stable matching $\mu$ and any paired firm $f$ with $\mu(f)=w \in W$. If the operation $breakmarriage(\mu, f)$ terminates successfully, the resulting matching $\mu'$ is stable. 
\end{namedthm*}

\begin{proof}
The resulting matching $\mu'$ is individually rational by definition. Therefore, it is sufficient to prove that there are no firm-worker blocking pairs.

Suppose that some firm $f'$ prefers worker $w'$ to her current match under $\mu'$. Then, $w'$ must have had a proposal from $f'$. Since the termination is successful, any worker ends up with his best match among those consisting of all firms who have previously proposed to that worker and the outside option of being unmatched. In particular, $w'$ prefers his current match to $f'$. Therefore, $(f',w')$ does not block $\mu'$, as desired.
\end{proof}

In the following lemma, we interpret the breakmarriage operation in terms of the decentralized dynamics of our interest. This lemma plays an important role in proving Theorem~\ref{theorem_characterization}.

\begin{lemma}
\label{lemma_breakmarriage}
Under the conditions from the previous lemma, suppose that $breakmarriage(\mu, f)$ terminates successfully and generates stable matching $\mu'$. Then, there exists a finite sequence of best blocking pairs that leads from almost stable matching $\mu_{-f}$ to $\mu'$.
\end{lemma}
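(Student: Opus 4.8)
The plan is to let the decentralized path imitate, step for step, the \emph{successful} proposals of the run of $breakmarriage(\mu,f)$. Recall that this operation breaks $(f,w)$, renders $f$ free and $w$ semi-free, and then runs the restarted firm-proposing DA, in which at every moment there is at most one free firm — the ``active'' firm — proposing down her list. First I would list the proposals that are \emph{accepted} rather than rejected, in the order they occur: the active firm $g$ proposes to a worker $u$ who prefers $g$ to his current holder, $g$ and $u$ become matched, and $u$'s former holder, if any, becomes the new active firm. The candidate sequence of best blocking pairs to satisfy is exactly this list of accepted pairs $(g,u)$, ending with the terminal acceptance in which some firm $f^{\ast}$ is accepted by the semi-free $w$ (which exists precisely because the termination is successful, yielding $\mu'$ by Lemma~\ref{lemma_mcvitie}).

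Next I would record the bookkeeping invariant: after satisfying the first $k$ accepted pairs, the decentralized matching coincides with the tentative DA matching after its first $k$ accepted proposals, the unmatched agents being exactly the current active firm and the worker $w$. This holds initially, since $\mu_{-f}$ leaves only $f$ and $w$ single and agrees with $\mu$ elsewhere (and $w$ semi-free versus $w$ single differ only in bookkeeping, not in actual matches). It is preserved because satisfying $(g,u)$ reproduces precisely the effect of an accepted proposal — matching $g$ to $u$ and freeing $u$'s former holder — while the interleaved \emph{rejected} proposals change no matches. Each satisfied $(g,u)$ is a genuine blocking pair: the active firm $g$ is single and finds $u$ acceptable, and $u$ accepts only because he strictly prefers $g$ to his current partner (in the terminal step, to $f$, with $w$ in fact single and hence preferring $f^{\ast}$ to being single).

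The crux is to verify that each satisfied pair is a \emph{best} blocking pair for one of its members. Here I would invoke the standard DA invariant, valid throughout $breakmarriage$ once the stability of $\mu$ is read as implicit rejections (if $g'$ prefers $u'$ to $\mu(g')$, stability forces $\mu(u')\succ_{u'}g'$): at any stage every \emph{tentatively matched} firm has been rejected by all workers she prefers to her current holder, and since in firm-proposing DA workers only improve, those rejections persist. Consequently the only firm that can block is the active firm. For a non-terminal pair $(g,u)$ with $u\neq w$, this pins down $u$'s unique blocking partner in the current matching to be the active firm $g$, so $(g,u)$ is trivially the best blocking pair \emph{for the worker} $u$. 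For the terminal pair $(f^{\ast},w)$, the active firm $f^{\ast}$ has been rejected by every worker she prefers to $w$, and the only single worker available is $w$; thus $w$ is $f^{\ast}$'s most preferred blocking partner, and $(f^{\ast},w)$ is the best blocking pair \emph{for the firm} $f^{\ast}$. By the invariant the matching after the terminal step equals $\mu'$, which completes the path.

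The main obstacle — and the reason the two sides must be treated asymmetrically — is that in $\mu_{-f}$ the worker $w$ is genuinely unmatched, so $w$ forms a blocking pair with \emph{every} firm who prefers him to her current partner; in particular the active firm may well prefer $w$ to the worker $u$ who just accepted her, so $(g,u)$ need \emph{not} be best for the firm $g$. This is exactly why I would certify the non-terminal steps as best \emph{for the accepting worker}, whose blocking partners are pinned to the single active firm by the DA invariant, rather than for the proposing firm; worker monotonicity is what guarantees this invariant (and hence the uniqueness of the worker's blocking partner) persists as the chain of displacements unfolds.
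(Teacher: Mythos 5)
Your proposal is correct and follows essentially the same route as the paper: replicate the accepted proposals of the restarted deferred acceptance run, and certify each satisfied pair as a best blocking pair via the DA invariant, with stability of $\mu$ read as implicit rejections. The only difference is cosmetic: the paper certifies every accepted pair, including the terminal proposal to $w$, as the best blocking pair \emph{for the accepting worker} (which also works for $w$, since any other firm blocking with the single $w$ must have been rejected by him, implicitly or during the run, and hence ranks below $f \prec_w f^{\ast}$), whereas you certify the terminal pair as best \emph{for the proposing firm}; both arguments are valid.
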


\begin{proof}
In the course of the breakmarriage operation, there is exactly one free firm---that has not yet run out of her acceptable alternatives---at any time until the successful termination.

At any point of the process, if free $f'$ eventually gets a tentative match with worker $w'$, they form a blocking pair $(f',w')$ for the corresponding matching at that point.

In fact, $(f',w')$ must be the best blocking pair for $w'$. Indeed, for the sake of contradiction,  suppose that instead $(f'',w')$, with $f'' \neq f'$, is the most preferred blocking pair for $w'$. Then, by the algorithm, $w'$ must have had a proposal from $f''$. This contradicts $w'$ agreeing to tentatively match with less desirable $f'$ at the current point of the process.
\end{proof}

The theorem below is the main result in \cite{mcvitie1971stable}, see Theorem 4 in their paper.

\begin{namedthm*}{Theorem {\normalfont \citep{mcvitie1971stable}}}
\label{theorem_mcvitie}
Any stable matching $\mu \neq \mu_F$ can be obtained from the firm-optimal stable $\mu_F$ by successive applications of the breakmarriage operation.
\end{namedthm*}

\begin{proof}
Consider any stable matching $\mu \neq \mu_F$. Then, there is a firm $f$ whose partner under $\mu$ is not the same as under $\mu_F$. In fact, by the rural hospital theorem (requiring the set of unmatched agents to be the same across all stable matchings), both partners $\mu(f)$ and $\mu_F(f)$ are workers. Apply $breakmarriage(\mu_F, f)$.

We will show that no previously matched firm $f' \in \mu(W)$ can propose to a worse alternative than $\mu(f')$, and thus the process will terminate successfully.\footnote{First, the condition (ii) of ``unsuccessful'' termination would trivially fail. Second, if (iii) some previously unmatched worker $w''$ received a proposal from an acceptable firm $f''$, $f'' \in \mu(W)$ would prefer $w''$ to $\mu(f'')$, and thus stable $\mu$ would be blocked by $(f'', w'')$, leading to a contradiction.} Suppose by contradiction that $f' \in \mu(W)$ is the first firm who is rejected by her partner under $\mu$, i.e., worker $\mu(f')$. Then, worker $\mu(f')$ has already received a proposal from a preferred firm $f''$. We must have $f'' \in \mu(W)$ as well since all firms outside $\mu(W)$ have already run out of their acceptable options in the initial DA instance. By our assumption, $f''$ has not yet proposed to worker $\mu(f'')$; otherwise, $\mu(f'')$ rejected $f''$ before $\mu(f')$ rejected $f'$. Therefore, $f''$ strictly prefers $\mu(f')$ to $\mu(f'')$. This implies that stable $\mu$ is blocked by $(f'', \mu(f'))$, leading to a contradiction.

Therefore, $breakmarriage(\mu_F, f)$ outputs a stable matching in which no previously matched firm has a worse partner than under $\mu$. If the resulting stable matching is not $\mu$, we can iteratively apply the breakmarriage operation until we reach $\mu$ after a finite number of applications. Indeed, the breakmarriage operation always makes all firms weakly worse off with some being strictly worse off.
\end{proof}

Notably, using the same arguments as in the proof, we can obtain the worker-optimal stable $\mu_W$ by successive applications of breakmarriage operations starting from any stable matching, not only from the firm-optimal stable $\mu_F$. Furthemore, by symmetric arguments as in this section, from the worker-optimal stable $\mu_W$, we can get any stable matching $\mu \neq \mu_W$, now by switching to the worker-proposing DA and using symmetrically defined breakmarriage operations with the roles of firms and workers reversed. 

In essence, we can extend the above theorem well beyond initial $\mu_F$: by using appropriate versions of DA and breakmarriage operations, any stable matching $\mu'$ can be obtained from any stable matching $\mu \neq \mu'$. We use this observation, together with our interpretation of breakmarriage operations in Lemma~\ref{lemma_breakmarriage}, to prove Theorem~\ref{theorem_characterization}, one of our main results. 

\section*{Appendix B. Proof of Lemma~\ref{lemma_definition} and Theorem~\ref{theorem_characterization}}
\addcontentsline{toc}{section}{Appendix B. Proof of Lemma~\ref{lemma_definition} and Theorem~\ref{theorem_characterization}}
\label{appendix-b}

In this section, we prove Lemma~\ref{lemma_definition} and Theorem~\ref{theorem_characterization} stated in Section~\ref{section_characterization}.

\begin{delayedproof}{lemma_definition}
Up to relabeling, suppose that fragment $(\bar{F}, \bar{W})$ of size $k<n$ is induced by matching $\bar{\mu}$ such that $\bar{\mu}(f_i)=w_i$ for all $i \in [k]$.

Assume by contradiction that for some stable matching $\mu$ in the original market, we have $\mu(\bar{F}) \neq \bar{W}$. Then, there exists $\bar{w} \in \bar{W}$ with $\mu(\bar{w}) \notin \bar{F}$. Up to relabeling, $\bar{w}=w_1$.\bigskip

\noindent \textit{Step 1.} By definition of $\bar{\mu}$,
\begin{equation*}
\bar{\mu}(w_1)=f_1 \succ_{w_1} \mu(w_1) \notin \bar{F}.
\end{equation*}

\noindent By stability of $\mu$,
\begin{equation*}
{\mu}(f_1) \succ_{f_1} w_1 = \bar{\mu}(f_1).
\end{equation*}

\noindent By definition of $\bar{\mu}$, ${\mu}(f_1) \in \bar{W}$. Also, ${\mu}(f_1) \neq w_1$. Then, up to relabeling, we can assume ${\mu}(f_1)=w_2$. Therefore,
\begin{equation*}
{\mu}(f_1) = w_2 \succ_{f_1} w_1 = \bar{\mu}(f_1).
\end{equation*}

\noindent \textit{Step 2.} By stability of $\bar{\mu}$,
\begin{equation*}
\bar{\mu}(w_2)=f_2 \succ_{w_2} f_1 = \mu(w_2).
\end{equation*}

\noindent By stability of $\mu$,
\begin{equation*}
{\mu}(f_2) \succ_{f_2}  w_2 = \bar{\mu}(f_2).
\end{equation*}

\noindent By definition of $\bar{\mu}$, ${\mu}(f_2) \in \bar{W}$. In addition, ${\mu}(f_2) \neq w_1, w_2$.\footnote{Indeed, since $\mu(f_1)=w_2$, we must have ${\mu}(f_2) \neq w_2$.} As before, up to relabeling, assume ${\mu}(f_2)=w_3$.\bigskip

We can proceed recursively, setting $\mu(f_i)=w_{i+1}$ for $i < k$, until we reach the final step.\bigskip

\noindent \textit{Step k.} By stability of $\bar{\mu}$,
\begin{equation*}
\bar{\mu}(w_k)=f_k \succ_{w_k} f_{k-1} = \mu(w_k).
\end{equation*}

\noindent By stability of $\mu$,
\begin{equation*}
{\mu}(f_k) \succ_{f_k}  w_k = \bar{\mu}(f_k).
\end{equation*}

\noindent By definition of $\bar{\mu}$, ${\mu}(f_k) \in \bar{W}$. However, by our previous steps, ${\mu}(f_k) \neq w_1, w_2, \ldots, w_k$, i.e., ${\mu}(f_k) \notin \bar{W}$. This delivers a contradiction, and the result follows.
\end{delayedproof}

The following observation, stated as Lemma~\ref{lemma_submarket}, is used in our proof of Theorem~\ref{theorem_characterization} below.

\begin{lemma}
\label{lemma_submarket}
Suppose that there are no non-trivial fragments in the original market. If there is a trivial fragment, then after removing agents forming this fragment from the original market, the remaining market has no non-trivial fragments as well.
\end{lemma}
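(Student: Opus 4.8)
The plan is to argue by contradiction. Write the given trivial fragment as $(\bar F_0,\bar W_0)$, induced by its unique inducing matching $\bar\mu_0$ (trivial fragments have a unique inducing matching, with which every stable matching of the original market $\mathcal M$ agrees), and let $\mathcal M'$ be the remaining market obtained by deleting $\bar F_0\cup\bar W_0$. Suppose, toward a contradiction, that $\mathcal M'$ still contains a non-trivial fragment $(\bar F,\bar W)$, induced by some matching $\nu$. The goal is to glue the deleted trivial fragment back on and exhibit a \emph{non-trivial} fragment $(\bar F\cup\bar F_0,\ \bar W\cup\bar W_0)$ in $\mathcal M$, contradicting the hypothesis that $\mathcal M$ has none.

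First I would record a gluing fact: for any matching $\mu'$ that is stable in $\mathcal M'$, the combined matching $\bar\mu_0\sqcup\mu'$ is stable in $\mathcal M$. The check is a short case analysis of a hypothetical blocking pair $(f,w)$: if both endpoints lie in $\bar F_0\cup\bar W_0$ it would block $\bar\mu_0$; if neither does it would block $\mu'$; and a cross pair is ruled out directly by the defining inequalities of $(\bar F_0,\bar W_0)$, since every agent of the fragment prefers its $\bar\mu_0$-partner to all outsiders. Next I would verify that $G:=\bar F\cup\bar F_0$ and $H:=\bar W\cup\bar W_0$ form a fragment in $\mathcal M$ induced by $\rho:=\nu\sqcup\bar\mu_0$. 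The submarket on $\bar F\times\bar W$ is identical whether computed in $\mathcal M$ or $\mathcal M'$, so the three fragment axioms reduce to case checks: stability of $\rho$ on $G\times H$ follows from stability of $\nu$ and of $\bar\mu_0$ on the two internal blocks, with cross blocks between the blocks killed exactly as in the gluing fact; the "prefers partner to every outsider" conditions split by block, being the defining property of $(\bar F_0,\bar W_0)$ for agents of $\bar F_0,\bar W_0$ and the defining property of $(\bar F,\bar W)$ in $\mathcal M'$ for agents of $\bar F,\bar W$. Here one uses that an outsider relative to $H$ in $\mathcal M$, namely $w\in W\setminus(\bar W\cup\bar W_0)$, is exactly an outsider $w\in(W\setminus\bar W_0)\setminus\bar W$ of the fragment $(\bar F,\bar W)$ in $\mathcal M'$.

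Finally I would upgrade $(G,H)$ to a non-trivial fragment. Non-triviality of $(\bar F,\bar W)$ in $\mathcal M'$ supplies two stable matchings $\mu'_1,\mu'_2$ of $\mathcal M'$ disagreeing on $\bar F$; the gluing fact turns these into stable matchings $\bar\mu_0\sqcup\mu'_1$ and $\bar\mu_0\sqcup\mu'_2$ of $\mathcal M$ that agree on $\bar F_0$ but still disagree on $\bar F\subseteq G$, hence disagree on $G$. Thus $(G,H)$ is a non-trivial fragment of $\mathcal M$, contradicting the hypothesis and completing the argument. (By Lemma~\ref{lemma_definition} and triviality the gluing is in fact a bijection between stable matchings of $\mathcal M$ and $\mathcal M'$, but only the extension direction is needed here.)

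The main obstacle — and the reason the statement is not immediate — is that $(\bar F,\bar W)$ by itself need not be a fragment of $\mathcal M$: a firm of $\bar F$ is free to prefer some worker of the deleted set $\bar W_0$ to its $\nu$-partner, so the "prefers partner to every outsider" condition can fail once the deleted agents reappear. The fix, which is the real content of the lemma, is that re-attaching the \emph{whole} trivial fragment repairs precisely this gap: every agent of $\bar W_0$ prefers its $\bar\mu_0$-partner to all outside firms (and symmetrically for $\bar F_0$), so no cross block between the reunited blocks can form, and the offending comparisons with $\bar W_0$ are absorbed into the interior of the enlarged fragment. Care is needed only in bookkeeping the "outside" sets correctly when translating the fragment inequalities between $\mathcal M'$ and $\mathcal M$.
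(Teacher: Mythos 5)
Your proof is correct and takes essentially the same approach as the paper: the paper's own proof is a two-line sketch stating that a non-trivial fragment in the remaining market, ``when merged with the excluded trivial fragment, extends to a non-trivial fragment in the original market,'' and your gluing construction $(\bar F\cup\bar F_0,\ \bar W\cup\bar W_0)$ with inducing matching $\nu\sqcup\bar\mu_0$, together with the case analysis of blocking pairs and the lifting of two disagreeing stable matchings of $\mathcal M'$ to $\mathcal M$, is exactly the verification the paper leaves as ``straightforward.'' No gaps; your final paragraph even identifies correctly why re-attaching the whole trivial fragment is what repairs the failure of $(\bar F,\bar W)$ alone to be a fragment of $\mathcal M$.
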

\begin{proof}
Suppose, towards a contradiction, that there is a non-trivial fragment in the remaining market. It is straightforward to verify that when merged with the excluded trivial fragment, it extends to a non-trivial fragment in the original market.
\end{proof}

\begin{delayedproof}{theorem_characterization}
It suffices to prove $(ii) \Rightarrow (i) \Rightarrow (iii) \Rightarrow (ii)$.\bigskip

\noindent \textbf{(ii) $\bm \Rightarrow$ (i):} Consider any unstable matching $\lambda$. Then, by $(ii)$, it is sufficient to show that $\lambda$ can attain an almost stable matching. To prove this, we use the recent stabilization result due to \cite{ackermann2011uncoordinated} from the computer science literature, which strengthens the classical result of \cite{roth1990random}: for any unstable matching, there exists a finite sequence of best blocking pairs that leads to a stable matching. 

By this stabilization result, matching $\lambda$ can reach some stable matching $\mu$. Then, $\lambda$ can attain an almost stable matching that is one blocking pair away from $\mu$. Indeed, we can terminate the path to $\mu$ right before the last step at which the last blocking pair is satisfied.\bigskip

\noindent \textbf{(i) $\bm\Rightarrow$ (iii):} Suppose, for contradiction, that there is a non-trivial fragment, induced by matching $\bar{\mu}$. Focus on any unstable matching $\lambda$ that agrees with $\bar{\mu}$ over the fragment. Then, any stable matching that $\lambda$ can attain must also agree with $\bar{\mu}$ over the fragment.

Since the fragment is non-trivial, there are some stable matchings that disagree with $\bar{\mu}$ over the fragment; and thus they cannot be reached from $\lambda$. This yields a contradiction.\bigskip

\noindent \textbf{(iii) $\bm \Rightarrow$ (ii):} We prove this implication by induction on the size $n=|F|=|W|$ of markets with no non-trivial fragments. For $n=1$, the result holds immediately. Assume that it holds for all such markets of size at most $(n-1)$, where $n \geq 2$. 

Consider any market of size $n$, again with no non-trivial fragments. The remaining proof relies on the following two observations.\bigskip

\noindent \textit{Observation 1.} Consider an arbitrary almost stable matching $\lambda$ which corresponds to some stable matching $\mu$. We show that $\lambda$ can reach either any stable matching---as desired---or any other almost stable matching for stable matching $\mu$. A proof of this crucial observation is given at the end of this section.

\bigskip

\noindent \textit{Observation 2.} Lemma~\ref{lemma_breakmarriage} in Appendix~\hyperref[appendix-a]{A} establishes the connection between the decentralized dynamics of our interest and breakmarriage operations in \cite{mcvitie1971stable}'s algorithm used to find the set of all stable matchings; for the algorithm's description and more details on the connection, see Appendix~\hyperref[appendix-a]{A}. Roughly speaking, any breakmarriage operation, applied to a stable matching, first ``divorces'' a stable pair of agents and then ``restarts'' some version of DA. Lemma~\ref{lemma_breakmarriage} shows that for all relevant breakmarriage operations, the restarted DA algorithm following breakups can be emulated by our decentralized dynamics.\bigskip

By combining these two observations, we obtain the desired result. Indeed, consider an arbitrary almost stable matching $\lambda$ that corresponds to some stable matching $\mu$. Observation 1 allows to attain any almost stable matching---or, equivalently, to ``divorce'' any stable pair---for $\mu$; assuming we have not shown that $\lambda$ can reach any stable matching. This, together with Observation 2, implies that the dynamics can replicate any relevant breakmarriage operation for $\mu$ and attain a new stable matching $\mu'$. Similar to the proof of  $(ii) \Rightarrow (i)$, since $\lambda$ can attain stable $\mu'$, one of its almost stable matchings $\lambda'$ can also be attained. 

Next, we apply the same reasoning for almost stable $\lambda'$ corresponding to stable $\mu'$. By Observation 1, any almost stable matching for $\mu'$ can be attained; assuming we have not demonstrated that $\lambda'$---and, consequently, $\lambda$---can reach any stable matching. As before, in conjunction with Observation 2, this allows to attain a new stable matching $\mu''$. By proceeding recursively and taking into account our discussion in the last two  paragraphs of Appendix~\hyperref[appendix-a]{A}, it follows that the decentralized dynamics can emulate all relevant breakmarriage operations in \cite{mcvitie_stable_1970}'s algorithm. Specifically, $\lambda$ can attain any stable matching, as desired. Below, we complete the proof by establishing Observation 1.

\bigskip

\noindent \textit{Proof of Observation 1.} 
Consider an arbitrary almost stable matching $\lambda$. Up to relabeling, $\lambda =  \mu_{-f_n}$ for stable matching $\mu$, where $\mu(f_i)=w_i$ for any $i \in [n]$. We want to show that $\mu_{-f_n}$ can attain either any stable matching or any other almost stable matching $\mu_{-f_i}$, $i<n$.\bigskip

\noindent \textit{Step 1.} Consider $\bar{F} \equiv \{f_i\}_{i \leq n-1}$ and $\bar{W} \equiv \{w_i\}_{i \leq n-1}$. Suppose first that  $\bar{F}$ and $\bar{W}$ form a fragment. This fragment must be trivial due to the absence of non-trivial fragments. From the definition of a trivial fragment, it follows that stable matching $\mu$ is the unique stable matching. Then, $\mu_{-f_n}$ can easily attain unique stable $\mu$---and therefore any stable matching, as desired---say, by the stabilization result as in the proof of  $(ii) \Rightarrow (i)$; or, more directly, by allowing $f_n$ to choose her best blocking pair $(f_n, w_n)$.

Henceforth, suppose that $(\bar{F}, \bar{W})$ is not a fragment. By definition, there is an ``inside'' agent $a \in \bar{F} \cup \bar{W}$ that prefers some ``outside'' agent $a' \notin  \bar{F} \cup \bar{W}$ to her stable partner $\mu(a)$.
\begin{enumerate}
    \item If $a \in \bar{F}$, then, up to relabeling, $a=f_{n-1}$. Also, $a'=w_{n}$ is the only ``outside'' worker. Thus, $w_n \succ_{f_{n-1}} \mu(f_{n-1})=w_{n-1}$. By stability of $\mu$, we have  $f_n = \mu(w_n) \succ_{w_n} f_{n-1}$.
    
    Note that $\mu_{-f_n}$ can attain the new almost stable matching $\mu_{-f_{n-1}}$ by sequentially satisfying two blocking pairs, $(f_{n-1}, w_n)$ and $(f_n, w_n)$. This path corresponds to agents $f_{n-1}$ and $w_n$ that successively choose their best blocking pairs, first $f_{n-1}$ and then $w_n$. The corresponding blocking pairs are indeed the best by stability of $\mu$.
    
    \item Otherwise, if $a \in \bar{W}$, then, up to relabeling, $a=w_{n-1}$. Now, $a'=f_{n}$ is the only ``outside'' firm. By symmetric arguments, $\mu_{-f_n}$ can attain $\mu_{-f_{n-1}}$ through the path on which first $w_{n-1}$ and then $f_n$ select their best blocking pairs.
\end{enumerate}

\noindent In either case, $\mu_{-f_{n}}$ can attain the new almost stable matching $\mu_{-f_{n-1}}$.\bigskip

\noindent \textit{Step 2.} Redefine subsets $\bar{F} \equiv \{f_i\}_{i \leq n-2}$ and $\bar{W} \equiv \{w_i\}_{i \leq n-2}$. Suppose first that $\bar{F}$ and $\bar{W}$ form a fragment. Then, it must be trivial. By  the definition of a trivial fragment, agents within the fragment must be matched in accordance with $\mu$ for any stable matching in the original market. By Lemma~\ref{lemma_submarket}, the remaining market without those agents has no non-trivial fragments. When ``projected'' to the
remaining market, $\mu_{-f_n}$ remains unstable and, by the induction hypothesis, can attain any stable matching in this smaller market. Since all stable matchings in the original market agree on the removed trivial fragment and continue to be stable when ``projected'' to the remaining market, it implies that matching
$\mu_{-f_n}$
can attain any stable matching in the original market. Therefore, the desired result holds.

Henceforth, suppose that $(\bar{F}, \bar{W})$ is not a fragment. By definition, there is an ``inside'' agent $a \in \bar{F} \cup \bar{W}$ that prefers some ``outside'' agent $a' \notin  \bar{F} \cup \bar{W}$ to her stable partner $\mu(a)$.

\begin{enumerate}
    \item If $a \in \bar{F}$, then, up to relabeling, $a=f_{n-2}$. Since ``outside'' $a' \in \{w_{n-1}, w_{n}\}$, we have either $w_n \succ_{f_{n-2}} \mu(f_{n-2})=w_{n-2}$ or $w_{n-1} \succ_{f_{n-2}} \mu(f_{n-2})=w_{n-2}$.
    
    In the former case, as in \textit{Step 1}, $\mu_{-f_n}$ can attain the new almost stable matching $\mu_{-f_{n-2}}$ through the path on which first $f_{n-2}$ and then $w_n$ pick their best blocking pairs.  
    
    In the latter case, similar to \textit{Step 1}, $\mu_{-f_{n-1}}$ can reach  $\mu_{-f_{n-2}}$ through the path on which first $f_{n-2}$ and then $w_{n-1}$ select their best blocking pairs.  
    
    \item If $a \in \bar{W}$, then, up to relabeling, $a=w_{n-2}$. Now, since ``outside'' $a' \in \{f_{n-1}, f_{n}\}$, we have either $f_n \succ_{w_{n-2}} \mu(w_{n-2})=f_{n-2}$ or $f_{n-1} \succ_{w_{n-2}} \mu(w_{n-2})=f_{n-2}$. 
    
    By symmetric arguments, the new almost stable matching $\mu_{-f_{n-2}}$ can be attained from $\mu_{-f_n}$ either directly or indirectly through $\mu_{-f_{n-1}}$.
\end{enumerate}
\noindent Consequently, in either case, $\mu_{-f_{n}}$ can attain new $\mu_{-f_{n-2}}$, in addition to $\mu_{-f_{n-1}}$.\bigskip

We then proceed recursively until we reach the final step, assuming that we have not established the existence of paths from $\mu_{-f_{n}}$ to all stable matchings in the previous steps.\bigskip

\noindent \textit{Step $(n-1)$.} Redefine subsets $\bar{F} \equiv \{f_1\}$ and $\bar{W} \equiv \{w_1\}$. Suppose first that $\bar{F}$ and $\bar{W}$ form a fragment. Then, it must be trivial. Therefore, $\mu_{-f_{n}}$ can attain any stable matching, as desired, by Lemma~\ref{lemma_submarket} and our induction hypothesis, similar to \textit{Steps} $2,3, \ldots, n-2$.

Henceforth, suppose that $(\bar{F},\bar{W})$ is not a fragment. Then, as before, $\mu_{-f_n}$ can reach the new almost stable matching $\mu_{-f_1}$ either directly or indirectly through some---already shown to be attainable in the previous steps---matchings $\mu_{-f_{n-1}}$, $\mu_{-f_{n-2}}$,\ldots, $\mu_{-f_{2}}$.\bigskip

\noindent To conclude, the above steps guarantee that $\mu_{-f_n}$ can attain either any stable matching or any other almost stable matching $\mu_{-f_i}$, $i<n$. This proves Observation 1, and the desired implication $(ii) \Rightarrow (i)$ follows from our previous arguments.
\end{delayedproof}

\section*{Appendix C. Proof of Theorem~\ref{theorem_exponential}}
\addcontentsline{toc}{section}{Appendix C. Proof of Theorem~\ref{theorem_exponential}}
\label{appendix-c}

In this section, we prove Theorem~\ref{theorem_exponential}, stated in Section~\ref{section_exponential}, in two steps. First, we establish a result---analogous to the theorem---for a class of \textit{non-augmented markets} with a unique stable matching; the identified class is non-empty, as shown by Lemma 6 in the Online Appendix. Second, we employ markets from this class to augment arbitrary markets and prove the desired theorem.

\begin{proposition}
\label{proposition_class}
Focus on $\kappa$-random (best) dynamics. Fix any $\eta \in (0,1)$, $\epsilon \in (0,1]$, and $\kappa \geq 1$. Consider any sequence of markets of size $n \in \mathbb{N}$ with a unique stable matching such that for any market in the sequence,
\begin{inparaenum}[(i)]
\item \label{prop_c_1} any worker except one is preferred by at least an $\eta$-proportion of firms to their stable partners;
\item \label{prop_c_2} and similarly, any firm except one is preferred by at least an $\eta$-proportion of workers to their stable partners.
\end{inparaenum}
Then, any corresponding sequence of $\epsilon$-unstable matchings with probability $1-2^{-\Omega(n)}$ needs $2^{\Omega(n)}$ steps to regain stability.
\end{proposition}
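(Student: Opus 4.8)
The plan is to track the potential $\mathcal{S}(\lambda)$, the number of firms (equivalently, workers) matched to their unique stable partner $\mu$, and to show that the process it induces on $\{0,1,\ldots,n\}$ behaves like a random walk with a strong downward bias throughout the narrow band in which stability is nearly attained. First I would fix a constant $\zeta\in(0,\epsilon)$ small enough that $\rho:=(\eta-\zeta)/(\kappa\zeta)$ exceeds, say, $2$; this is possible because $\eta,\kappa$ are fixed and $\zeta$ may be taken arbitrarily small. An $\epsilon$-unstable matching has $\mathcal{S}(\lambda)\le(1-\epsilon)n<(1-\zeta)n$, while stability means $\mathcal{S}=n$, so every stabilizing trajectory must eventually climb the band $\mathcal{S}\in[(1-\zeta)n,\,n]$, a gap of width $\zeta n=\Omega(n)$. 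A preliminary computation shows $\Delta\mathcal{S}\in\{-1,0,+1\}$ per satisfied (best) blocking pair: writing $\Delta\mathcal{S}=([w=\mu(f)]-[\lambda(f)=\mu(f)])-[\lambda(w)=\mu(w)]$ for a satisfied pair $(f,w)$, the only way to reach $\Delta\mathcal{S}=-2$ would force both $f$ and $w$ to sit with their stable partners while still blocking $\lambda$, which would make $(f,w)$ block $\mu$ as well, contradicting stability of $\mu$.

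The heart of the drift estimate is a counting argument valid whenever the current matching $\lambda$ in the band satisfies condition $(\star)$, that some agent outside $\{\bar f,\bar w\}$ is unmatched. Say an idle worker $w\ne\bar w$ exists; by condition (i) at least $\eta n$ firms prefer $w$ to their stable partners, and since at most $\zeta n$ firms are mismatched inside the band, at least $(\eta-\zeta)n$ of them sit with $\mu(f)=\lambda(f)$ yet prefer the idle $w$. Each such $f$ forms a blocking pair with $w$ whose satisfaction evicts $f$ from its stable pair, a strictly destabilizing step with $\Delta\mathcal{S}=-1$; the idle-firm case is symmetric via condition (ii). Conversely, a stabilizing step must restore one of the at most $n-\mathcal{S}(\lambda)\le\zeta n$ currently broken stable pairs. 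Thus there are at least $(\eta-\zeta)n$ destabilizing and at most $\zeta n$ stabilizing options, so under $\kappa$-random dynamics a destabilizing step is at least $\rho$ times as likely as a stabilizing one. For best dynamics one counts agents instead of pairs: any of the $(\eta-\zeta)n$ stably matched firms above, once selected, proposes to its best blocking partner, and the same stability argument forces that move to be destabilizing.

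I would then convert this bias into a hitting-time bound. Passing to the embedded walk of $\mathcal{S}$-changing steps, a gambler's-ruin estimate gives that, started at the bottom of the band, the probability of climbing to $\mathcal{S}=n$ before falling back out of the band is at most $\rho^{-\zeta n+O(1)}=2^{-\Omega(n)}$. Running the dynamics for $T=2^{cn}$ steps with $c$ small, the number of excursions into the band is at most $T$, and by the strong Markov property and a union bound the chance that some excursion reaches $n$ is at most $T\cdot2^{-\Omega(n)}=2^{-\Omega(n)}$. This yields both the claimed success probability $1-2^{-\Omega(n)}$ for failing to stabilize and the $2^{\Omega(n)}$ lower bound on the stabilization time.

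The main obstacle is that the clean counting presupposes $(\star)$, and some band matchings violate it: perfect (fully matched) unstable matchings, and matchings whose only idle agents are exactly $\bar f$ and $\bar w$. At such states I cannot guarantee many destabilizing pairs, so the per-step bias could momentarily fail. My plan is to show these states are transient and cannot be exploited to engineer a cheap ascent: from a perfect unstable matching, satisfying any blocking pair frees exactly one firm and one worker, and unless these are precisely $\bar f$ and $\bar w$ the successor already satisfies $(\star)$. Making this rigorous — for instance by bounding the net change of $\mathcal{S}$ across a maximal run of exceptional states, or by exhibiting an exponential supermartingale $\rho^{\mathcal{S}(\lambda_t)}$ that is shown not to increase in expectation even across these states — is the delicate part, and is exactly where the ``more sophisticated arguments'' of the sketch enter; once the corrected process is seen to retain the downward bias, the biased-walk conclusion above applies verbatim.
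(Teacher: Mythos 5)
Your proposal follows the same route as the paper's proof: the potential $\mathcal{S}(\lambda)$, the band $\mathcal{S}(\lambda)\ge(1-\zeta)n$ that any stabilizing trajectory must traverse, the counting of destabilizing versus stabilizing (best) blocking pairs under the condition $(\star)$, and the comparison with a heavily biased random walk. Several of your details are correct and match the paper: the observation that $\Delta\mathcal{S}\ge -1$ (a drop of $2$ would make the satisfied pair block the stable matching $\mu$), the lower bound of roughly $(\eta-\zeta)n$ destabilizing pairs against at most $\zeta n$ stabilizing ones, and the extension to best dynamics by counting agents rather than pairs; your gambler's-ruin-plus-union-bound conversion is a legitimate substitute for the Chernoff-bound hitting-time lemma the paper invokes. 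However, the proposal stops exactly at the step that constitutes the technical core of the paper's argument: the treatment of band matchings violating $(\star)$. You correctly identify the obstacle and name two candidate repairs, but you execute neither, and your concluding claim that the biased-walk argument then ``applies verbatim'' is contingent on that unproven piece. As written, this is a genuine gap rather than a complete proof.

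For comparison, the paper closes this gap with three concrete ingredients. First, it reclassifies the pairs: any blocking pair whose successor violates $(\star)$ is counted as stabilizing${}^{\star}$, and a short case analysis (on which of $\bar f$, $\bar w$ are matched under $\lambda$) shows that from any $(\star)$-state at most \emph{two} blocking pairs lead into $(\star)$-violating states, so the stabilizing${}^{\star}$ count is $\zeta n+2$; symmetrically, among your destabilizing pairs all but possibly one (the firm whose stable partner is $\bar w$, or dually the worker whose stable partner is $\bar f$) land in a $(\star)$-state, giving a destabilizing${}^{\star}$ count of $(\eta-\zeta)n-1$. Second, it proves that from any $(\star)$-violating state the condition is restored within at most three steps: a perfect matching and a matching whose only idle agents are $\bar f,\bar w$ can only alternate with each other briefly, because once $\bar f$ and $\bar w$ are matched to one another no blocking pair can unmatch exactly that set again. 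Third, it absorbs the possible $\mathcal{S}$-gains during such excursions by a pessimistic relaxation of the comparison walk: every stabilizing${}^{\star}$ step is treated as a deterministic jump of $+4$ (the entering step plus at most three restoration steps, each worth at most $+1$), after which the walk remains in the regime where $(\star)$ holds and the bias computation is valid throughout. Your proposed fix of ``bounding the net change of $\mathcal{S}$ across a maximal run of exceptional states'' is precisely this device, so your outline is recoverable; what is missing is the case analysis that establishes the quantitative bounds (at most two entering pairs, at most three exceptional steps, hence the $+4$ jump) on which the whole comparison rests.
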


\begin{proof} \textit{Note:} The proof below works irrespective of whether we examine ``standard'' blocking pairs or best blocking pairs. Thereafter, we omit ``best'' if there is no confusion.

Consider the market of size $n$ in the given sequence and denote its unique stable matching by $\mu$. Suppose that firm $\bar{f}$ and worker $\bar{w}$ are the corresponding ``exclusion'' agents, see $(i)$ and $(ii)$ in the premise of the proposition. For any matching $\lambda$ in this market, let 
{\setlength\abovedisplayskip{5pt}
\setlength\belowdisplayskip{5pt}
\begin{equation*}
    \mathcal{S}(\lambda)= |i \in [n] \text{\; such that \;} \lambda(f_i)=\mu(f_i)|,
\end{equation*}}denote the number of firms (or, equivalently, workers) matched with their stable partners. 

Take an arbitrary starting $\epsilon$-unstable matching. To restore stability, at some point of the decentralized process, the market must be arbitrarily close to the stable matching, $\mathcal{S}(\mu)=n$. In fact, the dynamics need to enter the region $\mathcal{S}(\lambda) \geq (1-\zeta)n$ for some $\zeta \leq \epsilon$, and never leave it until convergence; this must hold for arbitrarily small $\zeta>0$. Roughly speaking, below we show that for sufficiently small $\zeta>0$---to be discussed later---``most'' matchings in the considered region have substantially more ``destabilizing'' blocking pairs than ``stabilizing'' ones. We then connect the dynamics to a random walk that is heavily biased in the ``destabilizing'' direction. This bias leads to exponentially long paths to stability.\bigskip

Consider an arbitrary matching $\lambda$ from the region of interest, i.e., $\mathcal{S}(\lambda) \geq (1-\zeta)n$. For the next few steps, suppose that $\lambda$ satisfies the following condition {\color{orange}($\star$)}: some agent $a \notin\{\bar{f}, \bar{w}\}$ is unmatched. This condition plays an important role; later, we show how to deal with matchings that do not satisfy the condition {\color{orange}($\star$)}.\bigskip

\noindent \textit{Types of Blocking Pairs.} For such a matching $\lambda$ satisfying {\color{orange}($\star$)}, we define two types of blocking pairs. A blocking pair is \textit{destabilizing\textsuperscript{$\star$}} if \begin{inparaenum}[(i)]
\item the corresponding new matching $\lambda'$, obtained from $\lambda$, also satisfies {\color{orange}($\star$)};
\item \textit{and, moreover,} the blocking pair itself is destabilizing, i.e., $\mathcal{S}(\lambda') = \mathcal{S}(\lambda)-1$.
\end{inparaenum}
In contrast, a blocking pair is \textit{stabilizing\textsuperscript{$\star$}} if \begin{inparaenum}[(i)]
\item the newly obtained matching $\lambda'$ does not satisfy {\color{orange}($\star$)};
\item \textit{or} the blocking pair itself is stabilizing, i.e., $\mathcal{S}(\lambda') = \mathcal{S}(\lambda)+1$.
\end{inparaenum} 

Note that for any other blocking pair that is neither destabilizing\textsuperscript{$\star$} nor stabilizing\textsuperscript{$\star$}, the newly obtained matching $\lambda'$ is ``similar'' to original $\lambda$. That is, $\mathcal{S}(\lambda') = \mathcal{S}(\lambda)$ and both new $\lambda'$ and original $\lambda$ satisfy {\color{orange}($\star$)}. Since we are interested in proving an exponential lower bound for the convergence time, we can essentially ``ignore'' such blocking pairs that only prolong the process by obtaining ``similar'' new $\lambda'$ from original $\lambda$.\bigskip

\noindent \textit{Number of (De)stabilizing\textsuperscript{$\star$} Blocking Pairs.}
We show that there are many more destabilizing\textsuperscript{$\star$} blocking pairs than stabilizing\textsuperscript{$\star$} ones, for sufficiently small $\zeta>0$. Indeed, by {\color{orange}($\star$)}, there is an unmatched agent $a \notin \{\bar{f}, \bar{w}\}$ that, by the premise of the proposition, is preferred by at least an $\eta$-proportion of agents from the other side to their stable partners. Since $\mathcal{S}(\lambda) \geq (1-\zeta)n$, i.e., at least a $(1-\zeta)$-proportion of agents are already matched with their stable partners,  there are at least $(\eta  -\zeta) n$ blocking pairs such that $\mathcal{S}(\lambda') = \mathcal{S}(\lambda)-1$. (\textit{Note:} The same holds true for best blocking pairs.) Among these blocking pairs, all pairs except possibly one pair necessarily unmatch some other $a' \notin \{\bar{f},\bar{w}\}$. Therefore, there are at least $(\eta-\zeta) n - 1$ destabilizing\textsuperscript{$\star$} blocking pairs.

As concern stabilizing\textsuperscript{$\star$} pairs, there are at most $\zeta n$ pairs such that $\mathcal{S}(\lambda') = \mathcal{S}(\lambda)+1$. Indeed, only firms that are not currently matched to their stable partners can be involved in such blocking pairs, and each such firm can take part in at most one such blocking pair. Furthermore, it is easy to check that there are at most two blocking pairs such that $\lambda'$ does not satisfy {\color{orange}($\star$)}; in that case,  either $\lambda'$ is perfect or only $\bar{f}$ and $\bar{w}$ are unmatched under $\lambda'$. To see this, consider the following three cases.
\begin{enumerate}
\item First, suppose that, under $\lambda$, agents $\bar{f}$ and $\bar{w}$ are either both matched or both unmatched. Then, $\lambda$ must have a pair of unmatched agents, $f \neq \bar{f}$ and $w \neq \bar{w}$. If $\lambda$ has several such pairs, $\lambda'$ must satisfy {\color{orange}($\star$)}. If $\lambda$ has only one such pair, $(f,w)$, then there is at most one blocking pair that makes $\lambda'$ violate {\color{orange}($\star$)}, i.e., the blocking pair $(f,w)$.

\item Second, suppose that $\bar{f}$ is unmatched and $\bar{w}$ is matched in $\lambda$. Therefore, $\lambda$ involves unmatched $w \neq \bar{w}$. If $\lambda$ has more unmatched pairs than just $(\bar{f},w)$, then $\lambda'$ must satisfy {\color{orange}($\star$)}. If $(\bar{f},w)$ is the only unmatched pair for $\lambda$, there are at most two blocking pairs that make $\lambda'$ violate {\color{orange}($\star$)}, the blocking pair $(\bar{f},w)$ that leads to perfect $\lambda'$ and the blocking pair $(\lambda(\bar{w}), w)$ that generates $(\bar{f}, \bar{w})$ as the only pair of unmatched agents.\item Third and finally, suppose that $\bar{f}$ is matched and $\bar{w}$ is unmatched in $\lambda$. As in the previous case, we can have at most two blocking pairs that make $\lambda'$ violate {\color{orange}($\star$)}.
\end{enumerate}
Thus, there are at most $\zeta n + 2$ stabilizing\textsuperscript{$\star$} blocking pairs.\bigskip

\noindent \textit{Likelihood of (De)stabilizing\textsuperscript{$\star$} Blocking Pairs.} The above analysis implies that for any unstable matching $\lambda$, satisfying {\color{orange}($\star$)} and belonging to the region $\mathcal{S}(\lambda) \geq (1-\zeta)n$, the ratio of the probability to satisfy a destabilizing\textsuperscript{$\star$} blocking pair to the probability to satisfy a stabilizing\textsuperscript{$\star$} blocking pair is bounded from below by 
{\setlength\abovedisplayskip{5pt}
\setlength\belowdisplayskip{5pt}
\begin{equation*}
\frac{(\eta-\zeta)n-1}{\kappa(\zeta n + 2)}  \geq \frac{\eta-\zeta}{2\kappa\zeta}\to \infty \text{ as } \zeta \to 0.   
\end{equation*}}

This, in turn, suggests that the probability to satisfy a destabilizing\textsuperscript{$\star$} blocking pair conditional on satisfying a destabilizing\textsuperscript{$\star$} or stabilizing\textsuperscript{$\star$} blocking pair---as already discussed, we essentially ``ignore'' all other blocking pairs---is bounded from below by
{\setlength\abovedisplayskip{5pt}
\setlength\belowdisplayskip{12pt}
\begin{equation*}
p_{d^\star} \equiv \frac{\eta-\zeta}{\eta+(2\kappa - 1)\zeta} \to 1 \text{ as } \zeta \to 0.
\end{equation*}}

In what follows, we discuss how to deal with matchings that do not satisfy the condition {\color{orange}($\star$)}. We then prove the desired result by establishing a close connection between our decentralized process and a biased random walk.\bigskip

\noindent \textit{Reinstatement of the Condition {\color{orange}($\star$)} for Matchings.} Consider a matching, say, $\lambda'$, that violates the condition {\color{orange}($\star$)}. Below, we demonstrate that after at most three additional steps---that is, after satisfying at most three blocking pairs in sequence---the desired condition {\color{orange}($\star$)} is restored for the eventually obtained matching;  assuming there is no path of length at most three from $\lambda'$ to the stable matching $\mu$. 
\begin{enumerate}
    \item If only $\bar{f}$ and $\bar{w}$ are unmatched in the given matching $\lambda'$, then in the next step, either $\lambda''$ satisfies {\color{orange}($\star$)}  or $\lambda''$ is perfect with $(\bar{f},\bar{w})$ being matched. In the latter case, after one more step, $\lambda'''$ must necessarily satisfy the condition {\color{orange}($\star$)}. Indeed, there is no way to unmatch only $\bar{f}$ and $\bar{w}$ again. 
    \item Otherwise, if $\lambda'$ is perfect, then in the next step, either $\lambda''$ satisfies {\color{orange}($\star$)} or only $\bar{f}$ and $\bar{w}$ are unmatched. In the latter case, by using similar arguments as before, we must obtain a matching satisfying {\color{orange}($\star$)} in at most two additional steps.
\end{enumerate}

\noindent \textit{Link to a Random Walk Heavily Based in the ``Destabilizing'' Direction.} We combine our previous observations to make several relaxations, still allowing us to obtain an exponential lower bound for the convergence time. Let the dynamics start at an unstable matching $\lambda$ with $\mathcal{S}(\lambda) = \ceil{(1-\zeta)n}$, for sufficiently small $\zeta>0$. 

Suppose that the starting matching $\lambda$ satisfies the condition {\color{orange}($\star$)}; if not, this condition must be restored after at most three steps. Then, if the dynamics select to satisfy a destabilizing\textsuperscript{$\star$} blocking pair, the newly obtained matching $\lambda'$ moves one stable pair away from stability, $\mathcal{S}(\lambda') = \mathcal{S}(\lambda)-1$, \textit{and} satisfies {\color{orange}($\star$)}. In contrast, when a stabilizing\textsuperscript{$\star$} blocking pair is satisfied, new $\lambda'$ either moves one stable pair towards stability, $\mathcal{S}(\lambda') = \mathcal{S}(\lambda)+1$, \textit{or} violates the condition {\color{orange}($\star$)}. Hereafter, we instead pessimistically assume---based on our analysis above---that after satisfying a stabilizing\textsuperscript{$\star$} blocking pair, new $\lambda'$ moves four stable pairs towards stability, $\mathcal{S}(\lambda') = \mathcal{S}(\lambda)+4$, \textit{and satisfies} the condition {\color{orange}($\star$)}. This relaxation effectively enables to preserve the condition {\color{orange}($\star$)} throughout the decentralized process.

To obtain the desired lower bound on the number of steps required for our dynamics to converge to the stable matching, $\mathcal{S}(\mu)=n$, we consider the biased random walk $\{S_i\}_{i\geq 0}$

\begin{center}
 \begin{tikzpicture}[bullet/.style={circle,inner sep=0.2ex},x=1.6cm,auto,bend angle=15]
\node[bullet, fill = black, label=below:{$k$}] (-3) at (-3,0) {};
\node[bullet, fill = black, label=below:{$k+1$}] (-2) at (-2,0) {};
\node[bullet, fill = black, label=below:{$k+2$}] (-1) at (-1,0) {};
\node[bullet, fill = black, label=below:{$k+3$}] (0) at (0,0) {};
\node[bullet, fill = black, label=below:{$k+4$}] (1) at (1,0) {};
\node[bullet, fill = black, label=below:{$k+5$}] (2) at (2,0) {};
\node[bullet, fill = black, label=below:{$n-1$}] (3) at (3,0) {};
\node[bullet, fill = black, label=below:{$\color{white}1\color{black}n\color{white}1$}] (4) at (4,0) {};
\node[bullet, fill = black, label=below:{$n+1$}] (5) at (5,0) {};
\node[bullet, fill = white, label=below:{}] (6) at (6,0) {};
  %(0,0) node[bullet,fill=red] (0) {}
  %(1,0) node[bullet,draw=blue] (a) {};
 %\foreach \Y [count=\X starting from -2] in {-2a,-a,0,a,2a} 
  %{\draw (\X,0.05) -- (\X,-0.1) node[below]{$\Y$};}
 \draw[-{Stealth[bend]}, color = black] (-3) to[bend left] node{w.p.\,1} (1);
 \draw[-{Stealth[bend]}, color = lightblue] (-2) to[bend left] node{} (2);
  \draw[-{Stealth[bend]}, dashed, color = lightblue] (-1) to[bend left] node{} (3);
    \draw[-{Stealth[bend]}, dashed, color = lightblue] (0) to[bend left] node{} (4);
      \draw[-{Stealth[bend]}, dashed, color = lightblue] (1) to[bend left] node{} (5);
       \draw[-{Stealth[bend]}, dashed, color = lightblue] (2) to[bend left] node{} (6);
 \draw[-{Stealth}, color = pear] (-2) -- (-3);
 \draw[-{Stealth}, color = pear] (-1) -- (-2);
  \draw[-{Stealth}, color = pear] (0) -- (-1);
    \draw[-{Stealth}, color = pear] (1) -- (0);
        \draw[-{Stealth}, color = pear] (2) -- (1);
        \draw[-{Stealth}, dashed, color = pear] (3) -- (2);
        \draw[-{Stealth}, color = pear] (4) -- (3);
        \draw[-{Stealth}, color = pear] (5) -- (4);
        \draw[-{Stealth}, dashed, color = pear] (6) -- (5);
 %\draw[-{Stealth[bend]},thick] (0) to[bend left] node{$1/2$} (-a);
\end{tikzpicture}  
\end{center}
on the set $\{k,k+1,k+2,\ldots\}$, with $k \equiv \ceil{(1-\zeta)n}$, defined as follows. This random walk starts at $S_0=k$. Furthermore, whenever $S_i=k$ for some $i\geq 0$, including starting $i=0$, the random walk deterministically moves four units to the right, $S_{i+1}=S_i+4$. Intuitively, if $\mathcal{S}(\lambda)$ ever drops below $k$ for our dynamics, we wait for it to return to that value and restart the random walk. If the ``starting'' matchings violate the condition {\color{orange}($\star$)}, this condition must be restored in at most, say, four steps---in fact, three steps as shown above.

When $S_i>k$ but $S_i<n$---this corresponds to our region of interest in which the condition {\color{orange}($\star$)} is preserved---the random walk moves one unit to the left, $S_{i+1}=S_i-1$, with probability
{\setlength\abovedisplayskip{5pt}
\setlength\belowdisplayskip{5pt}
\begin{equation*}
    {\color{pear}p_{d^\star}} = \frac{\eta-\zeta}{\eta+(2\kappa - 1)\zeta} \to 1 \text{ as } \zeta \to 0,
\end{equation*}} 

\noindent and four units to the right, $S_{i+1}=S_i+4$, with the remaining probability $ {\color{lightblue}p_{s^\star}} \equiv 1 - {\color{pear}p_{d^\star}} \to 0$ as $\zeta \to 0$. Transitions from the remaining nodes $\{n,n+1,n+2,\ldots\}$ are not relevant for our analysis and thus can be defined analogously, for instance.

To conclude, the number of steps required for the random walk $\{S_i\}_{i\geq 0}$ to first reach $S_i \geq n$ is a lower bound for the number of steps it takes for our dynamics to regain stability. The corresponding hitting time for such heavily biased random walks is known to be $2^{\Omega(n)}$ with probability $1-2^{-\Omega(n)}$; this result follows from standard Chernoff bound arguments, see Lemma 5 in the Online Appendix.
\end{proof}

\begin{delayedproof}{theorem_exponential}
\noindent Consider any sequence of \textit{original} markets of size $n \in \mathbb{N}$ and any $\delta>0$. For the purpose of augmentation, also take some arbitrary sequence of \textit{constructed} markets for which Proposition~\ref{proposition_class} holds; for instance, the sequence from the proof of Lemma 6 in the Online Appendix. Below, we use these constructed markets to generate the desired \textit{$\delta$-augmented} markets.

Specifically, for every given original market, of size $n$, add new agents with preferences over each other that replicate the constructed market of size $\floor{\delta n}$. In addition, let any agent---original or new---prefer original agents to new agents. It is straightforward to verify that the resulting market, of size $n+\floor{\delta n}$, is $\delta$-augmented. 

By the construction, the resulting sequence of $\delta$-augmented markets satisfies conditions similar to those in the statement of Proposition~\ref{proposition_class}: there exists $\eta \in (0,1)$ such that for every market in the sequence, any agent except one ``exclusion'' firm and one ``exclusion'' worker is preferred by at least an $\eta$-proportion of agents from the other side to their stable partners. 

The desired result for this sequence of $\delta$-augmented markets follows from the same arguments as in Proposition~\ref{proposition_class}. Although the market sizes in this sequence are $n+\floor{\delta n}$, not $n$, these sizes still increase linearly with $n \in \mathbb{N}$, so the analogous proof and conclusion hold.
\end{delayedproof}

\addcontentsline{toc}{section}{References}
\bibliographystyle{ecta}
\bibliography{ms}

\end{document}